\documentclass[num-refs]{wiley-networks}

\usepackage{graphicx,subfig}
\usepackage{epsfig}

\usepackage{ulem}
\usepackage{natbib}
\usepackage{amsmath, bm}
\usepackage{enumerate}
\usepackage{url}
\usepackage{float}
\usepackage{longtable}
\usepackage{booktabs}
\usepackage{lscape}
\usepackage{multirow}
\usepackage[ruled,vlined]{algorithm2e}
\usepackage[dvipsnames]{xcolor}
\usepackage{threeparttable}
\usepackage[breaklinks=true]{hyperref}
\usepackage{breakcites}

\usepackage{amsmath}
\usepackage{amssymb}
\usepackage{newtxmath} 
\usepackage{soul}

\usepackage{subeqnarray}
\usepackage{cases}
\usepackage{diagbox}
\usepackage{tikz}
\usetikzlibrary{arrows.meta,positioning}
\usetikzlibrary{calc}
\usetikzlibrary{decorations.pathreplacing}
\usepackage{subcaption}

\usepackage{makecell}

\usepackage{todonotes}

\newcommand{\REFONE}[1]{{\color{black}#1}}
\newcommand{\REFTWO}[1]{{\color{black}#1}}
\newcommand{\REFTHR}[1]{{\color{black}#1}}
\newcommand{\rev}[1]{{\color{black}#1}}
\newcommand{\REVII}[1]{{\color{black}#1}}  
\usepackage{bm}
\usepackage{booktabs}
\usepackage[T1]{fontenc}
\usepackage{titlesec}

\usepackage{xspace}

\newcommand{\prob}{\textsc{SCTP}\xspace}
\newcommand{\DBCG}{\textsc{DisBen-CG}\xspace}
\newcommand{\ABCG}{\textsc{AggBen-CG}\xspace}
\newcommand{\DBTP}{\textsc{DisBen-TP}\xspace}
\newcommand{\DET}{\textsc{Deter}\xspace}

\papertype{Original Article}
\paperfield{Networks} 

\title{Stochastic Corridor–Time Network Capacity Planning for Low-Altitude Airspace Systems}

\author[1]{Yipu Yao}
\author[1]{Li Ding}
\author[1]{Yanlu Zhao}

\affil[1]{Department of Management and Marketing, Durham University Business School, Durham University, Durham, DH1, 1SL, UK}

\corraddress{Yanlu Zhao}
\corremail{zhaoyanlu@durham.ac.uk}

\runningauthor{Yao et al.}

\begin{document}

\maketitle

\begin{abstract}
\REVII{Regulators in China, the United States, and the European Union now provide low-altitude airspace access as priced, time-windowed corridor authorizations, booked in advance and forfeited if unused. We ask how much capacity a UAV logistics planner should reserve on each corridor--time unit before demand is realized, to maximize expected profit net of reservation cost. Reserved capacity cannot be transferred across corridors or time windows and is consumed jointly along time-respecting paths, so reservations are coupled through the network in ways that models with exogenous airspace capacity cannot capture. We formulate a two-stage stochastic program whose recourse selects and routes accepted requests on a time-expanded network, prove its arc-based and path-packing forms equivalent, and solve it by Benders decomposition with column-generated subproblems. The decomposition operates on the LP relaxation, and all reported reservation and routing
decisions are recovered as integer plans. Computational experiments achieve single-digit LP-Benders gaps on moderate-sized networks and extend to much larger instances through a truncated-path approach. A Shenzhen case study shows reservations concentrating on structurally central corridors, with demand level and reservation price having more influence on the quantity of capacity reserved  than the selection of corridors.}

\keywords{Low-altitude airspace, Airspace capacity planning, Two-stage stochastic programming, Benders decomposition, Time-expanded network}
\end{abstract}


\section{Introduction}\label{sec:intro}

The low-altitude economy has emerged as a strategic component of next-generation transportation and digital infrastructure systems~\citep{guan2024exploration, huang2024low}. Advances in unmanned aerial vehicles (UAVs), autonomous navigation, and digital traffic management have enabled large-scale deployment of aerial services in urban environments, including logistics distribution, infrastructure inspection, emergency response, and public operations~\citep{mohsan2023unmanned, mohsan2022towards}. 
Traditionally treated as an open-access routing medium, low-altitude airspace is increasingly recognized as a scarce and structured resource~\citep{feng2025digital}.
As flight intensity increases, growing traffic density generates
measurable congestion risks and operational conflicts, making unmanaged access incompatible with safety and efficiency objectives~\citep{she2021efficiency,palmerius2024end}. Regulators are therefore redefining low-altitude airspace as managed infrastructure that must be systematically organized, allocated, and planned.

\REFTHR{This shift is already visible in industry practice. In Shenzhen, China, the Civil Aviation Administration of China (CAAC) Central-South Regional Bureau and the municipal Bureau of Transport have jointly opened more than $1{,}000$ approved low-altitude logistics routes, and operators such as Meituan UAV (urban food delivery), SF~Express and JD~Logistics (intracity parcel delivery), and EHang (passenger and cargo eVTOL trials) must apply for time-windowed corridor authorizations through the regional UAV traffic management platform before each flight~\citep{yang2026comprehensive}. Comparable corridor-and-window regimes are emerging internationally, including the FAA's Unmanned Aircraft System Traffic Management (UTM) framework in the United States, EASA's U-space service blocks in the European Union, and the JARUS SORA-based corridor pre-approvals adopted by Singapore and the United Arab Emirates ~\citep{henderson2025automation, nawaz2026towards, grote2022sharing}. 
Each grants access to a specific corridor during a specific time window, which makes the corridor--window pair the operationalized unit of accountability between operator and authority.
Three features of this regime drive everything that follows. Corridor authorizations are \textit{paid}, they are \textit{committed in advance of demand}, and unused reservations are typically \textit{non-refundable}. The operator therefore faces an irreversible investment in a resource whose value
is revealed only after demand materializes.}

The problem of allocating location--time resources in advance is not new to the operations literature. Airport slot allocation and air traffic flow management assign airport--time or sector--time capacity to flights under stochastic capacity realizations
\citep{bertsimas2016fairness,zografos2017increasing},
and stochastic service network design reserves capacity on a network in a first stage before demand-dependent flows are routed in a second
\citep{lium2009study}. Our setting differs from both in two respects. First, capacity is \textit{purchased endogenously} by the planner at a price, rather than rationed by a central authority across competing carriers, so the first-stage decision is an investment problem rather than an allocation problem. Second, the recourse combines \textit{admission control with path routing}: the planner selects which requests to serve as well as how to route them, whereas flow-management models retime a fixed and exogenous set of flights. The UAV planning literature, by contrast, optimizes routing, scheduling, or fleet deployment over airspace capacity that is specified exogenously \citep{chen2025hybrid,he2024distributed,li2022traffic,zhou2020resilient}. To our knowledge, the question of how much corridor--time capacity should be reserved in advance under stochastic demand, when that capacity is priced, non-transferable, and consumed jointly along routes, has received little attention.

\REVII{
Motivated by this gap, we study the Stochastic Corridor--Time Capacity Planning (\prob)  problem, in which a UAV logistics planner reserves corridor--time capacity in advance and, once uncertain delivery demand is realized, selects and routes the requests it will serve, with the objective of maximizing expected revenue from served requests net of reservation cost. The advance reservation decision involves a fundamental trade-off. Reserving more capacity mitigates congestion and service rejection during peak realizations but increases upfront cost, whereas reserving less reduces investment but forfeits profitable demand. What distinguishes {\prob} from a collection of independent newsvendor problems is that reserved capacity is neither freely reallocable nor independently valued: a corridor--time unit cannot be shifted to another corridor or another window once demand resolves, and its value depends on the availability of the other units that lie on the same time-respecting paths~\citep{van2025stochastic}.}

\REFTWO{This coupling manifests in three forms of congestion, and each shapes the structure of the recourse problem we must solve. \textit{Spatial} congestion arises because many requests compete for the same strategically located corridors connecting major demand centers, so a small number of corridor--time units act as persistent bottlenecks across scenarios. \textit{Temporal} congestion arises because demand is unevenly distributed over time and reserved capacity cannot be shifted across periods, so shortages during peak windows cannot be offset by slack elsewhere in the day. \textit{Propagation} congestion arises because each accepted request occupies capacity on several consecutive corridor--time units along its
route, so congestion at one location or time can propagate through downstream portions of the network and reduce the feasibility of other requests. 
These congestion mechanisms interact across space and time, creating strong dependencies
among requests and making the value of reserved capacity highly context dependent. As a result, effective planning requires jointly determining where and when capacity should be reserved while anticipating how future demand will compete for shared network resources.}





\begin{figure}[!h]
    \centering
    \includegraphics[width=\linewidth]{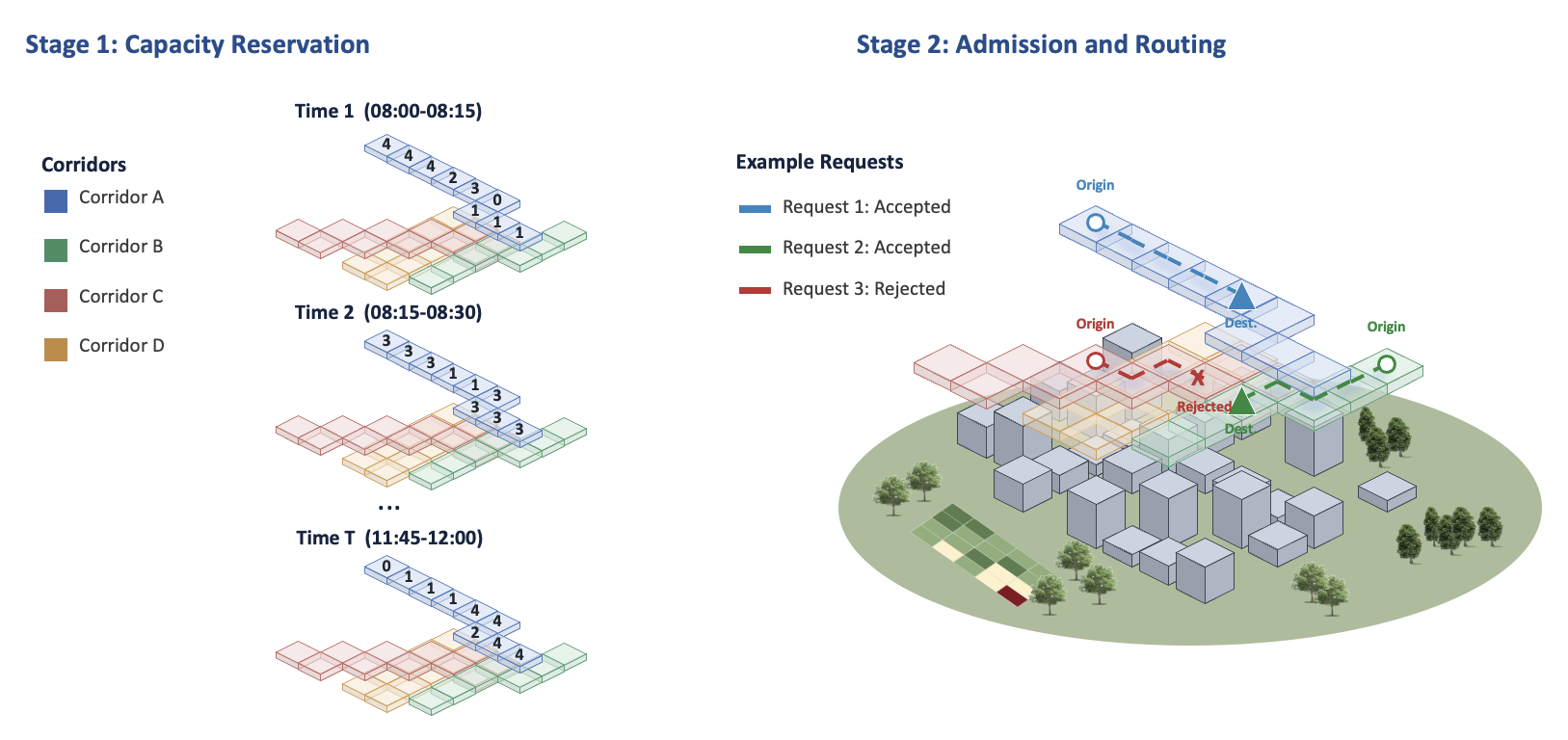}
    \caption{Two-stage corridor--time capacity planning under uncertain demand.}
    \label{fig:timeline}
\end{figure}

\REVII{These planning challenges naturally motivate a two-stage stochastic programming formulation~\citep{starita2020air}. 
Figure~\ref{fig:timeline} illustrates the proposed two-stage planning framework on a corridor network. The network consists of predefined flight corridors, each discretized into corridor--time capacity units over the planning horizon. In Stage 1, the planner reserves capacity for each corridor at each time interval before demand is realized, with the numbers indicating the reserved capacities. In Stage 2, realized requests are evaluated and routed through the network. The dashed lines show the trajectories of drone requests, which consume the corridor--time capacities reserved in Stage 1. Requests 1 (blue) and 2 (green) are accepted, whereas Request 3 (red) is rejected. 
We work on a time-expanded network rather than a static one because non-transferability across windows is a property of the resource, not of the schedule, and it cannot be represented by time-indexed capacities on a static graph without losing the time-respecting structure of feasible routes. We use a path-based rather than an arc-based recourse because the two are equivalent for this problem (Section~\ref{sec:model}) and the path-based form admits column generation, which avoids enumerating a path set that grows rapidly with the length of the planning horizon.
The model is deliberately positioned at the infrastructure-reservation layer, where the decision concerns airspace capacity rather than vehicle dispatch or fleet operations. Vehicle-level considerations therefore enter only through the construction of feasible candidate paths, while routing and demand acceptance constitute the operational recourse. The resulting formulation exhibits scenario-wise separability in the recourse, which we exploit through a disaggregated Benders decomposition
with column-generation subproblems.} This paper makes three contributions: 
\begin{itemize}
    \item \textbf{Relevance}. We introduce a stochastic corridor--time capacity planning framework in which low-altitude airspace capacity is modeled as a strategic, non-transferable spatio-temporal resource that the planner purchases rather than receives. The model integrates endogenous demand selection with path-based routing on a time-expanded network and shifts the focus from routing optimization under fixed infrastructure to infrastructure design under uncertainty.

    \item \textbf{Solution}. \REVII{We develop a stabilized Benders decomposition for stochastic corridor--time capacity planning in which each scenario recourse is solved exactly by column generation. The method attains single-digit LP-Benders gaps on instances with up to 25 nodes. For larger networks we report a truncated-path variant that extends the computational reach of the framework to 70 nodes together with valid multi-fidelity upper bounds. We quantify the price of this approximation explicitly rather than presenting it as exact: the truncated recourse systematically over-reserves capacity, and the large-instance results are therefore reported as a demonstration of computational reach rather than as recommended capacity plans.}
    \item \textbf{Insights}. \REVII{Extensive experiments provide several managerial insights into stochastic airspace capacity planning. First, ignoring demand uncertainty can forfeit $16.5$--$21.5\%$ of attainable profit, highlighting the value of incorporating stochastic demand information. Second, restricted routing flexibility is associated with systematic capacity over-reservation: planning and operating on a restricted path pool raises reservation expenditure by 26 to 67\% relative to the complete-flexibility regime, indicating that network flexibility substitutes for reserved capacity. Third, route overlap creates a fundamental trade-off: highly shared corridors increase the potential value of coordinated capacity planning because scarce resources can serve multiple requests, but they also intensify competition for limited corridor--time units. Finally, a Shenzhen case study shows that optimal reservations concentrate on structurally central corridors, while demand level and reservation price primarily determine the quantity of capacity reserved rather than the selection of corridors, yielding practical guidelines for congestion-sensitive airspace investment.}
\end{itemize}
The remainder of the paper is organized as follows. Section~\ref{sec:literature} reviews related work on airspace capacity planning and stochastic decomposition methods. Section~\ref{sec:model} formulates the stochastic corridor--time capacity planning problem and establishes the equivalence of its arc-based and path-packing formulations. Section~\ref{sec:analysis_algorithm} develops the proposed Benders decomposition and its column-generation subproblem solver. Section~\ref{sec:acceleration} introduces the acceleration strategies that make the decomposition converge at scale. Section~\ref{sec:numerical study} reports the computational study. Section~\ref{sec:case study} illustrates managerial implications in a geographically grounded setting. Section~\ref{sec:conclusion} concludes.

\medskip
\section{Literature Review}\label{sec:literature}


This work connects three streams of research. The first concerns the modelling of airspace as a structured and capacity-limited
infrastructure system, including strategic planning, routing, and
stochastic network design. The second concerns stochastic service network design and advance allocation of spatio-temporal resources. The third concerns decomposition methods for large-scale stochastic optimization, particularly Benders-type algorithms for network planning problems.

\paragraph{Structured Airspace Modelling and Capacity Planning.}
Recent research increasingly treats low-altitude airspace as a structured infrastructure rather than an open-access flight environment. 
Digital grid-based representations quantify operational risks and evaluate airspace capacity under urban constraints \citep{feng2025digital}. 
Three-dimensional traffic equilibrium models analyze congestion formation, system efficiency, and energy consumption in dense UAV operations \citep{she2021efficiency}. 
Analytical flow–density relationships for grid-like urban airspace networks demonstrate how local bottlenecks can destabilize overall system performance \citep{aarts2023capacity}. 
These studies establish that low-altitude airspace exhibits measurable and structural capacity limits. 
However, they focus primarily on operational evaluation and congestion analysis, rather than advance capacity reservation under uncertainty.

A related line of work studies the strategic airspace network design problem. 
Urban airspace networks can be constructed by selecting subsets of road corridors and projecting them into three-dimensional aerial structures subject to safety and technological constraints \citep{stuive2024airspace}. 
Strategic corridor layout problems have been formulated to balance connectivity and risk considerations in urban drone systems \citep{he2025air}. 
More broadly, multi-period stochastic network design models address evolving infrastructure in contexts such as humanitarian logistics and relief delivery \citep{liu2023multi}. 
In these models, network topology or facility configuration is optimized over time. 
By contrast, our setting assumes that the corridor network is given and focuses on determining how much capacity should be reserved on each corridor–time unit before demand is realized.

Another closely related stream studies routing, scheduling, and traffic control under fixed capacity constraints. 
Spatio-temporal UAV traffic scheduling models allocate routes and departure times to maintain separation and safety \citep{li2022traffic}. 
Hybrid centralized–decentralized frameworks coordinate airway assignment with three-dimensional trajectory design \citep{chen2025hybrid}. 
Distributed route planning introduces congestion pricing to coordinate competing origin–destination flows in dense airspace \citep{he2024distributed}. 
Queueing-based models analyze network stability and congestion propagation \citep{zhou2020resilient}, and fairness mechanisms for shared airspace have also been investigated \citep{merkert2021will,carraminana2025fair}. 
In these studies, airspace capacity is treated as exogenously given, and the analytical focus lies on operational allocation or control decisions.

This work also relates to stochastic routing and UAV logistics models that incorporate demand or operational uncertainty while keeping infrastructure fixed. In this regard,  two-stage stochastic routing with uncertain fuel consumption is studied in \citep{venkatachalam2018two}, and robust truck–UAV routing under uncertain demand is developed in \citep{faiz2024robust}. 
Demand-based UAV scheduling under airspace restrictions \citep{zhang2023research}, coordinated truck–drone routing \citep{di2021trucks}, dynamic vehicle routing with stochastic customers \citep{bent2004scenario}, and data-driven reserve allocation for drone delivery \citep{paul2025data} all optimize operational decisions within predetermined capacity limits. 
These models do not optimize advance corridor–time capacity reservation as a strategic first-stage decision.

Strategic capacity planning under uncertainty has a long tradition in air traffic flow management and infrastructure design. 
Deterministic and stochastic formulations allocate en-route capacity and manage sector delays under fixed sector constraints \citep{bertsimas1998air,lulli2007european}. 
Two-stage stochastic sector capacity planning integrates advance capacity ordering with downstream rerouting \citep{starita2020air}, and cross-border airspace capacity sharing under uncertainty has been examined \citep{kunnen2023cross}. 
Slot allocation models incorporate stochastic flying times via chance-constrained programming \citep{wang2023slot}. 
Outside aviation, stochastic facility location and distribution network design separate strategic infrastructure decisions from downstream allocation \citep{jayaraman2003simulated}, while multi-period capacity expansion models study long-term investment under demand uncertainty \citep{taghavi2016multi}. 
Vertiport location problems under behavioral uncertainty \citep{guo2025planning} and scenario blending techniques for infrastructure robustness \citep{kishore2026scenario} further extend this tradition.

\REFTHR{Across these studies, capacity is typically modeled as an aggregate arc- or facility-based resource. Conventional air traffic flow management (ATFM) provides a representative example, where sector capacity is defined as an aggregate hourly limit determined primarily by controller workload, and every flight traversing that sector during the corresponding period consumes one interchangeable unit of capacity~\citep{bertsimas1998air}. Our corridor-based planning problem differs from this aggregate capacity paradigm in two fundamental respects. First, capacity is indexed jointly by corridor and time period, so each reserved unit is tied to a specific corridor and time window and cannot be transferred across either dimension. This creates a much finer-grained representation of airspace resources than the aggregate sector-capacity models commonly adopted in conventional ATFM. Second, demand acceptance is endogenous because service requests may be rejected when reserved capacity is insufficient, and admission decisions are optimized jointly with route selection after demand realization. The operational problem couples capacity allocation and routing decisions much more tightly than conventional ATFM models, which primarily focus on balancing aggregate traffic flows~\citep{starita2020air}. Consequently, existing ATFM formulations cannot directly represent irreversible corridor–time reservation decisions.}

\REFONE{\paragraph{Positioning within Stochastic Service Network Design.}
The proposed framework is methodologically related to the literature on \textit{Stochastic Service Network Design} (SSND), which studies strategic transportation planning under demand uncertainty \citep{bai2014stochastic,moradi2025systematic}. In classical SSND, first-stage decisions determine a transportation service network before demand is observed, while second-stage decisions assign realized demand to the available services subject to capacity constraints \citep{crainic2014progressive,satici2026branch}. Similar to this setting, our framework separates strategic capacity planning from operational routing and evaluates first-stage decisions through stochastic recourse.

Despite these similarities, the proposed model differs from conventional SSND in several important respects. First, classical SSND typically determines which transportation services to operate or their operating frequencies \citep{crainic2000service}, whereas our framework reserves the amount of available capacity on each corridor over time. The resulting planning problem therefore focuses on allocating spatio-temporal airspace resources rather than selecting transportation services.
Second, demand fulfillment is endogenous. Rather than assuming that all realized demand must be transported, the planner jointly decides which requests to accept and how to route them after uncertainty is resolved. This introduces an explicit trade-off between reservation cost and operational revenue that is absent from most SSND formulations \citep{hewitt2022scheduled,wang2019stochastic}.
Third, airspace capacity is inherently indexed by both location and time. Capacity reserved for one corridor and time period cannot be transferred to another, creating a finer-grained resource allocation problem than the aggregate vehicle or service capacities commonly considered in freight transportation networks. Consequently, uncertainty primarily affects how limited corridor--time resources are distributed across competing requests.
Overall, the proposed model adapts ideas from the SSND paradigm by extending it from transportation service design to strategic airspace capacity reservation under demand uncertainty, while jointly optimizing demand acceptance and time-dependent routing.}

\paragraph{Decomposition Methods for Stochastic Programs.}

The second stream of literature concerns solution methodologies for large-scale stochastic programs. 
Benders decomposition is a classical approach for solving two-stage stochastic optimization problems by separating first-stage design decisions from scenario-specific operational subproblems \citep{rahmaniani2017benders}. 
Generalized Benders methods and branch-and-cut integrations enhance performance in mixed-integer settings \citep{lin2021branch}, and comprehensive surveys document classical, combinatorial, and generalized variants across network applications \citep{ibrahim2019bender}.

Multi-cut variants generate scenario-specific cuts instead of aggregating them into a single cut, often improving convergence in high-dimensional recourse problems \citep{you2013multicut,crainic2021partial}. 
Adaptive cut selection and stabilization techniques further enhance computational performance \citep{ramirez2023benders}. 
Extensions such as nested Benders and stochastic dual dynamic programming address multistage problems \citep{rebennack2016combining}, and time-consistent decomposition schemes have been proposed for distributionally robust stochastic optimization \citep{yu2021time}. 
In transportation and service network design contexts, branch-and-Benders-cut algorithms handle discrete first-stage design variables \citep{satici2026branch}, while progressive hedging provides an alternative scenario decomposition framework \citep{rajan2022routing}. 
These developments demonstrate the flexibility of decomposition-based methods in stochastic network applications.

While decomposition algorithms are well developed, most studies emphasize acceleration strategies, stabilization, or integration with branching schemes.
To the best of our knowledge, relatively limited attention has been given to how structural congestion geometry affects the convergence behavior of Benders decomposition in path-based spatio-temporal networks.
In our setting, route overlap determines how strongly different demand scenarios compete for the same corridor--time resources.
We empirically examine how this structural property affects the convergence performance of the proposed Benders decomposition in stochastic corridor–time capacity planning.

\medskip
\section{Model Development}\label{sec:model}
\REVII{This section formalizes the stochastic corridor--time capacity planning problem. Section~\ref{subsec:problem_description} describes the planning setting and the two-stage decision structure. Section~\ref{subsec:arc_based} formulates the problem as an arc-based two-stage stochastic program on a time-expanded network. Section~\ref{subsec:path_packing} reformulates the second-stage recourse as a path-packing problem, which is the representation used throughout the solution methodology. The arc-based formulation provides the most direct mathematical representation of the planning problem, whereas the equivalent path-packing formulation is introduced subsequently because it enables column generation and the proposed decomposition algorithm.}

\subsection{Problem Description}
\label{subsec:problem_description}

We consider a centralized UAV logistics planner operating within regulated and capacity-constrained low-altitude airspace. The planner is responsible for reserving airspace capacity prior to operations and subsequently allocating this capacity to delivery requests with specified origins and destinations. Its objective is to maximize expected profit under stochastic demand, defined as the total revenue obtained from successfully completed delivery tasks minus the cost incurred for reserving airspace capacity.

Airspace is organized as a directed corridor network over discrete time periods. 
Let $\mathcal{L}$ denote the set of airspace corridors and $\mathcal{T}$ the set of time periods.  Each corridor--time pair $(\ell,t) \in \mathcal{L} \times \mathcal{T}$ provides a limited number of flight slots that must be reserved before operations begin. These flight slots constitute corridor- and time-specific capacity units and incur reservation costs. Capacity reserved for one corridor at a given time period cannot be transferred to other corridors or other time periods. 
\REVII{The planning process follows a two-stage sequence. In the first stage, prior to demand realization, the planner reserves $x_{\ell t}$ capacity units on each corridor--time pair; because capacity is corridor- and time-specific and non-transferable, these reservations determine the capacity limits for all subsequent routing decisions. In the second stage, demand realizes. The underlying request distribution is not assumed to be fully known; we approximate it by a finite set of representative demand scenarios, each specifying a realized set of delivery requests. After observing the realized scenario, the planner selects which requests to accept and routes the accepted ones along feasible time-respecting paths within the reserved capacities.}
To illustrate the spatio-temporal nature of capacity consumption, consider the following example.

\begin{example}[Example: Spatio-temporal capacity consumption]
Consider the network shown in Figure~\ref{fig:toy_network_time}, where nodes represent take-off and landing locations and directed arcs represent air corridors. Time is discretized into periods $\{t,t+1,t+2,t+3,\ldots\}$ , and traversing any corridor requires one time period.
Suppose a delivery request is released at origin node $1$ at time $t$ and destination at node $6$. One feasible route is the path $1 \rightarrow 2 \rightarrow 5 \rightarrow 6$. If the request departs at time $t$, it occupies corridor $\ell_{12}$ during period $t$, corridor $\ell_{25}$ during period $t+1$, and corridor $\ell_{56}$ during period $t+2$.
Selecting this route therefore consumes one unit of capacity from each of the corridor–time pairs $(\ell_{12},t)$, $(\ell_{25},t+1)$, and $(\ell_{56},t+2)$.
Feasibility of this path requires that capacity be reserved for all three corridor–time units. If capacity is unavailable for any one of them, the entire route becomes infeasible. This example highlights two key structural features. First, capacity consumption is sequential and spans multiple time periods. Second, each accepted request requires jointly corridor–time resources rather than a single aggregate capacity unit. 
As a result, first-stage reservation decisions must anticipate the joint capacity requirements of feasible paths, while second-stage allocation decisions must respect the joint availability of corridor–time resources. 
$\hfill \blacksquare$
\end{example}

\begin{figure}[t]
\centering
\begin{tikzpicture}[
    >=Stealth,
    node/.style={circle, draw=black, line width=0.9pt, minimum size=7mm, inner sep=0pt},
    arc/.style={-Stealth, line width=1.0pt},
    pathA/.style={-Stealth, line width=1.0pt, red},
    timeline/.style={-Stealth, line width=1.2pt, black},
    proj/.style={dashed, line width=1.0pt, gray}
]

\node[node] (v1) at (0,2) {1};
\node[node] (v2) at (4,2) {2};
\node[node] (v3) at (8,2) {3};

\node[node] (v4) at (0,0) {4};
\node[node] (v5) at (4,0) {5};
\node[node] (v6) at (8,0) {6};

\draw[arc] (v2) -- (v3);
\draw[arc] (v4) -- (v5);
\draw[arc] (v1) -- (v4);
\draw[arc] (v3) -- (v6);
\draw[arc] (v1) -- (v5);
\draw[arc] (v4) -- (v2);
\draw[arc] (v2) -- (v6);
\draw[arc] (v5) -- (v3);

\draw[pathA] (v1) -- (v2);
\draw[pathA] (v2) -- (v5);
\draw[pathA] (v5) -- (v6);

\draw[timeline] (-1,-1.2) -- (9,-1.2);

\node at (0,-1.5) {$t$};
\node at (3,-1.5) {$t+1$};
\node at (6,-1.5) {$t+2$};
\node at (9,-1.5) {$t+3$};

\draw[proj] ($(v1)!0.5!(v2)$) -- (0,-1.2);

\draw[proj] ($(v2)!0.5!(v5)$) -- (3,-1.2);

\draw[proj] ($(v5)!0.5!(v6)$) -- (6,-1.2);

\node[red] at (2,2.4) {\small $(\ell_{12},t)$};
\node[red] at (4.7,1.1) {\small $(\ell_{25},t+1)$};
\node[red] at (6,0.3) {\small $(\ell_{56},t+2)$};

\end{tikzpicture}
\caption{Illustration of sequential corridor–time capacity consumption along a feasible path.}
\label{fig:toy_network_time}
\end{figure}

\subsection{Arc-based Two-Stage Stochastic Model}
\label{subsec:arc_based}

\begin{table}[htbp]
\centering
\caption{Notation for the two-stage stochastic corridor capacity model.}
\label{tab:notation_full}
\begin{tabular}{p{3.2cm} p{10cm}}
\toprule
\multicolumn{2}{l}{\textbf{Sets and Indices}} \\
\midrule
$\mathcal{V}$ & Set of physical nodes (take-off and landing sites). \\
$\mathcal{L}$ & Set of directed corridors; each $\ell=(i,j)$ connects node $i$ to node $j$. \\
$\mathcal{T}=\{1,\dots,T\}$ & Set of discrete time periods. \\
$\Omega$ & Set of demand scenarios. \\
$K^\omega$ & Set of realized delivery requests under scenario $\omega\in\Omega$. \\
$\mathcal{P}_k$ & Set of feasible time-expanded paths for request $k$. \\
$k$ & Index for requests. \\
$\ell$ & Index for corridors. \\
$t$ & Index for time periods. \\
$p$ & Index for feasible paths. \\
\midrule
\multicolumn{2}{l}{\textbf{Parameters}} \\
\midrule
$d_\ell$ & Physical length of corridor $\ell$. \\
$v$ & UAV cruising speed. \\
$\Delta$ & Length of one time period. \\
$\tau_\ell=\left\lceil d_\ell/(v\Delta)\right\rceil$ & Travel time (in periods) on corridor $\ell$. \\
$o_k, d_k$ & Origin and destination nodes of request $k$. \\
$t_k$ & Release time of request $k$. \\
$r_k$ & Revenue obtained if request $k$ is served. \\
$\pi_\omega$ & Probability of scenario $\omega$. \\
$c_{\ell t}$ & Reservation cost of one unit of capacity on corridor--time pair $(\ell,t)$. \\
$\bar x_{\ell t}$ & Maximum available capacity on corridor--time pair $(\ell,t)$. \\
$b_{\ell,t_0}^{\ell t}$ & Indicator equal to 1 if departure on corridor $\ell$ at time $t_0$ occupies $(\ell,t)$. \\
$a_{kp}^{\ell t}$ & Indicator equal to 1 if path $p\in\mathcal{P}_k$ occupies corridor--time pair $(\ell,t)$. \\
\midrule
\multicolumn{2}{l}{\textbf{Decision Variables}} \\
\midrule
$x_{\ell t}\in\mathbb Z_+$ & First-stage capacity reserved on corridor--time pair $(\ell,t)$. \\
$f_{k,\ell,t}^\omega\in\{0,1\}$ & 1 if request $k$ departs on corridor $\ell$ at time $t$ under scenario $\omega$. \\
$z_k^\omega\in\{0,1\}$ & 1 if request $k$ is served under scenario $\omega$. \\
$h_{k t}^\omega\in\{0,1\}$ & 1 if request $k$ arrives at its destination at time $t$ under scenario $\omega$. \\
$y_{kp}^\omega\in\{0,1\}$ & 1 if request $k$ is assigned to path $p\in\mathcal{P}_k$ under scenario $\omega$. \\
\bottomrule
\end{tabular}
\end{table}

We first present the original arc-based model because it more directly represents the planning problem.
We formulate the corridor–time capacity planning problem as a two-stage stochastic program on a time-expanded network. In the first stage, the planner determines corridor–time capacity reservations. After demand is realized, the second stage selects and routes delivery requests subject to the reserved capacities. Table~\ref{tab:notation_full} summarizes the notation used in the model.

Let $G=(\mathcal{V},\mathcal{L})$ denote the directed corridor network, where each corridor $\ell = (i,j) \in\mathcal{L}$ connects node $i$ to node $j$. Let $\mathcal{T}=\{1,\dots,T\}$ denote the set of discrete time periods with period length $\Delta$.
For each corridor $\ell$, let $d_\ell$ denote its physical length and $v$ the cruising speed of UAVs. The traversal time (in periods) on corridor $\ell$ is
\begin{align}\label{eq:tau_l}
\tau_\ell \;=\; \left\lceil \frac{d_\ell}{v\Delta}\right\rceil \in \mathbb Z_+ .
\end{align}
To capture temporal dynamics, each physical node $i\in \mathcal{V}$ is replicated at each time period $t\in\mathcal{T}$, forming time-stamped nodes $(i,t)$.
For each corridor  $\ell=(i,j)$ and departure time $t$, a flight arc connects $(i,t)$ to $(j, t + \tau_{\ell})$, provided $t + \tau_{\ell} \leq T$. 

\textbf{First-stage capacity decisions.}
In the first stage, the planner selects integer capacity units $x_{\ell t}$ for each corridor-time pair $(\ell,t)\in\mathcal{L}\times\mathcal{T}$.
These decisions determine the total spatio-temporal airspace capacity available during operations. Capacity is bounded by physical limits $\bar{x}_{\ell t}$ and incurs reservation cost $c_{\ell t}$ per unit. 
\REVII{A delivery request departing on corridor $\ell$ at time $t_0$ occupies the corridor--time pairs $(\ell,t_0),(\ell,t_0+1),\dots,(\ell,t_0+\tau_\ell-1)$. To represent this occupancy, define the indicator}
\begin{align}
b_{\ell,t_0}^{\ell t}=
\begin{cases}
1, & \text{if} \quad t\in\{t_0, \dots, t_0 + \tau_\ell-1\},\\
0, & \text{otherwise},
\end{cases}
\quad \forall \ell\in\mathcal{L},\ \forall t_0\in\mathcal{T},\ \forall t\in\mathcal{T}.
\label{eq:b_indicator_l}
\end{align}

\textbf{Second-stage route decisions.}
Let $\Omega$ denote a finite set of demand scenarios, each occurring with probability $\pi_\omega$.
Under scenario $\omega\in\Omega$, a set of delivery requests $K^\omega$ is realized.
Each request $k\in K^\omega$ is characterized by origin $o_k\in \mathcal{V}$, destination $d_k\in \mathcal{V}$, release time $t_k\in\mathcal{T}$, and revenue $r_k$ if served.
The planner chooses integer capacity units $x_{\ell t}$ on each corridor--time pair:
\begin{align}
\max_{\boldsymbol{x}}\quad
& \sum_{\omega\in\Omega} \pi_\omega\, Q(\boldsymbol{x};\omega)
  \;-\; \sum_{(\ell,t)\in\mathcal{L}\times\mathcal{T}} c_{\ell t}\, x_{\ell t}
\label{eq:arc_stage1_obj_l}\\
\text{s.t.}\quad
& x_{\ell t} \in \mathbb Z_+,
\qquad &&\forall (\ell,t)\in\mathcal{L}\times\mathcal{T},
\label{eq:arc_x_int_l}\\
& 0 \le x_{\ell t} \le \bar x_{\ell t},
\qquad &&\forall (\ell,t)\in\mathcal{L}\times\mathcal{T}.
\label{eq:arc_x_ub_l}
\end{align}
Given the reserved capacity $\boldsymbol{x}$ and the realized scenario $\omega$, the planner then determines which requests to accept and how to route them through the time-expanded network.
Let $f_{k,\ell,t_0}^\omega\in\{0,1\}$ indicate whether request $k$ departs on corridor $\ell$ at time $t_0$ under scenario $\omega$. 
Let $z_k^\omega\in\{0,1\}$ indicate whether request $k$ is accepted.
and let $h_{k t}^\omega\in\{0,1\}$ indicate whether request $k$ arrives at its destination $d_k$ at time $t$.
The second-stage problem is given by
\begin{align}
Q(\boldsymbol{x};\omega)=\max_{\boldsymbol{f},\boldsymbol{z}, \boldsymbol{h}}\quad
& \sum_{k\in K^\omega} r_k\, z_k^\omega
\label{eq:arc_stage2_obj_l}\\
\text{s.t.}\quad
& \sum_{k\in K^\omega}\sum_{t_0\in\mathcal{T}}
  b_{\ell,t_0}^{\ell t}\, f_{k,\ell,t_0}^\omega
\;\le\; x_{\ell t},
\qquad &&\forall (\ell,t)\in\mathcal{L}\times\mathcal{T},
\label{eq:arc_cap_l}\\
& \sum_{\ell:\,\ell=(o_k,j)} f_{k,\ell,t_k}^\omega
\;-\; \sum_{\ell:\,\ell=(i,o_k)} f_{k,\ell,t_k-\tau_\ell}^\omega
\;=\; z_k^\omega,
\qquad &&\forall k\in K^\omega,
\label{eq:arc_source_l}\\
& \sum_{\ell:\,\ell=(i,j)} f_{k,\ell,t}^\omega
\;-\; \sum_{\ell:\,\ell=(j,i)} f_{k,\ell,t-\tau_\ell}^\omega
\;=\; 0,
\qquad &&\forall k\in K^\omega,\ \forall i\in \mathcal{V}\setminus\{o_k,d_k\},\ \forall t,
\label{eq:arc_flow_l}\\
& \sum_{t\in\mathcal{T}} h_{k t}^\omega \;=\; z_k^\omega,
\qquad &&\forall k\in K^\omega,
\label{eq:arc_arrive_once_l}\\
& \sum_{\ell:\,\ell=(i,d_k)} f_{k,\ell,t-\tau_\ell}^\omega
\;-\; \sum_{\ell:\,\ell=(d_k,j)} f_{k,\ell,t}^\omega
\;=\; h_{k t}^\omega,
\qquad &&\forall k\in K^\omega,\ \forall t\in\mathcal{T},
\label{eq:arc_sink_l}\\
& f_{k,\ell,t}^\omega \in \{0,1\},
\qquad &&\forall k,\ell,t, \\
& z_k^\omega \in \{0,1\},\qquad
h_{k t}^\omega \in \{0,1\}.
\end{align}
The objective \eqref{eq:arc_stage2_obj_l} maximizes total revenue from accepted requests.
Constraint~\eqref{eq:arc_cap_l} enforces corridor--time capacity limits.
If request $k$ departs on corridor $\ell$ at time $t_0$, it occupies the corresponding corridor--time resources for $\tau_\ell$ consecutive periods, as captured by the binary indicator $b_{\ell,t_0}^{\ell t}$.
The total capacity consumption across all routed requests cannot exceed the reserved capacity $x_{\ell t}$.
Constraint~\eqref{eq:arc_source_l} ensures that a served request originates at its release node and time.
Constraint~\eqref{eq:arc_flow_l} maintains flow balance at intermediate nodes and times.
Constraint~\eqref{eq:arc_arrive_once_l} and \eqref{eq:arc_sink_l} guarantee that each accepted request arrives at its destination at exactly one time period.
Binary variables $z_k^\omega$ allow requests to be rejected, ensuring feasibility for any first-stage capacity decision.

\subsection{Time-Expanded Path-Packing Reformulation}
\label{subsec:path_packing}

While the arc-based formulation provides a transparent representation of flow conservation in the time-expanded network, its size grows with the number of time-stamped nodes and arcs. In particular, the number of binary flow variables scales with \REFTHR{$|K^{\omega}|\times|\mathcal{L}|\times|\mathcal{T}|$}, which may become computationally burdensome for fine temporal discretizations. To obtain a more compact representation and to prepare for decomposition analysis, we derive an equivalent path-based reformulation that aggregates feasible unit flows into complete origin–destination routes.

\REVII{The reformulation is built on an explicit notion of a \textit{feasible time-expanded path}. Recall from Section~\ref{subsec:arc_based} that the time-expanded network consists of \textit{flight arcs} from $(i,t)$ to $(j,t+\tau_\ell)$ for each corridor $\ell=(i,j)\in\mathcal{L}$ and each departure time $t$ with $t+\tau_\ell\le T$. Every arc strictly increases the time index, so the time-expanded network is a finite directed acyclic graph. }

\REVII{\begin{definition}[Definition 1 (Feasible time-expanded path)]
A \textit{feasible time-expanded path} for request $k\in K^\omega$ is a directed path
$p=\big((v_0,t_0),(v_1,t_1),\dots,(v_m,t_m)\big)$
in the time-expanded network such that
(i)~$(v_0,t_0)=(o_k,t_k)$, i.e., the path starts at the origin of request $k$ exactly at its release time;
(ii)~$v_m=d_k$ with $t_m\le T$, i.e., the path ends at the destination of $k$ at \textit{some} time within the planning horizon; and
(iii)~every consecutive pair of nodes is connected either by a flight arc --- the request traverses corridor $\ell=(v_s,v_{s+1})$, departing at time $t_s$ and arriving at $t_{s+1}=t_s+\tau_\ell$.
The set of all feasible time-expanded paths for request $k$ is denoted $\mathcal{P}_k$.
\end{definition}}

\REVII{A path in $\mathcal{P}_k$ represents a complete \textit{routing schedule}, including the corridor sequence and departure times. Thus, schedules following the same physical route but with different departure times are treated as distinct paths because they occupy different corridor–time units. The destination is specified as any $(d_k,t)$ since the model imposes no delivery deadline beyond the planning horizon; additional constraints such as delivery deadlines, endurance, payload, or detour limits can be incorporated by excluding infeasible paths from $\mathcal{P}_k$. Although $\mathcal{P}_k$ is finite because the time-expanded network is acyclic and finite, its size grows combinatorially with the network and planning horizon, motivating the use of column generation to generate paths on demand rather than enumerate them explicitly.}

\REVII{By construction, every $p\in\mathcal{P}_k$ satisfies the flow-conservation constraints \eqref{eq:arc_source_l}--\eqref{eq:arc_sink_l} of the arc-based formulation and represents a feasible routing schedule for request $k$; Proposition~\ref{prop:equivalence} below establishes the converse, namely that every arc-based routing of an accepted request traces out exactly one such path.}
For a path $p\in\mathcal{P}_k$,
define the corridor--time occupancy indicator
\begin{align}
a_{kp}^{\ell t}
\;:=\;\sum_{t_0\in\mathcal{T}:\ p\ \text{departs on}\ \ell\ \text{at}\ t_0} b_{\ell,t_0}^{\ell t}
\;=\;
\begin{cases}
1 & \text{if path $p$ occupies corridor--time pair $(\ell,t)$},\\
0 & \text{otherwise}.
\end{cases}
\label{eq:a_indicator}
\end{align}
\REVII{The first expression writes $a_{kp}^{\ell t}$ in terms of the occupancy indicator $b_{\ell,t_0}^{\ell t}$ of the arc-based model~\eqref{eq:b_indicator_l}: if $p$ departs on corridor $\ell$ at time $t_0$, then $a_{kp}^{\ell t}=1$ for every $t\in\{t_0,\dots,t_0+\tau_\ell-1\}$, so the multi-period traversal of each corridor is aggregated into the occupancy footprint of the path. The indicator therefore exactly represents the capacity consumption implied by the arc-based model.}
Let $y_{kp}^\omega\in\{0,1\}$ indicate whether request $k$
is assigned to path $p$ under scenario $\omega$.
The second-stage problem can then be written as the following
capacitated path-packing problem:
\begin{align}
G(\boldsymbol{x};\omega)=\max_{\boldsymbol{y}}\quad
& \sum_{k\in K^\omega} \sum_{p\in \mathcal{P}_k} r_k\, y_{kp}^\omega
\label{eq:stage2_obj_path}\\
\text{s.t.}\quad
& \sum_{p\in \mathcal{P}_k} y_{kp}^\omega \le 1,
\qquad &&\forall k\in K^\omega,
\label{eq:link_eq}\\
& \sum_{k\in K^\omega} \sum_{p\in \mathcal{P}_k}
  a_{kp}^{\ell t} y_{kp}^\omega
  \le x_{\ell t},
\qquad &&\forall (\ell,t)\in\mathcal{L}\times\mathcal{T},
\label{eq:cap}\\
& y_{kp}^\omega \in \{0,1\},
\qquad &&\forall k\in K^\omega,\ p\in\mathcal{P}_k.
\end{align}
Constraint~\eqref{eq:link_eq} ensures that each request is assigned to at most one feasible path. Constraint~\eqref{eq:cap} enforces the same corridor--time capacity limits as in the arc-based formulation. Because requests may remain unassigned, the recourse problem is feasible for any first-stage decision $x$.

\REFTHR{The path-based formulation is equivalent to the arc-based model: the first-stage problem \eqref{eq:arc_stage1_obj_l}--\eqref{eq:arc_x_ub_l} is unchanged, and the reformulation preserves the second-stage feasible set and optimal value while eliminating the explicit flow-conservation constraints. We formalize this equivalence in Proposition~\ref{prop:equivalence} and Corollary~\ref{cor:lp_equivalence}.}

\REFTHR{\begin{proposition}[Equivalence of the arc-based and path-packing recourse]\label{prop:equivalence}
Fix a scenario $\omega\in\Omega$ and a first-stage capacity vector $\boldsymbol{x}\in\mathbb{Z}_+^{|\mathcal{L}|\times|\mathcal{T}|}$. Let the time-expanded network be constructed as in Section~\ref{subsec:arc_based}, and let $\mathcal{P}_k$ be the full path set of Definition~1. Denote by $\mathcal{F}_A(\boldsymbol{x};\omega)$ the feasible set of the arc-based recourse \eqref{eq:arc_stage2_obj_l}--\eqref{eq:arc_sink_l}, including the binary requirements, and by $\mathcal{F}_P(\boldsymbol{x};\omega)$ that of the path-packing recourse \eqref{eq:stage2_obj_path}--\eqref{eq:cap}. Then there exists an injective, objective-value-preserving mapping $\Psi:\mathcal{F}_P(\boldsymbol{x};\omega)\to\mathcal{F}_A(\boldsymbol{x};\omega)$. 
In particular, $Q(\boldsymbol{x};\omega)=G(\boldsymbol{x};\omega)$, and every optimal solution of the path-packing recourse corresponds, via $\Psi$, to an optimal solution of the arc-based recourse.
\end{proposition}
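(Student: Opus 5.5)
The plan is to prove the two inequalities $G(\boldsymbol{x};\omega)\le Q(\boldsymbol{x};\omega)$ and $Q(\boldsymbol{x};\omega)\le G(\boldsymbol{x};\omega)$ separately. The first follows from building the mapping $\Psi$ explicitly. The second follows from a path-extraction (flow-decomposition) argument in the acyclic time-expanded network.

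\textbf{Construction of $\Psi$.} Take $\boldsymbol{y}\in\mathcal{F}_P(\boldsymbol{x};\omega)$. By \eqref{eq:link_eq}, each request $k$ uses either no path or exactly one path $p_k=((v_0,t_0),\dots,(v_m,t_m))\in\mathcal{P}_k$. Define $\Psi(\boldsymbol{y})=(\boldsymbol{f},\boldsymbol{z},\boldsymbol{h})$ by the following rules:
\begin{itemize}
\item $z_k^\omega=\sum_p y_{kp}^\omega$;
\item $f^\omega_{k,\ell,t}=1$ exactly when $p_k$ uses the flight arc of $\ell=(v_s,v_{s+1})$ departing at $t=t_s$;
\item $h^\omega_{kt}=1$ exactly when $t=t_m$.
\end{itemize}
Then I check each constraint of the arc-based recourse in turn.
\begin{itemize}
\item Constraint \eqref{eq:arc_source_l} holds because $p_k$ leaves $(o_k,t_k)$ once. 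It cannot re-enter that time-stamped node, since every arc strictly increases time.
\item Constraint \eqref{eq:arc_flow_l} holds because each interior node of a path has one entering and one leaving arc, while nodes off the path carry no flow.
\item Constraints \eqref{eq:arc_arrive_once_l}--\eqref{eq:arc_sink_l} hold with the arrival at $t_m$. This needs the convention that a path stops at its first visit to $d_k$, which I will impose on $\mathcal{P}_k$ without loss of generality.
\item The capacity constraint \eqref{eq:arc_cap_l} follows directly from the identity \eqref{eq:a_indicator}:
\[
\sum_{k}\sum_{t_0}b^{\ell t}_{\ell,t_0}f^\omega_{k,\ell,t_0}=\sum_k\sum_p a^{\ell t}_{kp}y^\omega_{kp}\le x_{\ell t}.
\]
\end{itemize}
The objective is preserved because $\sum_k r_kz_k^\omega=\sum_k\sum_p r_ky^\omega_{kp}$. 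Injectivity holds because, for each $k$, the arcs carrying $f^\omega_{k,\cdot,\cdot}=1$ determine the node sequence and departure times of $p_k$. So two distinct $\boldsymbol{y}$ yield distinct $\boldsymbol{f}$. Together these give $G\le Q$.

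\textbf{Converse inequality.} Take any $(\boldsymbol{f},\boldsymbol{z},\boldsymbol{h})\in\mathcal{F}_A$ and an accepted request $k$. I trace a walk starting at $(o_k,t_k)$.
\begin{itemize}
\item By \eqref{eq:arc_source_l} there is an outgoing arc with $f=1$ (net outflow is $1$). Follow it.
\item At each intermediate time-stamped node, \eqref{eq:arc_flow_l} guarantees another outgoing arc with $f=1$.
\item Because the network is a finite DAG, the walk is a simple path and must terminate. It can only terminate at a node where conservation is not imposed, and I want that node to be some $(d_k,t)$.
\end{itemize}
Setting $y_{kp_k}=1$ for the extracted path $p_k$ gives a path solution with the same revenue. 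Its occupancy is dominated componentwise by that of $\boldsymbol{f}$, since $a^{\ell t}_{kp_k}\le\sum_{t_0}b^{\ell t}_{\ell,t_0}f_{k,\ell,t_0}$ and $\boldsymbol{f}$ is binary. Hence \eqref{eq:cap} holds, and $Q\le G$ follows. Optimal solutions then correspond under $\Psi$, because $\Psi$ preserves objective values and the two optimal values coincide.

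\textbf{Main obstacle.} The delicate step is ensuring that the traced walk ends at $d_k$. Constraint \eqref{eq:arc_flow_l} excludes $o_k$ at every time, but \eqref{eq:arc_source_l} constrains $o_k$ only at $t_k$. So a binary flow could re-enter $o_k$ at a later time $t'$ and stop there, while a separate stray component reaches $d_k$. I will handle this in two steps.
\begin{itemize}
\item First, read the model as imposing conservation at $(o_k,t)$ for every $t\neq t_k$; this is the evident intent of the formulation. Under that reading, the walk cannot stop at $o_k$ and must end at some $(d_k,t)$.
\item Second, as a safeguard, note that any flow not on the extracted path only consumes capacity. Discarding it therefore preserves feasibility and the objective, so the argument needs only the existence of one origin-to-destination path within the support of $\boldsymbol{f}$.
\end{itemize}
To establish that existence, I will use a cut argument. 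The total net outflow from $\{(o_k,t)\}_t$ equals the total net inflow $\sum_t h_{kt}=z_k=1$ at $\{(d_k,t)\}_t$. If no such path existed from $(o_k,t_k)$, then the set of nodes reachable from $(o_k,t_k)$ in the support of $\boldsymbol{f}$ would contain no $(d_k,\cdot)$ node. The plan is to show that this set would then have to absorb positive net flow at some $o_k$ copy, contradicting the stated conservation requirement.
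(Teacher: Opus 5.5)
Your route is essentially the paper's. You build an explicit $\Psi$ that expands each selected path into unit arc flows; it is injective because the activated departures recover the path, and it preserves value and capacity via \eqref{eq:a_indicator}. For $Q\le G$ you extract one origin--destination path from an arc solution and discard the remaining flow.

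The obstacle you single out is not a technicality. With the constraints exactly as written, the proposition is false. Take a scenario with a single request $k$ and corridors $(o_k,j)$, $(j,o_k)$, $(o_k,d_k)$, all with $\tau_\ell=1$. Reserve capacity $1$ on $((o_k,j),t_k)$, $((j,o_k),t_k+1)$ and $((o_k,d_k),t_k+1)$, and $0$ elsewhere. Set $f^\omega_{k,(o_k,j),t_k}=f^\omega_{k,(j,o_k),t_k+1}=f^\omega_{k,(o_k,d_k),t_k+1}=1$, $z_k^\omega=1$ and $h^\omega_{k,t_k+2}=1$. This satisfies \eqref{eq:arc_cap_l}--\eqref{eq:arc_sink_l}, because nothing constrains the copies $(o_k,t_k+1)$ and $(o_k,t_k+2)$, so $Q(\boldsymbol{x};\omega)=r_k$. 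However, the only capacitated path leaving $(o_k,t_k)$ dead-ends at $(o_k,t_k+2)$, so $G(\boldsymbol{x};\omega)=0$.

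Consequently, your first fix, imposing conservation at $(o_k,t)$ for every $t\neq t_k$, is a necessary correction of the arc model rather than a reading of its intent. Definition~1 and the pricing DP already presuppose it. With it, your walk argument is complete; in fact $f^\omega_{k,\cdot,\cdot}$ is then exactly one path, so nothing needs discarding.

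Your ``safeguard'' cut argument cannot succeed without that correction. In the example, the set reachable from $(o_k,t_k)$ absorbs its unit at $(o_k,t_k+2)$, which the literal model permits, so no contradiction can be derived. Drop it, or say explicitly that it presupposes the amended constraint.

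The paper's Step~2 contains exactly this gap. It asserts from divergence counts alone that flow decomposition yields a component from $(o_k,t_k)$ to $(d_k,t_k^*)$. But the decomposition can pair $(o_k,t_k)$ with one origin copy and a different origin copy with the destination copy, as in the example. Its claim that a rejected request carries no flow likewise fails literally, though harmlessly, since such flow can be discarded.

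Finally, your ``first visit to $d_k$'' convention is unnecessary, because \eqref{eq:arc_sink_l} already accommodates a pass-through at $d_k$ with $h^\omega_{kt}=0$. Imposing it would also shrink $\mathcal{P}_k$ below the full set of Definition~1 that the statement refers to, so leave it out.
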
}

\REFTHR{The proof, given in Appendix~\ref{app:proofs}, establishes the equivalence by decomposing flows in the acyclic time-expanded network.} \REVII{While Proposition~\ref{prop:equivalence} considers integral solutions, the following corollary extends the equivalence to the LP relaxation used in the Benders decomposition, including the corresponding dual variables.}

\REVII{\begin{corollary}[LP equivalence and dual correspondence]\label{cor:lp_equivalence}
Fix $\omega\in\Omega$. For every $\boldsymbol{x}\ge 0$, not necessarily integer, the LP relaxations of the two recourse formulations coincide in value: $G_{\mathrm{LP}}^{\mathrm{AF}}(\boldsymbol{x};\omega)=G_{\mathrm{LP}}(\boldsymbol{x};\omega)$, where $G_{\mathrm{LP}}^{\mathrm{AF}}$ denotes the LP relaxation of \eqref{eq:arc_stage2_obj_l}--\eqref{eq:arc_sink_l}, stated explicitly as the arc-flow subproblem in Section~\ref{sec:subproblem_arcflow}, and $G_{\mathrm{LP}}$ that of \eqref{eq:stage2_obj_path}--\eqref{eq:cap}. Moreover, the two LPs have identical sets of optimal dual multipliers associated with the capacity constraints \eqref{eq:arc_cap_l} and \eqref{eq:cap}.
\end{corollary}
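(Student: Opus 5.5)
The plan has two parts. First I show that the two LP value functions coincide as functions of the capacity vector. Then I deduce the dual statement from a characterization of optimal capacity multipliers as supergradients of that value function.

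\textbf{Step 1: $G_{\mathrm{LP}}\le G^{\mathrm{AF}}_{\mathrm{LP}}$ (path to arc).} Fix $\omega$ and any $\boldsymbol{x}$, and take a feasible fractional $\boldsymbol{y}\ge 0$ of the path LP. Define the arc flows by aggregation:
\[
f_{k,\ell,t_0}^\omega=\sum_{p\in\mathcal{P}_k:\,p \text{ departs on } \ell \text{ at } t_0} y_{kp}^\omega,\qquad z_k^\omega=\sum_{p} y_{kp}^\omega,
\]
and let $h_{kt}^\omega$ be the total weight of paths of $k$ that end at $(d_k,t)$. Each path is a unit source--sink flow in the time-expanded network. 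Hence the flow constraints \eqref{eq:arc_source_l}--\eqref{eq:arc_sink_l} hold by linearity, and $z_k^\omega\le 1$ follows from \eqref{eq:link_eq}. By the first expression in \eqref{eq:a_indicator}, the arc-based capacity consumption $\sum_{k,t_0} b^{\ell t}_{\ell,t_0} f^\omega_{k,\ell,t_0}$ equals $\sum_{k,p} a^{\ell t}_{kp}y^\omega_{kp}$, so \eqref{eq:arc_cap_l} holds. The objective is preserved, which gives the inequality. This is the linear extension of the map $\Psi$ from Proposition~\ref{prop:equivalence}.

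\textbf{Step 2: $G^{\mathrm{AF}}_{\mathrm{LP}}\le G_{\mathrm{LP}}$ (arc to path).} Take a feasible fractional arc solution and, for each $k$ separately, apply standard flow decomposition to $f_k^\omega$. The time-expanded network is a finite DAG because every arc strictly increases time, so no cycle components arise. Conservation at intermediate time-stamped nodes, together with \eqref{eq:arc_source_l}, \eqref{eq:arc_arrive_once_l} and \eqref{eq:arc_sink_l}, forces the flow to leave $(o_k,t_k)$ with value $z_k^\omega$ and to be absorbed at nodes $(d_k,t)$.

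The decomposition therefore writes $f_k^\omega=\sum_{p\in\mathcal{P}_k} y_{kp}\chi^p$ with $y\ge 0$ and $\sum_p y_{kp}=z_k^\omega\le 1$. Each decomposed path lies in $\mathcal{P}_k$ by Definition~1. Capacity usage is again identical through \eqref{eq:a_indicator}, so $\boldsymbol{y}$ is path-feasible with the same objective.

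This step is the main technical point. I expect the obstacle to be justifying that flow which "passes through" $d_k$ and later returns is harmless. Such flow simply decomposes into paths ending at the first visit to $d_k$, or the part beyond $d_k$ is dropped, which only lowers capacity usage. Combining Steps 1 and 2 gives $G^{\mathrm{AF}}_{\mathrm{LP}}(\boldsymbol{x};\omega)=G_{\mathrm{LP}}(\boldsymbol{x};\omega)$ for every $\boldsymbol{x}$. For $\boldsymbol{x}$ with a negative component, both LPs are infeasible and both values equal $-\infty$, so the identity holds on all of $\mathbb{R}^{|\mathcal{L}|\times|\mathcal{T}|}$.

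\textbf{Step 3: duals.} Write either LP as $\max\{c^\top u: Au\le \boldsymbol{x},\ Eu\le e,\ u\ge 0\}$, where $A$ collects the capacity rows and $E$ the remaining rows, with right-hand side $e$ independent of $\boldsymbol{x}$. Each dual has a feasible region $D$ that does not depend on $\boldsymbol{x}$, and $D\neq\emptyset$ because the primal is bounded (the objective is at most $\sum_k r_k$).

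By LP duality, the value function is $v(\boldsymbol{x})=\min_{(\mu,\nu)\in D}\mu^\top\boldsymbol{x}+\nu^\top e$ for all $\boldsymbol{x}$, including $-\infty$ where the primal is infeasible. Hence $v$ is concave and polyhedral. A standard argument shows that its superdifferential at $\boldsymbol{x}$ is exactly the projection onto the $\mu$-coordinates of the optimal dual face: for a finite polyhedral minimum, supergradients are the $\mu$ parts of the minimizers.

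By Steps 1 and 2, the two value functions are the same function of $\boldsymbol{x}$, so their superdifferentials coincide at every $\boldsymbol{x}\ge 0$. Therefore the sets of optimal capacity multipliers for \eqref{eq:arc_cap_l} and \eqref{eq:cap} are identical. The care needed here is at boundary points such as $x_{\ell t}=0$, where $v$ jumps to $-\infty$ on one side. The min-representation over the fixed set $D$ handles this uniformly, so I would state the superdifferential characterization in that form rather than through directional derivatives.
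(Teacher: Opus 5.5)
Your architecture is the paper's: aggregate paths into arc flows, decompose arc flows on the acyclic time-expanded network, conclude that the two value functions agree identically in $\boldsymbol x$, and read off the capacity duals as supergradients. Step 1 is correct. Your Step 3, with the min-representation over the $\boldsymbol x$-independent dual polyhedron and the projection onto capacity multipliers, is a more careful version of the paper's one-line appeal.

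The gap is in Step 2, and it is not the one you anticipated. Flow passing through $d_k$ is harmless, since $h_{kt}\ge 0$ and Definition~1 allows intermediate visits to $d_k$. The problem is your claim that conservation together with \eqref{eq:arc_source_l}, \eqref{eq:arc_arrive_once_l} and \eqref{eq:arc_sink_l} forces the flow leaving $(o_k,t_k)$ to be absorbed at destination copies. Constraint \eqref{eq:arc_flow_l} is imposed only for $i\in\mathcal V\setminus\{o_k,d_k\}$, and \eqref{eq:arc_source_l} only at the single node $(o_k,t_k)$. So the copies $(o_k,t)$ with $t\neq t_k$ carry no balance constraint at all; the paper's proof of Proposition~\ref{prop:equivalence} says so explicitly. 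Flow leaving $(o_k,t_k)$ can therefore loop back and stop at a later origin copy, while the flow that reaches $d_k$ is emitted from a different origin copy. Neither decomposed component lies in $\mathcal P_k$, and the one delivering to $d_k$ cannot be discarded.

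This cannot be patched within the model as written, because the equality then fails. Take a single request with $o_k=o$, $d_k=d$, $t_k=1$, $T=6$, and corridors $(o,j)$, $(j,o)$, $(o,d)$, all with unit travel time. Set capacities $x_{(o,j),1}=x_{(j,o),2}=x_{(o,d),5}=1$ and all others $0$.
\begin{itemize}
\item Every $p\in\mathcal P_k$ either departs on $(o,d)$ at time $1$, or is forced through $(o,3)$ and must then leave $o$ at time $3$. Either way it uses a zero-capacity unit, so $G_{\mathrm{LP}}(\boldsymbol x;\omega)=0$.
\item The arc solution $f_{k,(o,j),1}=f_{k,(j,o),2}=f_{k,(o,d),5}=1$, $z_k=h_{k6}=1$ satisfies \eqref{eq:arc_cap_l}--\eqref{eq:arc_sink_l}, because the imbalances at $(o,3)$ and $(o,5)$ are unconstrained. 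Hence $G^{\mathrm{AF}}_{\mathrm{LP}}(\boldsymbol x;\omega)=r_k>0$.
\end{itemize}
The example is integral, so it also defeats Proposition~\ref{prop:equivalence} as stated.

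What your Step 2 needs, and what the model evidently intends, is balance at every $(o_k,t)$ with $t\neq t_k$. With that added, the only source is $(o_k,t_k)$ and the only sinks are destination copies, and your decomposition argument and Step 3 then go through verbatim. Do not try to close the hole by copying the paper's treatment. It keeps ``the'' component from $(o_k,t_k)$ to $(d_k,t_k^*)$ and discards the components between origin copies, which presupposes exactly the pairing the example breaks. Its corollary proof's claim that each arc flow decomposes into paths of $\mathcal P_k$ of total weight $z_k$ fails on the same instance.
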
}

\REVII{Proposition~\ref{prop:equivalence} and Corollary~\ref{cor:lp_equivalence} show three points.
First, Corollary~\ref{cor:lp_equivalence} makes the Benders optimality cut of Section~\ref{sec:decomposition} solver-independent: any exact LP solver for the scenario recourse --- in this paper, column generation on the path formulation (Section~\ref{sec:subproblem_cg}) --- produces the same supporting hyperplane of the recourse value function, so the cut's validity does not depend on which formulation or algorithm computes it.
Second, Proposition~\ref{prop:equivalence} emphasizes the recovery of deployable integer plans, which reads acceptance and routing decisions off the path formulation once the first stage is fixed.
Third, the equivalence is structural and does not imply that the recourse LP is integral at a fixed integer  $\boldsymbol{x}$, the latter is a separate property, verified numerically
in in Appendix~\ref{sec:app_integrality}.}

\medskip
\section{Solution Methodology}
\label{sec:analysis_algorithm}

\REVII{
This section presents the solution framework for the proposed two-stage stochastic capacity planning model.
The framework exploits the separable structure of the recourse problem: for a fixed first-stage capacity decision, each scenario reduces to an independent LP-relaxed path-packing problem, whose dual multipliers provide valid supporting hyperplanes of the recourse function.
We therefore develop a Benders decomposition framework (Section~\ref{sec:decomposition}, Algorithm~\ref{alg:multicut_benders}) with scenario-wise optimality cuts.
Each Benders iteration requires the exact evaluation of the LP-relaxed scenario recourse and the associated capacity dual multipliers, which are obtained through a column-generation procedure over the full path space (Section~\ref{sec:subproblem_cg}). The truncated $P$-shortest path sets constructed in Section~\ref{sec:path_generation} are used only as warm-start column pools and do not restrict the final recourse evaluation.
Because the Benders procedure operates on the LP relaxation, an integer recovery procedure is subsequently applied to obtain implementable first-stage capacity decisions and second-stage routing decisions (Section~\ref{sec:integrality}).}

\subsection{Benders Decomposition}
\label{sec:decomposition}
We solve the two-stage stochastic model using a \textit{disaggregated Benders decomposition with column generation} (\DBCG).
The decomposition is applicable because the first-stage capacity vector $x$ enters each scenario recourse problem only through the corridor--time capacity constraints. For a fixed $x$, each scenario subproblem reduces to an independent linear path-packing problem over the time-expanded network. Therefore, each scenario recourse function can be approximated through scenario-specific dual information from the subproblems. The disaggregated structure preserves individual scenario cuts rather than aggregating them into a single expected cut. This choice is motivated by the fact that different demand realizations induce different congestion patterns and therefore distinct marginal values of corridor--time capacity.

\REFTWO{To enable Benders decomposition, we relax the second-stage path-selection variables $y_{kp}^\omega$ to $[0,1]$ in each scenario subproblem and allow the first-stage capacity variables $x_{\ell t}$ to take continuous values in the master problem. This relaxation yields a linear recourse problem whose dual multipliers associated with the corridor--time capacity constraints can be used to construct Benders optimality cuts. Note that the path-packing recourse formulation does not explicitly include the acceptance variable $z_k^\omega$; instead, demand acceptance is implicitly represented by the assignment constraint $\sum_{p\in\mathcal{P}_k} y_{kp}^\omega\leq 1$. Therefore, relaxing $y_{kp}^\omega$ jointly relaxes the routing and acceptance decisions, and no additional relaxation of $z_k^\omega$ is required.}

\textbf{Master problem}. To reformulate the problem, we introduce auxiliary variables $\theta_\omega$ to represent the LP-relaxed recourse value under scenario $\omega$.
Let $\mathcal{C}_\omega$ denote the set of Benders cuts accumulated for scenario $\omega$.
At iteration $n$, the master problem takes the form
\begin{align}
\text{(Master)}\qquad \max_{\boldsymbol{x},\theta_\omega} \quad
& \sum_{\omega\in\Omega} \pi_\omega \theta_\omega
  - \sum_{(\ell,t)\in \mathcal{L}\times\mathcal{T}} c_{\ell t} x_{\ell t}
\label{eq:master_obj}\\
\text{s.t.}\quad
& 0 \leq x_{\ell t} \leq \bar{x}_{\ell t},
\qquad &&\forall (\ell,t)\in \mathcal{L}\times\mathcal{T},
\label{eq:master_x_box}\\
&\theta_\omega \;\le\; \sum_{k\in K^\omega} r_k,
\qquad &&\forall \omega\in\Omega,
\label{eq:master_theta_ub}\\
& \theta_\omega \leq (\lambda^{\omega j})^\top x + \beta^{\omega j},
\qquad &&\forall \omega\in\Omega,\; j\in\mathcal{C}_\omega.
\label{eq:master_cuts}
\end{align}
The master problem selects a capacity vector $\boldsymbol{x}$ and enforces previously generated supporting hyperplanes that upper-bound the recourse function.
\REVII{Constraint~\eqref{eq:master_theta_ub} is the trivial scenario-wise bound $G_{\mathrm{LP}}(x;\omega)\le \sum_{k\in K^\omega} r_k$ (serve every request); it keeps the master bounded at iteration $n=0$, when $\mathcal{C}_\omega=\emptyset$, and is dominated by every subsequently generated cut, so the LP-relaxed optimum is unchanged.}

\textbf{Scenario subproblem}.
Given a candidate first-stage solution $\boldsymbol{x}^n$ and a scenario $\omega$, the LP-relaxed recourse problem is the capacitated path-packing linear program
\begin{align}
\text{(Sub)}\qquad 
G_{\mathrm{LP}}(\boldsymbol{x}^n;\omega)=\max_{\boldsymbol{y}^\omega}\quad
& \sum_{k\in K^\omega} \sum_{p\in\mathcal{P}_k} r_k\, y_{kp}^\omega
\label{eq:sub_obj}\\
\text{s.t.}\quad
& \sum_{p\in\mathcal{P}_k} y_{kp}^\omega \le 1,
\qquad &&\forall k\in K^\omega,
\label{eq:sub_assign}\\
& \sum_{k\in K^\omega}\sum_{p\in\mathcal{P}_k} a_{kp}^{\ell t} y_{kp}^\omega \le x^n_{\ell t},
\qquad &&\forall (\ell,t)\in \mathcal{L}\times\mathcal{T},
\label{eq:sub_cap}\\
& 0 \le y_{kp}^\omega \le 1,
\qquad &&\forall k\in K^\omega,\; p\in\mathcal{P}_k.
\label{eq:sub_box}
\end{align}

\REFTWO{\textbf{No feasibility cuts are required.}
The path-packing recourse~\eqref{eq:sub_obj}--\eqref{eq:sub_box} is feasible for every $x\ge 0$ because the trivial assignment $y_{kp}^\omega\equiv 0$ (i.e., rejecting all requests) satisfies~\eqref{eq:sub_assign}--\eqref{eq:sub_cap} with slack and is feasible. Equivalently, the recourse value satisfies $0\le G_{\mathrm{LP}}(x;\omega)\le \sum_{k\in K^\omega} r_k$ for every $x$, so the second-stage dual is always bounded and Algorithm~\ref{alg:multicut_benders} never needs to generate feasibility (extreme-ray) cuts; only optimality cuts~\eqref{eq:opt_cut} are added. This property follows from the explicit rejection option in the recourse model and is a structural feature rather than a numerical artifact.
The same boundedness property also motivates the initialization constraint in the master problem.}

\textbf{Scenario-wise optimality cut.}
For each scenario $\omega$, solving the subproblem at $\boldsymbol{x}^n$ yields both the primal value $G_{\mathrm{LP}}(\boldsymbol{x}^n;\omega)$ and an optimal dual multiplier $\lambda^{\omega *}(\boldsymbol{x}^n)$ associated with the capacity constraints~\eqref{eq:sub_cap}.
Define the intercept of the supporting hyperplane by
\begin{align}
\beta^{\omega *}(\boldsymbol{x}^n)
:=G_{\mathrm{LP}}(\boldsymbol{x}^n;\omega)-
\big(\lambda^{\omega *}(\boldsymbol{x}^n)\big)^\top \boldsymbol{x}^n .
\label{eq:beta_def}
\end{align}
\REVII{Because the recourse problem is a maximization problem, its dual representation provides an upper approximation of the recourse function. By weak LP duality, the following scenario-wise optimality cut is valid:}
\begin{align}
\theta_\omega \le \big(\lambda^{\omega *}(\boldsymbol{x}^n)\big)^\top x + \beta^{\omega *}(\boldsymbol{x}^n),
\qquad \forall \omega\in\Omega.
\label{eq:opt_cut}
\end{align}
\REVII{The validity of~\eqref{eq:opt_cut} follows from weak LP duality of the scenario subproblem; the full derivation is given in Appendix~\ref{app:proofs}.}
\REVII{Because each cut is valid for every $\boldsymbol{x}$, repeated addition of the cuts \eqref{eq:opt_cut} yields a progressively tighter polyhedral outer approximation of the recourse function.
In \DBCG, one scenario-specific optimality cut is generated for each scenario at every iteration. This preserves heterogeneous congestion information across demand realizations.} 
\REVII{This procedure constitutes the proposed \DBCG algorithm and is summarized
in Algorithm~\ref{alg:multicut_benders}. The master objective provides an upper bound on the LP-relaxed problem, while evaluating the recourse at the current solution provides a valid lower bound. The algorithm terminates when the gap between these bounds is within a prescribed tolerance $\varepsilon$.}

\begin{algorithm}[!h]
\caption{\DBCG for the LP-relaxed Model}
\label{alg:multicut_benders}
\KwIn{Scenario set $\Omega$ with probabilities $\{\pi_\omega\}$; feasible set $\boldsymbol{X}$; tolerance $\varepsilon$; iteration limit $I_{\max}$; time limit $T_{\max}$.}
\KwOut{Incumbent capacity plan $\boldsymbol{x}^*$.}

Initialize cut sets $\mathcal{C}_\omega \leftarrow \emptyset$ for all $\omega\in\Omega$\;
\rev{Impose trivial upper bound $\theta_\omega \le \sum_{k\in K^\omega} r_k$ for all $\omega\in\Omega$ so that the master is well-posed at $n=0$\;}
Set $\mathrm{UB}\leftarrow +\infty$, $\mathrm{LB}\leftarrow -\infty$, and iteration counter $n\leftarrow 0$\;
Initialize incumbent solution $\boldsymbol{x}^* \in X$ arbitrarily\;

\While{$n<I_{\max}$ \textbf{and} \texttt{runtime}$<T_{\max}$ \textbf{and} $\mathrm{UB}-\mathrm{LB}>\varepsilon$}{
Solve the master problem to obtain $(\boldsymbol{x}^{n},\{\theta_\omega^{n}\}_{\omega\in\Omega})$\;
Set $\mathrm{UB}\leftarrow \sum_{\omega\in\Omega} \pi_\omega \theta_\omega^{n}-\sum_{(\ell,t)\in\mathcal{L}\times\mathcal{T}} c_{\ell t}x_{\ell t}^{n}$\;

\ForEach{$\omega\in\Omega$}{
\REVII{Solve the scenario subproblem at $\boldsymbol{x}^{n}$ by column generation (Algorithm~\ref{alg:cg_subproblem}, with the persistent pool $\{\widetilde{\mathcal{P}}_k\}$) to obtain $G_{\mathrm{LP}}(\boldsymbol{x}^{n};\omega)$ and the dual multiplier $\lambda^{\omega *}(\boldsymbol{x}^{n})$ associated with~\eqref{eq:sub_cap}\;}
Compute $\beta^{\omega *}(\boldsymbol{x}^{n}) \leftarrow G_{\mathrm{LP}}(\boldsymbol{x}^{n};\omega)-\big(\lambda^{\omega *}(\boldsymbol{x}^{n})\big)^\top \boldsymbol{x}^{n}$\;
Add the cut $\theta_\omega \leq \big(\lambda^{\omega *}(\boldsymbol{x}^{n})\big)^\top x + \beta^{\omega *}(\boldsymbol{x}^{n})$ to $\mathcal{C}_\omega$\;
}

Compute the candidate lower bound
$\mathrm{LB}^{n} \leftarrow \sum_{\omega\in\Omega} \pi_\omega G_{\mathrm{LP}}(\boldsymbol{x}^{n};\omega)-\sum_{(\ell,t)\in\mathcal{L}\times\mathcal{T}} c_{\ell t}\boldsymbol{x}_{\ell t}^{n}$\;

\If{$\mathrm{LB}^{n}>\mathrm{LB}$}{
Update $\mathrm{LB}\leftarrow \mathrm{LB}^{n}$ and set $\boldsymbol{x}^*\leftarrow \boldsymbol{x}^{n}$\;
}

Set $n\leftarrow n+1$\;}
\Return{$\boldsymbol{x}^*$}\;
\end{algorithm}

\subsection{Scenario Subproblem Solution}
\label{sec:subproblem}
The Benders framework requires repeated evaluation of scenario recourse problems. We consider two recourse solution strategies depending on the desired accuracy and computational scale. For the proposed \DBCG algorithm, each scenario subproblem is solved exactly through column generation, where a dynamically generated path set recovers the full LP-relaxed recourse value without explicit path enumeration. For larger networks, we introduce disaggregated Benders with truncated-path recourse (\DBTP), which replaces exact column generation with a truncated path representation. By restricting each request to a fixed candidate path pool, \DBTP avoids repeated pricing operations and provides a scalable approximation of the same recourse structure.

\rev{\subsubsection{Exact Evaluation of LP Recourse via Column Generation}
\label{sec:subproblem_cg}
For \DBCG, the scenario subproblems are solved through column generation. The complete path set $\mathcal P_k$ is potentially exponential, making explicit enumeration impractical. Column generation avoids this issue by maintaining a restricted path pool and iteratively introducing paths with positive reduced cost.
Let $\widehat G_{\mathrm{LP}}(\boldsymbol{x}^n;\omega)$ denote the recourse value obtained when each request is restricted to its truncated path pool $\widehat{\mathcal P}_k$. If an omitted path has positive reduced cost, then $\widehat G_{\mathrm{LP}}(\boldsymbol{x}^n;\omega)<G_{\mathrm{LP}}(\boldsymbol{x}^n;\omega)$, and the resulting Benders cut may not accurately represent the full recourse function. To eliminate this truncation error without explicitly enumerating the complete path set $\mathcal P_k$, we solve each scenario subproblem~\eqref{eq:sub_obj}--\eqref{eq:sub_box} by column generation, using $\widehat{\mathcal P}_k$ as a warm-start column pool.

Fix scenario $\omega$ and capacity vector $\boldsymbol{x}^n$. Let $\widetilde{\mathcal{P}}_k\subseteq\mathcal{P}_k$ denote the current restricted column pool for request $k$, initialized as $\widetilde{\mathcal{P}}_k\leftarrow\widehat{\mathcal{P}}_k$ from Algorithm~\ref{alg:p_short}. 
The \textit{restricted master LP} is the scenario subproblem \eqref{eq:sub_obj}--\eqref{eq:sub_box} with each $\mathcal{P}_k$ replaced by the pool $\widetilde{\mathcal{P}}_k$; denote its optimal value by $\widetilde G_{\mathrm{LP}}(\boldsymbol{x}^n;\omega)\le G_{\mathrm{LP}}(\boldsymbol{x}^n;\omega)$. Column generation is performed only within each Benders subproblem and does not alter the outer decomposition structure.
Solving the restricted master LP yields dual multipliers $\alpha_k^\omega\ge 0$ for the assignment constraints and $\lambda_{\ell t}^\omega\ge 0$ for the capacity constraints. The pricing problem for request $k$ identifies a path $p^*\in\mathcal{P}_k$ of maximum reduced cost
\begin{align}
\bar r_k(p)\;=\;r_k\;-\;\alpha_k^\omega\;-\;\sum_{(\ell,t)\in p}\lambda_{\ell t}^\omega,
\label{eq:reduced_cost}
\end{align}
where the summation runs over the corridor--time pairs occupied by $p$. Maximizing $\bar r_k(p)$ over $\mathcal{P}_k$ is equivalent to a longest-reduced-cost path problem on the time-expanded directed acyclic graph from $(o_k,t_k)$ to any $(d_k,t)$ with $t\le T$, where each flight arc on corridor $\ell$ departing at time $t$ has cost $\sum_{s=t}^{t+\tau_\ell-1}\lambda_{\ell s}^\omega$. Since $\lambda_{\ell t}^\omega\ge 0$, every additional flight arc weakly decreases the reduced cost, so the maximizing path never benefits from revisiting a physical node except when the arcs involved carry zero dual price; the pricing DP does not explicitly exclude such degenerate revisits, but they carry no reduced-cost advantage over the direct route.
Because the time index is monotonic, the network is acyclic and each pricing iteration is solved in $O(|\mathcal{V}|T+|\mathcal{L}|T)$ per request by a single topological sweep in the current time-expanded network. If the optimal pricing value satisfies $\bar r_k(p^*)>\varepsilon_{\mathrm{cg}}$ for some $k$, the path $p^*$ is appended to $\widetilde{\mathcal{P}}_k$ and the restricted master is re-solved; otherwise the dual multipliers are optimal for the unrestricted LP and $\widetilde G_{\mathrm{LP}}(\boldsymbol{x}^n;\omega)=G_{\mathrm{LP}}(\boldsymbol{x}^n;\omega)$. 
The Benders cut~\eqref{eq:opt_cut} is then formed with the CG-optimal $(\lambda^{\omega *},G_{\mathrm{LP}}(\boldsymbol{x}^n;\omega))$ and is therefore tight relative to the full LP recourse. 
At termination, no request admits a positive reduced-cost path, implying that the restricted master solution is optimal for the full path-based LP recourse. The resulting dual multipliers are therefore valid for constructing the \DBCG Benders optimality cuts.
The procedure is summarized in Algorithm~\ref{alg:cg_subproblem}.

\begin{algorithm}[!h]
\caption{Column Generation Algorithm for Exact Evaluation of LP Scenario Recourse}
\label{alg:cg_subproblem}
\KwIn{Scenario $\omega$ with request set $K^\omega$; capacity vector $\boldsymbol{x}^n$; warm-start pools $\{\widehat{\mathcal{P}}_k\}$ from Algorithm~\ref{alg:p_short}; pricing tolerance $\varepsilon_{\mathrm{cg}}>0$; CG iteration limit $J_{\max}$.}
\KwOut{$G_{\mathrm{LP}}(\boldsymbol{x}^n;\omega)$, dual multipliers $(\alpha^{\omega *},\lambda^{\omega *})$, and updated pool $\{\widetilde{\mathcal{P}}_k\}$.}

Initialize $\widetilde{\mathcal{P}}_k\leftarrow \widehat{\mathcal{P}}_k$ for all $k\in K^\omega$\;

\For{$j=0$ \KwTo $J_{\max}-1$}{
  Solve the restricted master LP on $\{\widetilde{\mathcal{P}}_k\}_{k\in K^\omega}$ to obtain primal $\boldsymbol{y}^{\omega(j)}$ and duals $(\alpha^{\omega(j)},\lambda^{\omega(j)})$\;
  Set $p^*\leftarrow \text{None}$, $\bar r^*\leftarrow \varepsilon_{\mathrm{cg}}$\;
  \ForEach{$k\in K^\omega$}{
    Solve the pricing problem: find $p_k^{\dagger}\in\mathcal{P}_k$ maximizing~\eqref{eq:reduced_cost} via a single forward dynamic-programming pass over the topologically ordered time-expanded DAG. Flight arcs on corridor $\ell$ departing at time $t$ contribute $-\sum_{s=t}^{t+\tau_\ell-1}\lambda_{\ell s}^{\omega(j)}$ to the reduced cost\;
    \If{$\bar r_k(p_k^{\dagger})>\bar r^*$}{
      Update $p^*\leftarrow p_k^{\dagger}$, $k^*\leftarrow k$, $\bar r^*\leftarrow \bar r_k(p_k^{\dagger})$\;
    }
  }
  \If{$\bar r^*\leq\varepsilon_{\mathrm{cg}}$}{
    Set $G_{\mathrm{LP}}(\boldsymbol{x}^n;\omega)\leftarrow \widetilde G_{\mathrm{LP}}(\boldsymbol{x}^n;\omega)$, $(\alpha^{\omega *},\lambda^{\omega *})\leftarrow (\alpha^{\omega(j)},\lambda^{\omega(j)})$, and \textbf{break}\;
  }
  Append $p^*$ to $\widetilde{\mathcal{P}}_{k^*}$\;
}
\Return{$G_{\mathrm{LP}}(\boldsymbol{x}^n;\omega)$, $(\alpha^{\omega *},\lambda^{\omega *})$, $\{\widetilde{\mathcal{P}}_k\}$}\;
\end{algorithm}

Two implementation choices are adopted. First, the column pool is maintained across Benders iterations, allowing each scenario subproblem to be warm-started from previously generated columns. Second, the initial path pool generated by Algorithm~\ref{alg:p_short} is independent of scenarios and Benders iterates, and therefore can be constructed once for each request.}

\subsubsection{Approximate Recourse Evaluation via Truncated Path Pools}
\label{sec:path_generation}

For \DBTP, the exact pricing procedure is replaced by a fixed candidate path representation. The objective is not to recover the full recourse value, but to obtain a computationally efficient approximation that preserves the main routing alternatives.
Therefore, DisBen-TP and DisBen-CG solve the same Benders formulation; they differ only in the initialization strategy of the restricted master problems.

The truncated-path recourse requires a finite candidate set $\mathcal{P}_k$ for each request $k$.
We therefore construct restricted candidate sets using a truncated $P$-shortest-path procedure.
For each origin--destination pair $(o_k,d_k)$, we first compute the shortest path with respect to physical travel time.
We then generate up to $P$ shortest simple paths using $k$-shortest path algorithms \citep{yen1971finding, chen2020efficient}. 
\REVII{The deviation-based enumeration procedure which adapted from Yen's $k$-shortest-path algorithm is summarized in Algorithm~\ref{alg:p_short} in Appendix~\ref{app:pathgen}, together with a detailed account of the deviation step.}
To avoid excessively circuitous routes, any path whose total travel time exceeds $\rho$ times the shortest-path travel time is discarded, where $\rho \geq 1$ is a prescribed detour ratio.
Consequently, each request satisfies $|\mathcal{P}_k| \leq P$. This detour and count restriction applies only to the fixed candidate pool $\widehat{\mathcal P}_k$ used to warm-start \DBCG and, unmodified, as the final candidate set for \DBTP; it does not constrain the exact pricing DP of Algorithm~\ref{alg:cg_subproblem}, which searches unrestricted over the full time-expanded DAG and therefore recovers paths of arbitrary detour whenever they carry positive reduced cost.

Time feasibility is incorporated explicitly.
Given release time $t_k$ and discrete traversal durations $\tau_\ell$ on each corridor $\ell$, we verify that the induced arrival times remain within the planning horizon $\mathcal{T}$.
Paths violating temporal feasibility are removed.
To further reduce redundancy, dominated paths are eliminated.
A path $p\in\mathcal{P}_k$ is removed if there exists another path $p'\in\mathcal{P}_k$ such that:
(i) the total travel time of $p'$ is no greater than that of $p$, and
(ii) for every corridor--time pair $(\ell,t)$, the resource usage of $p'$ weakly dominates that of $p$.
This pruning step removes strictly inferior routing alternatives while preserving meaningful substitution flexibility.

The resulting candidate sets $\mathcal{P}_k$ maintain tractability of the recourse problem while retaining sufficient routing diversity to capture congestion-induced trade-offs across corridor--time resources.

\REFTWO{\subsection{Integer Solution Recovery and Evaluation}\label{sec:integrality}}

\REFTWO{The Benders procedure described above solves the LP relaxation of the two-stage model.
To obtain implementable decisions, we perform a post-processing recovery procedure that converts the fractional capacity plan into an integer reservation plan and subsequently solves the integer recourse problem.
The resulting solution provides a feasible lower bound for the original stochastic program, while the LP-relaxed Benders objective remains a valid upper bound.}

\REVII{\textit{Step~1 (First-stage rounding).}
Let $\boldsymbol{x}_{\mathrm{LP}}^*$ be the optimal capacity vector returned by Algorithm~\ref{alg:multicut_benders} on the LP relaxation of the master, and let
$\mathcal{S}=\{(\ell,t):x^{*}_{\mathrm{LP},\ell t}\notin\mathbb{Z}_{+}\}$ denote the set of fractional slots. We form two integer candidates and retain the better of the two.
The ceiling candidate is $x^{\uparrow}_{\ell t}=\lceil x^{*}_{\mathrm{LP},\ell t}\rceil$. It is feasible without truncation, since $\bar{x}_{\ell t}$ is integer and
$x^{*}_{\mathrm{LP},\ell t}\le\bar{x}_{\ell t}$ together imply
$x^{\uparrow}_{\ell t}\le\bar{x}_{\ell t}$, and its reservation cost exceeds that of the LP plan by exactly
$\Delta:=\sum_{(\ell,t)\in\mathcal{S}}c_{\ell t}\bigl(\lceil x^{*}_{\mathrm{LP},\ell t}\rceil-x^{*}_{\mathrm{LP},\ell t}\bigr)$.
The floor candidate is $x^{\downarrow}_{\ell t}=\lfloor x^{*}_{\mathrm{LP},\ell t}\rfloor$. It is cheaper and remains feasible for the recourse, since rejecting every request is always
admissible (Section~\ref{sec:decomposition}); flooring degrades the recourse value rather than violating feasibility. Both candidates are evaluated under the integer recourse of Step~2 across all scenarios, and the one with the larger expected profit is retained as $\boldsymbol{x}^{*}$.}

\REFTWO{\textit{Step~2 (Second-stage integer recovery).}
With $\boldsymbol{x}^*$ fixed, the realized scenario subproblem becomes the binary path-packing problem~\eqref{eq:stage2_obj_path}--\eqref{eq:cap} with $y^{\omega}_{kp}\in\{0,1\}$. We solve this binary problem directly with Gurobi on the full request set; the optional pre-screening~\eqref{eq:prescreen} is not applied here, so the recovered acceptance and routing decisions are optimal for the given capacity plan. Integrality of these decisions is guaranteed by construction and does not rely on any structural property of the constraint matrix, which, as discussed in Section~\ref{sec:decomposition}, is not totally unimodular in general. Appendix~\ref{sec:app_integrality} nonetheless reports that at the rounded plans $\lceil\boldsymbol{x}^{*}_{\mathrm{LP}}\rceil$ the LP relaxation of this binary problem attains the integer optimal value to within $10^{-6}$ on every instance tested, and is naturally integral in $85.3\%$ of cases, the remainder reflecting exact ties between routes of equal objective value. This observation explains the low computational cost of Step~2 rather than being required for its correctness. It is an empirical finding on instances with $|\mathcal{V}|\le 12$ and we do not claim that it extends to larger networks.}

\REFTWO{Together, Steps~1--2 deliver a deployable, fully integer capacity reservation plan and a fully integer set of accepted requests. The LP relaxation in Algorithm~\ref{alg:multicut_benders} is best understood as a computational device for generating valid optimality cuts; the planning decisions reported throughout the case study (Section~\ref{sec:case study}) are the integer-recovered
$(\boldsymbol{x}^*,\boldsymbol{z}^{\omega*},\boldsymbol{y}^{\omega*})$ obtained from Steps~1--2. Exactness refers only to the scenario-wise LP recourse evaluation: the outer LP-relaxed Benders loop terminates with the residual $\mathrm{gap}_{\mathrm{Bend}}$ reported
in Section~\ref{sec:benchmarks} under a fixed computational budget, and the first-stage rounding of Step~1 is bounded only in the conditional sense given in the remark above.}

\medskip
\section{Computational Enhancements}
\label{sec:acceleration}
\REVII{Although the proposed decomposition exploits scenario separability, solving realistic large-scale corridor--time planning problems remains computationally challenging. The difficulty stems from two sources: the strong degeneracy of the path-packing recourse, which weakens Benders cuts and slows convergence, and the rapid growth of candidate requests and routing alternatives as network size increases. This section therefore introduces two acceleration strategies used throughout the computational study: outer-loop stabilization (Section~\ref{sec:stabilization}) and an optional profitability-based request pre-screening rule (Section~\ref{sec:prescreening}). The first improves convergence by strengthening Benders cuts and stabilizing the master iterations, whereas the second reduces problem size before optimization. Additional implementation details, parameter sensitivity, and the numerical verification of second-stage integrality are collected in Appendix~\ref{app:diagnostics}.}

\subsection{Stabilization of the Benders Decomposition}
\label{sec:stabilization}

The path-packing recourse LP is strongly dual-degenerate in the plain Benders implementation.
In a diagnostic run without stabilization or cut strengthening, over $90\%$ of pooled columns across the benchmark instances have reduced cost within $10^{-7}$ of zero.
Consequently, the dual $(\lambda^{\omega*},\beta^{\omega*})$ returned at the master point is only one of many optima, the resulting cut is typically weak, and the master iterate may oscillate.
Moreover, if a cut is generated from an incompletely priced restricted master, it may violate dual feasibility and therefore fail to produce a valid supporting hyperplane for the concave recourse function.
These observations motivate the following acceleration strategies, which improve cut quality and maintain cut validity.

In practice, this degeneracy slows convergence because successive master iterations provide limited information about the marginal value of corridor--time capacity. We therefore introduce two complementary mechanisms: stronger cuts to improve the approximation quality and stabilization of the master trajectory.

\REVII{\textbf{Papadakos independent-point cuts.}
Dual degeneracy implies that cuts generated only at the current master solution may provide limited information about the global shape of the recourse function. To enrich the cut pool, we adopt the independent-point scheme of \citep{papadakos2008practical}.
Any dual-feasible multiplier pair yields a valid inequality
$\theta_\omega\le\lambda^\top x+\beta$ for all $x$.
At each iteration, in addition to solving the scenario subproblem at the current master solution $x^n$, we solve it at an auxiliary capacity vector $x^0$ and append the corresponding cuts.
The vector $x^0$ is initialized in the relative interior of the capacity box and updated as
$x^0\leftarrow\mu x^0+(1-\mu)x^n$ with $\mu=0.5$,
allowing the auxiliary point to gradually approach the congested operating region which may generate more informative capacity duals in congested regions.
Although related in spirit to Pareto-improving Benders cuts, the proposed scheme does not enforce the strict Magnanti--Wong auxiliary optimization.
Nevertheless, the generated cuts remain valid Benders inequalities and empirically provide tighter approximations than cuts generated only at $x^n$ (Appendix~\ref{app:diagnostics}).}

\REVII{\textbf{Dual-feasible cut repair.}
When column generation terminates due to the iteration limit rather than full price-out, the restricted-master dual solution may not be dual-feasible for the full recourse problem because a path with positive reduced cost can remain outside the restricted pool.
Using such multipliers directly would produce an invalid Benders cut that may underestimate the concave recourse function.
We restore dual feasibility by repairing the assignment dual variables:
$\alpha_k^{\mathrm{rep}}=\alpha_k+\max\{0,\bar r_k(\lambda)\}$,
where $\bar r_k(\lambda)=r_k-\alpha_k-\min_{p\in\mathcal P_k} \sum_{(\ell,t)\in p}\lambda_{\ell t}$
is the maximum reduced-cost violation identified by one additional pricing pass.
The only violated dual constraints are the path-specific assignment constraints, because the capacity dual variables already satisfy their domain restrictions.
The repaired intercept is then
$\sum_k\alpha_k^{\mathrm{rep}}$,
and weak LP duality guarantees
$G_{\mathrm{LP}}(x;\omega)\le\sum_k\alpha_k^{\mathrm{rep}}+\lambda^\top x,\quad\forall x$.
Therefore, every generated cut remains valid even when column generation stops before full price-out; the repair becomes inactive when the unrestricted LP optimum is reached.
}

\REVII{\textbf{Convergence of the LP-relaxed Benders procedure.}
\label{rem:convergence}
The stabilization mechanisms above preserve the validity of the Benders framework rather than serving as heuristic modifications. When the subproblem solver returns dual-feasible recourse information, every cut added to the master remains a valid supporting hyperplane of the recourse value function, regardless of how many iterations the algorithm runs.
A formal statement and proof are provided in Appendix~\ref{app:proofs}.
Therefore, the reported termination metric
$\mathrm{gap}_{\mathrm{Bend}}$
in Section~\ref{sec:numerical study} represents a meaningful optimality measure at whatever iteration the algorithm is stopped, rather than a heuristic stopping criterion.}

\REFTWO{%
\subsection{Profitability-Based Request Pre-screening}
\label{sec:prescreening}

Besides improving convergence, computational effort can also be reduced by
removing requests that are provably unprofitable under a conservative
dedicated-capacity estimate.

For each request $k\in K^\omega$, define the minimum \textit{dedicated}
reservation cost of a feasible path as
\begin{align}
\underline{c}_k:=
\min_{p\in\mathcal{P}_k}
\sum_{(\ell,t):\,a_{kp}^{\ell t}=1} c_{\ell t},
\label{eq:min_cost_path}
\end{align}
i.e., the cost of reserving one dedicated unit of capacity along the cheapest
time-feasible path of request $k$. If $r_k<\underline{c}_k$, request $k$
cannot generate positive net value when served using capacity reserved
exclusively for itself. We therefore optionally fix
\begin{align}
y_{kp}^\omega=0,\quad \forall p\in\mathcal{P}_k,\qquad
z_k^\omega=0,
\qquad \text{whenever } r_k<\underline{c}_k,
\label{eq:prescreen}
\end{align}
before solving the model.

This is an optional pre-screening procedure rather than an
objective-preserving reduction. Because corridor--time capacity is shared, a
screened request may still have positive marginal value when it can exploit
capacity reserved for other requests. The rule therefore trades a potentially
small loss in solution quality for reductions in the number of requests,
column pools, and pricing dimensions. Numerical experiments show that at the
operating reservation price, it removes no requests and introduces no
observable objective loss, while only eliminating a non-trivial fraction of
requests under high-price regimes where the optimal plan already reserves
little capacity.}

\begin{example}[Example: Profitability-based pre-screening]
\REFTWO{Reusing the network of Figure~\ref{fig:toy_network_time}, suppose request $k$ has origin node $1$, destination node $6$, release time $t$, and revenue $r_k=6$. Suppose $\mathcal{P}_k$ contains two feasible paths: $p_1=1\to2\to5\to6$, occupying $(\ell_{12},t),(\ell_{25},t+1),(\ell_{56},t+2)$ at reservation costs $c_{\ell_{12},t}=2$, $c_{\ell_{25},t+1}=3$, and $c_{\ell_{56},t+2}=2.5$, for a total of $7.5$; and a second, costlier path $p_2$ with total reservation cost $9$. Path $p_1$ is cheaper, so $\underline{c}_k=7.5$. Since $r_k=6<\underline{c}_k=7.5$, even reserving \textit{dedicated}, exclusive capacity along the single cheapest path would cost more than $k$ could ever generate in revenue --- sharing capacity with other requests can only add contention, never lower this floor. Rule~\eqref{eq:prescreen} therefore fixes $y_{kp}^\omega=0$ for all $p\in\mathcal{P}_k$ and $z_k^\omega=0$ before the model is solved, removing $k$ from every downstream pricing and restricted-master computation. Had the revenue instead been $r_k=9\ge\underline{c}_k$, request $k$ would survive pre-screening --- though it might still be rejected later, endogenously, if capacity actually binds.}
$\hfill \blacksquare$
\end{example}

\medskip
\section{Computational Study}\label{sec:numerical study}

\REVII{This section reports the computational study. Section~\ref{sec:benchmarks} describes the experimental setup, the compared benchmark methods, and the evaluation metrics. Section~\ref{sec:exp1} reports the scaling performance of \DBCG and the scalable \DBTP heuristic against baselines. Section~\ref{sec:sensitivity_analysis} examines how network connectivity, routing flexibility, route overlap, and scenario sample size shape solution quality and computational difficulty.}

\subsection{Experimental Setup, Benchmarks, and Reporting Protocol}
\label{sec:benchmarks}

\REFONE{\textbf{Experimental setup.}
All algorithms are implemented in Python with Gurobi as the LP/MIP solver and run on a single 16-core workstation, with a single thread per LP solve so that wall-clock times are comparable across runs. Instances with $|\mathcal{V}| \le 16$ were solved with a one-hour time limit, whereas larger instances were allocated a two-hour computational budget. Full software versions, hardware specification, and solver tolerances are given in Appendix~\ref{app:setup}.

}

\REVII{\textbf{Instance generation.}
Unless otherwise stated, experiments use synthetic corridor networks on a jittered lattice, with congestion controlled by three parameters, route overlap $\eta_{\mathrm{ov}}$, capacity tightness $\tau_{\mathrm{cap}}$, and release-window ratio $W$; scaling experiments increase network size, horizon, demand volume, scenario count, and path-pool size jointly, while structural experiments vary only $\eta_{\mathrm{ov}}$. The full generator is described in Appendix~\ref{app:setup}.}

\REVII{
\textbf{Compared methods.}
We compare Benders decomposition, cut aggregation, scalable path
approximation, and non-decomposition baselines to isolate the effects of
scenario disaggregation, recourse approximation, and stochastic planning.
\begin{itemize}
\item \textit{\DBCG (Disaggregated Benders with Column Generation).}
The proposed disaggregated Benders algorithm solves each scenario
recourse problem by column generation and applies the stabilization
strategies described in Section~\ref{sec:acceleration}.
\item \textit{\ABCG (Aggregated Benders with Column Generation).}
This variant replaces scenario-specific cuts with a single expected
optimality cut: $\sum_{\omega\in\Omega}\pi_\omega\theta_\omega \leq \sum_{\omega\in\Omega}\pi_\omega ((\lambda^{\omega *})^\top x+\beta^{\omega *})$. It preserves the same recourse formulation and
subproblem solver, isolating the benefit of cut disaggregation.
\item \textit{\DBTP (Disaggregated Benders with truncated-path recourse).}
For large instances, we replace exact column generation with a
truncated-$P$ path pool while retaining the same outer Benders framework.
This approximation avoids pricing iterations and enables larger-scale
experiments.
\item \textit{Greedy allocation.}
Greedy capacity allocation is a non-decomposition heuristic that first computes unconstrained shortest-path usage, ranks corridor--time pairs by expected utilization, and allocates capacity to the most frequently used pairs; it ignores stochastic competition among requests and the marginal value of capacity, serving as a myopic operational baseline.
\item \textit{\DET (Deterministic mean-demand benchmark).}
This deterministic benchmark solves the problem under mean demand
and evaluates the resulting capacity plan over the stochastic scenarios.
The resulting performance difference quantifies the value of stochastic
planning.
\end{itemize}
}

\REVII{\textbf{Evaluation metrics.}
The reported deviations arise from different sources and should not be
interpreted interchangeably. Table~\ref{tab:metrics} summarizes the
diagnostics used throughout the computational study.}

\begin{table}[!h]
\centering
\caption{Deviation diagnostics used in the computational study.}
\label{tab:metrics}
\begin{tabular}{lll}
\toprule
\textbf{Source} & \textbf{Metric} & \textbf{Purpose} \\
\midrule

Algorithmic convergence
&
$\displaystyle
\mathrm{gap}_{\mathrm{Bend}} (\%)=
\frac{\mathrm{UB}-\mathrm{LB}}{\mathrm{UB}} \times 100
$
&
Benders termination quality
\\[1.5ex]


Recourse approximation
&
$\displaystyle
\mathrm{gap}_{\mathrm{cert}}(\%)=
\frac{\mathrm{UB}-z_{\DBTP}}{\mathrm{UB}}\times 100
$
&
Truncation uncertainty
\\[1.5ex]

Second-stage relaxation
&
Integrality diagnostics
&
LP--MIP discrepancy
\\

\bottomrule
\end{tabular}
\end{table}

\REVII{
These diagnostics capture different sources of deviation. 
The Benders termination gap reflects incomplete convergence of the
decomposition algorithm under a finite computational budget. 
For the scalable \DBTP variant, the certificate gap is not an algorithmic
termination measure but quantifies the approximation introduced by restricting
the recourse problem to a truncated path pool. 
Second-stage integrality is reported separately because it concerns
the relaxation quality of the recourse model rather than the performance of the decomposition algorithm.
}

\subsection{Algorithm Performance}
\label{sec:exp1}
This experiment evaluates whether scenario disaggregation improves convergence when the recourse dimension grows and reports scaling performance (Section~\ref{sec:scaling_exp}) and the scalable \DBTP variant (Section~\ref{sec:exp_heuristics}).

\subsubsection{Scaling and the value of the stochastic solution.}
\label{sec:scaling_exp}

This experiment evaluates scalability while holding the structural regime constant: $\eta_{\mathrm{ov}}=0.70$, $\tau_{\mathrm{cap}}=0.55$, $W=0.25$.
We construct a balanced scaling ladder in which spatial size $|\mathcal{V}|$ increases jointly with the time horizon $|\mathcal{T}|$, request volume $|K^\omega|$, scenario count $|\Omega|$, and path-pool size $P$, scaled proportionally to preserve comparable load intensity and congestion geometry. Thus, the experiment increases both master and subproblem dimensions while controlling for structural effects.

\begin{table}[!htbp]
\centering
\caption{\REVII{Scaling performance of the proposed \DBCG and benchmark solutions.}}
\label{tab:exp1_combined}
\resizebox{1.0\textwidth}{!}{
\begin{tabular}{cccccc}
\toprule
$|\mathcal{V}|$ & \DBCG & Greedy & \DET & $\mathrm{gap}_{\mathrm{Bend}}(\%)$ & Benders iters / Time (s) \\
\midrule
12 & $108.25\pm8.97$ & $61.63\pm5.50$ & $90.34\pm11.04$ & 0.18 & 357 / 3604 \\
14 & $108.58\pm7.92$ & $65.75\pm4.11$ & $89.99\pm10.40$ & 0.47 & 281 / 3611 \\
16 & $120.60\pm10.13$ & $73.50\pm5.34$ & $97.60\pm11.17$ & 1.49 & 72 / 3621 \\
20 & $132.20\pm7.73$ & $86.77\pm4.47$ & $103.85\pm13.50$ & 3.51 & 62 / 7236 \\
25 & $145.37\pm9.71$ & $105.05\pm6.69$ & $114.38\pm10.03$ & 4.83 & 31 / 7441 \\
30 & $145.31\pm5.89$ & $116.00\pm3.12$ & $119.65\pm10.82$ & 21.76 & 8 / 7744 \\
\bottomrule
\end{tabular}
}
\end{table}

\REVII{
Table~\ref{tab:exp1_combined} reports the scaling performance of the proposed
\DBCG approach.
The proposed algorithm reaches small LP-Benders termination gaps through
$|\mathcal V|=25$ ($0.18$--$4.83\%$), with the gap increasing to
$21.76\%$ only at $|\mathcal V|=30$, where the enlarged recourse admits only
eight Benders iterations within the two-hour time budget. These results
indicate that the proposed decomposition remains effective as both the
planning and operational dimensions increase simultaneously.

From a planning perspective, \DBCG consistently produces substantially higher
expected profit than the Greedy allocation baseline, with improvements
ranging from $75.6\%$ at $|\mathcal V|=12$ to $25.3\%$ at
$|\mathcal V|=30$. Although this advantage decreases as the network grows,
the trend is expected because larger instances generate smoother aggregate
utilization patterns, making the utilization-ranking heuristic adopted by
\textit{Greedy} increasingly informative. Nevertheless, Greedy only
captures local utilization patterns and ignores the joint competition among
requests for shared corridor--time capacity. In contrast, \DBCG explicitly
optimizes these stochastic interactions and continues to produce superior
capacity reservation decisions throughout the tested range.
}

\REVII{
The \DET benchmark isolates the value of explicitly modeling demand uncertainty
in strategic capacity reservation. Across all tested scales, the stochastic
formulation solved by \DBCG consistently outperforms \DET, with the value of
the stochastic solution ranging from $16.5\%$ to $21.5\%$. The improvement
arises from the temporal and spatial rigidity of reserved corridor--time
capacity. By optimizing only for mean demand, \DET smooths demand fluctuations
and tends to underestimate the capacity required during peak realizations.
Once demand is realized, such shortages cannot be corrected through temporal
or spatial substitution, leading to irreversible profit losses. These results
demonstrate that uncertainty is not merely an operational disturbance but a
fundamental driver of effective corridor--time capacity planning.
}

\subsubsection{Scalable \DBTP via truncated-path recourse approximation}
\label{sec:exp_heuristics}

Although \DBCG remains effective to medium-sized instances, solving every scenario subproblem to optimality by column generation becomes increasingly expensive for large networks because of the growing path space. To extend the scalability of the same decomposition framework, we introduce \DBTP by replacing the exact column-generation recourse solver with a truncated-path approximation. Specifically, \DBTP solves the same disaggregated Benders master problem as \DBCG, but restricts each scenario subproblem to a fixed pool of $P=8$ shortest feasible paths. This approximation eliminates pricing iterations and substantially reduces computational effort while preserving the joint optimization structure between first-stage capacity reservation and second-stage routing decisions.

Because the truncated subproblem no longer represents the full recourse problem, the resulting Benders bounds are not valid for the original model. We therefore apply the multi-fidelity certification procedure, which performs additional pricing passes to restore dual feasibility with respect to the full path set and obtain a valid upper bound for the original problem; the validity of this bound is established in Appendix~\ref{app:proofs}. The resulting certificate gap, $(\mathrm{UB}-z_{\mathrm{DBTP}})/\mathrm{UB}$, quantifies the remaining optimality uncertainty introduced by the truncated recourse approximation.

Unlike $\mathrm{gap}_{\mathrm{Bend}}$, which measures the convergence status of the exact \DBCG decomposition under a finite computational budget, the certificate gap measures the loss caused by approximating the recourse representation. Together, these two metrics distinguish algorithmic termination effects from approximation effects in scalable solution strategies.

\begin{table}[!h]
\centering
\caption{\REVII{Performance of \DBTP under exact recourse evaluation.}}
\label{tab:exp_heuristics_exact}
\begin{tabular}{cccccc}
\toprule
$|\mathcal V|$
& \DBTP
& Greedy
& $\mathrm{Imp}_{\mathrm{Greedy}}(\%)$
& $\mathrm{gap}_{\mathrm{cert}}(\%)$
& Time (s)
\\
\midrule
8  & 74.1  & 52.5  & +41.1 & 22.3 & 121\\
12 & 97.5  & 64.3  & +51.8 & 22.3 & 121\\
16 & 107.1 & 74.7  & +43.4 & 23.8 & 121\\
20 & 121.7 & 82.1  & +48.2 & 24.7 & 121\\
\bottomrule
\end{tabular}
\end{table}

\begin{table}[!htbp]
\centering
\caption{\REVII{Scalability of \DBTP under truncated-path recourse evaluation.}}
\label{tab:exp_heuristics_trunc}
\begin{tabular}{cccccc}
\toprule
$|\mathcal V|$
& \DBTP
& Greedy
& $\mathrm{Imp}_{\mathrm{Greedy}}(\%)$
& $\mathrm{gap}_{\mathrm{cert}}(\%)$
& Time (s)
\\
\midrule
30 & 125.3 & 74.2  & +68.9 & 39.5 & 603\\
40 & 124.4 & 83.4  & +49.2 & 45.6 & 604\\
50 & 179.2 & 124.2 & +44.2 & 44.9 & 605\\
60 & 214.5 & 152.9 & +40.3 & 47.2 & 605\\
70 & 224.6 & 184.5 & +21.8 & 57.6 & 607\\
\bottomrule
\end{tabular}
\end{table}

Tables~\ref{tab:exp_heuristics_exact} and~\ref{tab:exp_heuristics_trunc} evaluate the scalability of the proposed Benders framework when the exact column-generation recourse solver is replaced by a truncated-path approximation. Table~\ref{tab:exp_heuristics_exact} considers instances for which the original LP recourse remains computationally evaluable, allowing the solution quality of \DBTP to be assessed against the original recourse model. For larger networks with larger scenario sets, request volumes, and path pools, exact recourse evaluation becomes computationally prohibitive. Table~\ref{tab:exp_heuristics_trunc} reports results under a common truncated-$P$ operational recourse evaluation. Therefore, objective values across the two tables should not be directly compared; each table should be interpreted within its corresponding evaluation setting. 

The purpose of this experiment is to examine whether replacing the exact column-generation recourse solver with a truncated-path approximation can extend the scalability of the proposed Benders framework without substantially sacrificing planning quality. We evaluate \DBTP from three perspectives: solution quality relative to the Greedy benchmark, scalability to larger networks, and the reliability of the resulting approximation. As a result, we identify the following findings: 
First, \DBTP preserves the main planning advantages of \DBCG while reducing the computational burden of repeated pricing iterations. Under exact recourse evaluation, \DBTP improves over the greedy allocation baseline by $41$--$52\%$, showing that a small truncated path pool can retain substantial value from joint capacity reservation and routing optimization. For moderate network sizes, its solutions remain close to those obtained by the exact \DBCG. 
Second, \DBTP enables the framework to handle larger networks. Under the truncated-$P$ evaluation, the improvement over greedy remains above $40\%$ for $|\mathcal V|\le60$. The decline at $|\mathcal V|=70$ reflects the limitation of using a fixed path-pool size, as larger networks introduce additional valuable routing alternatives. Because exact recourse evaluation becomes computationally prohibitive beyond $|\mathcal V|=20$ (Table~\ref{tab:exp_heuristics_exact}), we cannot directly verify the magnitude of capacity over-reservation at these larger scales; Section~\ref{sec:exp_pool_truncation} shows this effect reaches $26$--$67\%$ at $|\mathcal V|\le30$ under the same restricted-pool mechanism, so the $|\mathcal V|=40$--$70$ plans reported here likely carry a comparable, unverified degree of over-reservation.
Third, the multi-fidelity procedure provides an explicit quality assessment for \DBTP. The certificate gap increases from approximately $22$--$25\%$ for small instances to larger values at larger scales, indicating that a fixed path pool becomes less representative as routing flexibility grows. These certificates can be tightened through additional pricing passes, providing a controllable trade-off between computational efficiency and solution guarantees.

\subsection{Sensitivity Analysis}
\label{sec:sensitivity_analysis}
These experiments are designed to isolate how uncertainty, topology, and reservation economics affect capacity reservation decisions.

\REVII{\subsubsection{Impact of Network Connectivity on Scalability}
\label{sec:path_degeneracy}

The scalability of \DBCG depends not only on the size of the corridor--time network, but also on its connectivity. In the proposed formulation, each request is represented by a set of feasible time-respecting paths, and the number of such paths grows rapidly as network connectivity increases. Consequently, highly connected networks may introduce many near-equivalent routing alternatives, enlarging the restricted master problem and making the column-generation process more challenging. This experiment investigates how network connectivity affects the computational performance of the CG solver and identifies the structural source of this sensitivity.

We vary the average node degree while keeping the network size ($|\mathcal V|=12$) and other experimental parameters fixed. Table~\ref{tab:connectivity} reports the resulting CG iterations, generated columns, pricing time, and total solution time. It indicates that increasing connectivity substantially increases the computational burden of column generation. As the average degree rises from $2$ to $6$, the number of CG iterations increases from $11$ to $66$, while the generated column pool grows from $138$ to $1653$. The resulting total solution time increases from $12.7$s to $105.2$s. 
}

\begin{table}[!h]
\centering
\caption{Effect of network connectivity on column-generation performance.}
\label{tab:connectivity}
\begin{tabular}{ccccc}
\toprule
Average degree
& CG iterations
& Generated columns
& Pricing time (s)
& Total time (s)\\
\midrule
2.0 & 11 & 138 & 4.3 & 12.7\\
3.0 & 18 & 296 & 7.5 & 21.6\\
4.0 & 31 & 621 & 12.1 & 39.8\\
5.0 & 47 & 1018 & 18.6 & 67.5\\
6.0 & 66 & 1653 & 28.7 & 105.2\\
\bottomrule
\end{tabular}
\end{table}

\REVII{
The observed trend is explained by the increasing number of near-equivalent time-feasible paths in denser networks. These alternative paths often have similar reduced costs, leading to stronger path degeneracy in the restricted master problem and slower dual stabilization. Importantly, this effect is geometric rather than combinatorial: it arises from multiple similar routing alternatives rather than symmetry among vehicle or route assignments. The pricing problem itself remains computationally tractable because it is solved as a longest reduced-cost path problem on the acyclic time-expanded network, with complexity $O((|\mathcal V|+|\mathcal L|)|\mathcal T|)$ per request.}

\REVII{\subsubsection{Impact of Routing Flexibility on Strategic Capacity Planning}
\label{sec:exp_pool_truncation}

An important structural question is how operational routing flexibility affects strategic capacity planning. In the proposed stochastic corridor--time capacity planning problem, capacity is reserved before demand uncertainty is resolved, whereas routing decisions are adapted after demand realization. Therefore, the value of reserved corridor--time capacity depends not only on the amount of capacity available, but also on the ability of the recourse model to exploit alternative time-respecting paths and share capacity across requests. To quantify the strategic value of routing flexibility, 
we compare two self-consistent planning regimes. In the \textit{complete} regime, capacity is planned and evaluated with each scenario recourse solved exactly by column generation over the full path set. In the \textit{restricted} regime, capacity is planned and evaluated with every request confined to a fixed pool of $P=8$ shortest feasible paths. The comparison is therefore between two internally coherent operating models, one with full routing flexibility in both planning and execution and one with restricted flexibility in both, rather than a decomposition of the effect into a first-stage and a second-stage component. The reservation expenditure is a first-stage quantity and is
directly comparable across regimes; the objective difference additionally reflects the narrower recourse used to score the restricted plan and should be read as an upper estimate of the loss attributable to the first-stage decision alone. This experiment tests the independently generated instances ensembling at each $|\mathcal V|$.}

\begin{table}[!h]
\centering
\caption{\REVII{Impact of recourse flexibility on strategic capacity planning:
expected objective values and capacity reservation costs under complete and
restricted path representations.}}
\label{tab:exp_pool_truncation}
\begin{tabular}{ccccccc}
\toprule
$|\mathcal V|$
& \multicolumn{2}{c}{Expected objective}
& Profit loss
& \multicolumn{2}{c}{Capacity cost}
& Cost increase
\\
&
Complete
& Restricted
& (\%)
&
Complete
& Restricted
& (\%)
\\
\midrule
12 &108.1&90.0&16.7&16.9&21.2&25.7\\
14 &108.2&89.7&17.1&17.2&23.0&34.0\\
16 &117.1&97.9&16.3&19.3&24.6&27.5\\
20 &132.0&105.9&19.8&23.6&32.9&39.1\\
25 &144.0&110.6&23.2&24.8&41.5&67.2\\
30 &153.5&121.9&20.6&31.6&47.2&49.6\\
\bottomrule
\end{tabular}
\end{table}

\REVII{The results demonstrate the strategic value of recourse flexibility in stochastic corridor--time capacity planning. Restricting the path representation reduces the expected objective value by approximately $16$--$23\%$ across all tested instances, while increasing capacity reservation cost by $26$--$67\%$. Thus, limiting routing flexibility not only reduces operational effectiveness after demand realization, but also distorts the first-stage capacity reservation decision. The mechanism is that a richer recourse path set allows realized requests to exploit alternative corridor--time combinations and share reserved capacity more effectively. When feasible routing alternatives are removed, requests become more dependent on specific capacity units, causing the planner to reserve additional capacity to hedge against congestion. However, this additional investment cannot fully compensate for the loss of routing adaptability, leading to lower expected profit. 

In a word,  these results show that routing flexibility is an integral component of strategic capacity planning rather than merely an operational detail. The capacity value of a corridor--time reservation depends critically on the future ability to reallocate demand through alternative routes.}

\REVII{
\subsubsection{Effect of Route Overlap on Decomposition Convergence}
\label{sec:exp_structural}

Problem difficulty in stochastic corridor--time capacity planning is driven not only by network size, but also by the congestion structure induced by competing requests. Route overlap determines how strongly different requests rely on shared corridor--time resources. On one hand, when overlap is low, requests are spatially separated and scenario subproblems tend to generate similar capacity values. On the other hand, when overlap is high, many requests compete for the same limited resources, creating stronger coupling and more diverse scenario-specific marginal values of capacity.

This structural variation is particularly relevant for Benders decomposition because the quality of an optimality cut depends on how accurately it represents the scenario-specific value of reserved capacity. To isolate the benefit of scenario disaggregation, we compare \DBCG and \ABCG, which use the same column-generation recourse solver and stabilization strategies but differ only in whether scenario cuts are retained separately or aggregated into a single expected cut. 
Specifically, we vary the route-overlap parameter $\eta_{\mathrm{ov}}$ while fixing $|\mathcal V|=20$, $|\mathcal T|=40$, $K=30$, $|\Omega|=6$, $P=8$, $\tau_{\mathrm{cap}}=0.55$, and $W=0.25$ with a one-hour time limit.
}

\begin{table}[!h]
\centering
\caption{\REVII{Effect of route overlap on the convergence performance of \DBCG and \ABCG.}}
\label{tab:exp2_overlap}
\begin{tabular}{ccccc}
\toprule
& \multicolumn{2}{c}{\DBCG}
& \multicolumn{2}{c}{\ABCG}
\\
\cmidrule(lr){2-3}\cmidrule(lr){4-5}
$\eta_{\mathrm{ov}}$
& Objective
& Gap (\%)
& Objective
& Gap (\%)
\\
\midrule
0.40
& 89.25
& 10.4
& 89.67
& 10.7
\\
0.60
& 97.21
& 10.3
& 94.09
& 14.0
\\
0.95
& 103.43
& 15.1
& 101.53
& 18.2
\\
\bottomrule
\end{tabular}
\end{table}

\REVII{
Table~\ref{tab:exp2_overlap} shows that the advantage of scenario-disaggregated cuts becomes more pronounced as route overlap increases. The reported gaps are the residual $\mathrm{gap}_{\mathrm{Bend}}$ values defined in Section~\ref{sec:benchmarks}. 
Under low overlap, \DBCG and \ABCG achieve similar convergence behaviour because requests compete weakly for shared corridor--time resources and scenario-specific capacity values are relatively aligned. As overlap increases, the residual gap of \ABCG grows more rapidly than that of \DBCG, indicating that aggregated cuts become less effective when different scenarios induce heterogeneous congestion patterns. 

The mechanism is that high route overlap amplifies the difference between scenario-specific marginal capacity values. In \ABCG, these distinct dual signals are averaged into a single expected cut, which provides a weaker approximation of the recourse function at the current master solution. In contrast, \DBCG preserves scenario-level information and can adapt capacity allocation to heterogeneous congestion conditions, leading to more effective master updates.
}

\subsubsection{Scenario Sample Size and SAA Solution Stability}
\label{sec:exp_saa}

\REVII{
Because the stochastic program relies on a finite scenario sample to approximate the underlying demand distribution, the resulting capacity plan may depend on the scenarios used during optimization. We therefore evaluate the statistical stability of SAA-based planning using independent training and testing scenario sets. The training size varies as $|\Omega_{\mathrm{in}}|\in\{10,25,50,100\}$, while $|\Omega_{\mathrm{out}}|=200$ is fixed for out-of-sample evaluation. For each setting, 20 independent replications are performed. All SAA problems are solved using the same Benders termination criterion to isolate the effect of scenario sampling from algorithmic convergence. Stability is evaluated through out-of-sample profit, generalization gap, and capacity-plan variability. }

\begin{table}[!htbp]
\centering
\caption{SAA out-of-sample stability for a representative small instance ($|\mathcal V|=12$)}
\begin{tabular}{ccccc}
\toprule
$|\Omega_{\mathrm{in}}|$
&
$\mathbb E[V_{\mathrm{out}}]$
&
Std$(V_{\mathrm{out}})$
&
$\Delta_{\mathrm{gen}}$
&
CV$(\|x\|_1)$
\\
\midrule
10
&94.6
&8.5
&5.8
&0.31
\\

25
&98.7
&5.2
&3.1
&0.19
\\

50
&101.2
&3.1
&1.6
&0.12
\\

100
&102.0
&2.0
&0.8
&0.08
\\
\bottomrule
\end{tabular}
\end{table}

\REVII{
At $|\mathcal V|=12$, all stability measures improve as the training sample size increases. The average out-of-sample profit increases from $94.6$ to $102.0$, while its standard deviation decreases from $8.5$ to $2.0$, indicating that larger scenario samples produce both better and more predictable capacity plans. The generalization gap $\Delta_{\mathrm{gen}}$, defined as the relative difference between the in-sample SAA objective and the out-of-sample evaluation, decreases from $5.8\%$ to $0.8\%$, showing that small scenario samples lead to more sample-dependent planning decisions. The coefficient of variation of total reserved capacity, $\mathrm{CV}(\|x\|_1)$, follows the same trend and decreases from $0.31$ to $0.08$, demonstrating that scenario enrichment stabilizes not only the evaluated profit but also the resulting capacity allocation. However,  such improvements become smaller beyond $|\Omega_{\mathrm{in}}|=50$, suggesting that moderate scenario samples already provide a favorable balance between solution stability and computational effort for the tested instances.
}


\medskip
\section{Case Study: Capacity Planning for Urban Low-Altitude Logistics in Shenzhen}\label{sec:case study}
The preceding experiments used synthetic corridor networks to isolate specific structural drivers of algorithmic and planning performance under controlled conditions. This section grounds those findings in a real urban airspace topology.

\subsection{Shenzhen UAV Network}
\label{sec:case_network}

Shenzhen has become one of the largest commercial UAV logistics testbeds in China, with hundreds of designated routes and distributed take-off and landing facilities. At this operational scale, urban airspace functions as a regulated infrastructure system rather than an open-access environment: corridors are explicitly designated, stations are geographically distributed, and operations are centrally coordinated under strict safety separation rules.
Corridor--time resources are therefore finite, regulated, and potentially congested.

This operational context aligns directly with the planning problem studied in this paper. Capacity must be reserved in advance on designated corridors, while stochastic delivery requests compete for limited corridor--time slots during execution. Rather than reproducing the full-scale Shenzhen UAV traffic system, the case study serves as a structural validation of the proposed capacity planning framework under a realistic urban airspace topology.

\textbf{Network construction.} Figure~\ref{fig:shenzhen_longhua} shows 13 publicly reported UAV take-off and landing stations in Longhua District, Shenzhen.  These sites form the node set of the corridor network in our case study.
We construct directed arcs between any pair of stations whose geographic distance does not exceed the operational flight radius of the UAV fleet, further filtered by geographic feasibility and operational regulations; the resulting candidate corridor set is a sparse directed network, not a complete graph.
The resulting airspace is modeled as a directed graph $G=(\mathcal{V},E)$, where $\mathcal{V}$ denotes stations and $E$ denotes feasible flight corridors.
Time is discretized into operational intervals consistent with regulatory separation standards.
A delivery request from origin to destination occupies a sequence of corridor--time resources along a feasible path in the time-expanded network.
Each corridor--time pair admits a limited number of flight slots, representing regulated capacity units.
These units constitute the first-stage decision variables in the capacity planning problem.

\begin{figure}[htbp]
    \centering
    \includegraphics[width=0.6\linewidth]{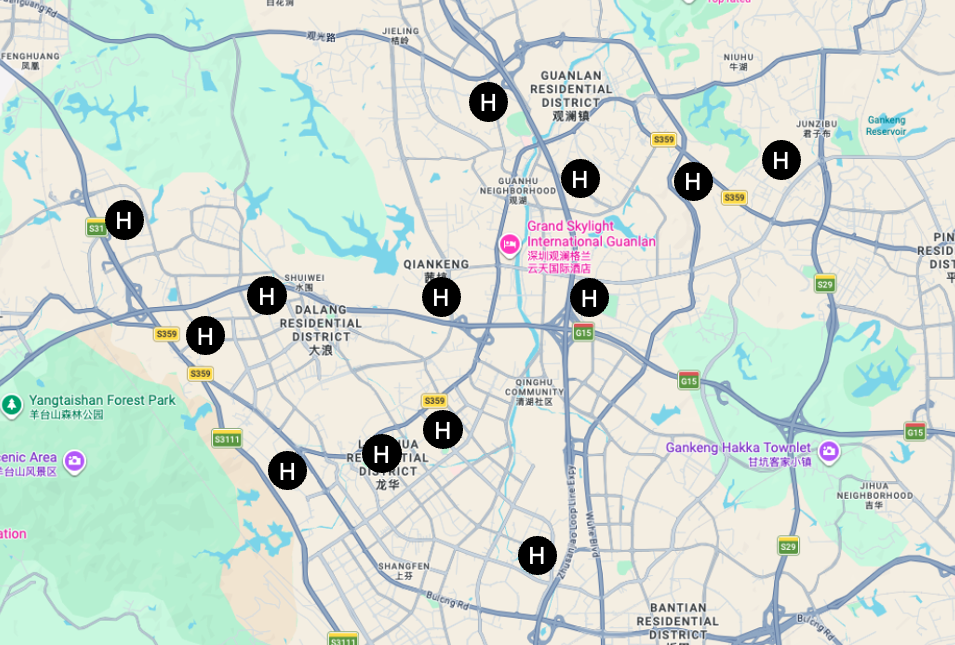}
    \caption{Spatial distribution of 13 operational UAV stations in Longhua District, Shenzhen.}
    \label{fig:shenzhen_longhua}
\end{figure}

\subsection{Capacity--Demand Trade-offs under Stochastic Planning}
\label{sec:case_results}

\textbf{Demand characteristics}.
Demand is synthetically generated to capture key characteristics of urban UAV logistics, including spatially distributed origins and destinations and time-varying request arrivals.
To represent demand uncertainty, six representative demand scenarios are generated to capture plausible spatial and temporal variability in urban delivery requests.
Each scenario specifies a realization of requests over the planning horizon and is used in the second-stage recourse problem to evaluate first-stage capacity decisions.

To analyze system behavior under varying congestion intensity, we scale the baseline demand to construct low, medium, and high load regimes, corresponding to approximately $10$, $20$, and $30$ requests per scenario, respectively, over a planning horizon of $|\mathcal T|=20$ periods.
Higher load levels increase the number of concurrent requests and intensify competition for corridor--time slots.
We further introduce a capacity cost multiplier $\kappa$ that scales the unit reservation cost of corridor--time capacity to $0.05\kappa$ per unit.
The parameter $\kappa$ governs the trade-off between upfront investment and recourse revenue: higher values increase reservation cost and reduce optimal capacity levels.
The planner must therefore determine the amount of corridor--time capacity to reserve in advance, balancing expected routing revenue against reservation expenditure under stochastic demand.

\begin{table}[!h]
\centering
\caption{\REVII{Capacity allocation and acceptance performance of the proposed \DBCG under varying demand loads and cost multipliers $\kappa$.}}
\label{tab:benders_core}
\begin{tabular}{ccccc}
\toprule
Load & $\kappa$ & Objective & Accepted Requests & Capacity Cost \\
\midrule
High   & 0.5 & $49.81\pm1.12$ & 16.53 & 8.96 \\
High   & 1.0 & $40.13\pm1.35$ & 15.88 & 16.42 \\
High   & 2.0 & $25.70\pm0.98$ & 12.42 & 20.34 \\
High   & 4.0 & $8.66\pm0.42$  & 3.11  & 6.55 \\
\midrule
Medium & 0.5 & $31.26\pm0.86$ & 10.23 & 6.52 \\
Medium & 1.0 & $24.03\pm1.04$ & 9.17  & 12.28 \\
Medium & 2.0 & $12.88\pm0.51$ & 5.43  & 14.84 \\
Medium & 4.0 & $4.78\pm0.29$  & 1.89  & 4.65 \\
\midrule
Low    & 0.5 & $14.65\pm0.74$ & 5.79  & 3.80 \\
Low    & 1.0 & $10.31\pm0.63$ & 4.54  & 6.96 \\
Low    & 2.0 & $5.13\pm0.21$  & 2.30  & 6.82 \\
Low    & 4.0 & $1.50\pm0.18$  & 0.60  & 2.71 \\
\bottomrule
\end{tabular}
\end{table}

\REVII{\textbf{Computational results}. 
Table~\ref{tab:benders_core} reports the solutions obtained by \DBCG on the Longhua case under different demand loads and reservation cost multipliers $\kappa$. Each entry is reported as mean $\pm$ standard deviation over three independent demand seeds. ``Capacity Cost'' denotes the total reservation expenditure $\boldsymbol c^\top\boldsymbol x^*$, while ``Accepted Requests'' reports the average number of accepted requests across scenarios and seeds. The results reveal two key patterns.

First, increasing $\kappa$ systematically reduces both accepted requests and expected profit across all demand regimes, reflecting the trade-off between reservation expenditure and service revenue. From $\kappa=0.5$ to $\kappa=4$, the objective decreases by approximately $83$--$90\%$, but the planner does not completely withdraw from the market even under the highest reservation cost. For example, under High load, the solution still serves approximately $3.1$ requests per scenario at $\kappa=4$, indicating that the model selectively preserves requests with sufficiently high marginal value. The reported capacity cost is non-monotonic because it combines both the unit reservation price and the amount of reserved capacity: it increases initially as the price multiplier dominates the reduction in reserved units, and decreases only when demand becomes sufficiently unattractive at high reservation costs.

Second, demand intensity determines the scale of profitable capacity investment. Under the lower cost regimes ($\kappa\leq1$), High load produces the largest reservation and acceptance levels (approximately $16$ accepted requests out of $30$ per scenario), followed by Medium and Low loads. This pattern indicates that congestion increases the marginal value of corridor--time capacity: when more requests compete for limited resources, additional reservations generate greater operational benefit.

Finally, we verify that the profitability pre-screening rule~\eqref{eq:prescreen} is inactive in the economically relevant regime of this case. It removes no requests when $\kappa\in\{0.5,1\}$, where the reservation cost remains below the minimum request revenue, and only becomes active in the high-cost regimes where the optimal plan already substantially reduces service volume. This confirms that the screening rule does not affect the practical planning decisions observed in the operating region.
}

\subsection{Spatial Structure of Strategic Capacity Allocation}
\label{sec:case_spatial}

We visualize the resulting corridor--time capacity allocation solutions in Figures~\ref{fig:cap_k2_load}--\ref{fig:cap_med_method}. Across all demand levels, cost parameters, and solution methods, capacity consistently concentrates on a small subset of central corridors, forming a stable backbone network.

Figure~\ref{fig:cap_k2_load} fixes $\kappa=2$ and varies demand load. 
Under high load, capacity is heavily allocated to corridors connecting central stations, with additional peripheral links activated to absorb spillover demand. 
As load decreases, the main reserved corridors remain largely unchanged, while the amount of reserved capacity decreases and peripheral links disappear.
Demand scaling therefore primarily affects allocation magnitude rather than corridor selection.

\begin{figure}[!htbp]
    \centering
    \includegraphics[width=\linewidth]{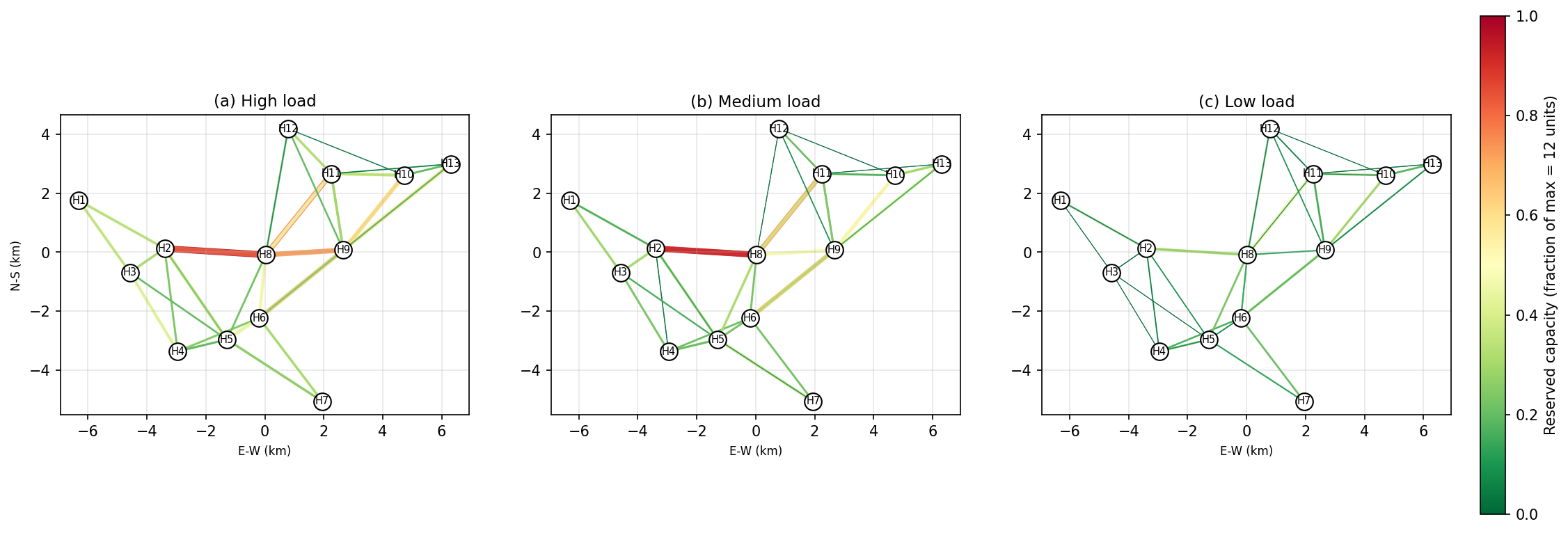}
    \caption{Benders capacity plans under $\kappa=2$ for different demand loads: (a) High, (b) Medium, (c) Low. Edge color indicates log-scaled relative capacity intensity under a common normalization rule.}
    \label{fig:cap_k2_load}
\end{figure}
\begin{figure}[!htbp]
    \centering
    \includegraphics[width=\linewidth]{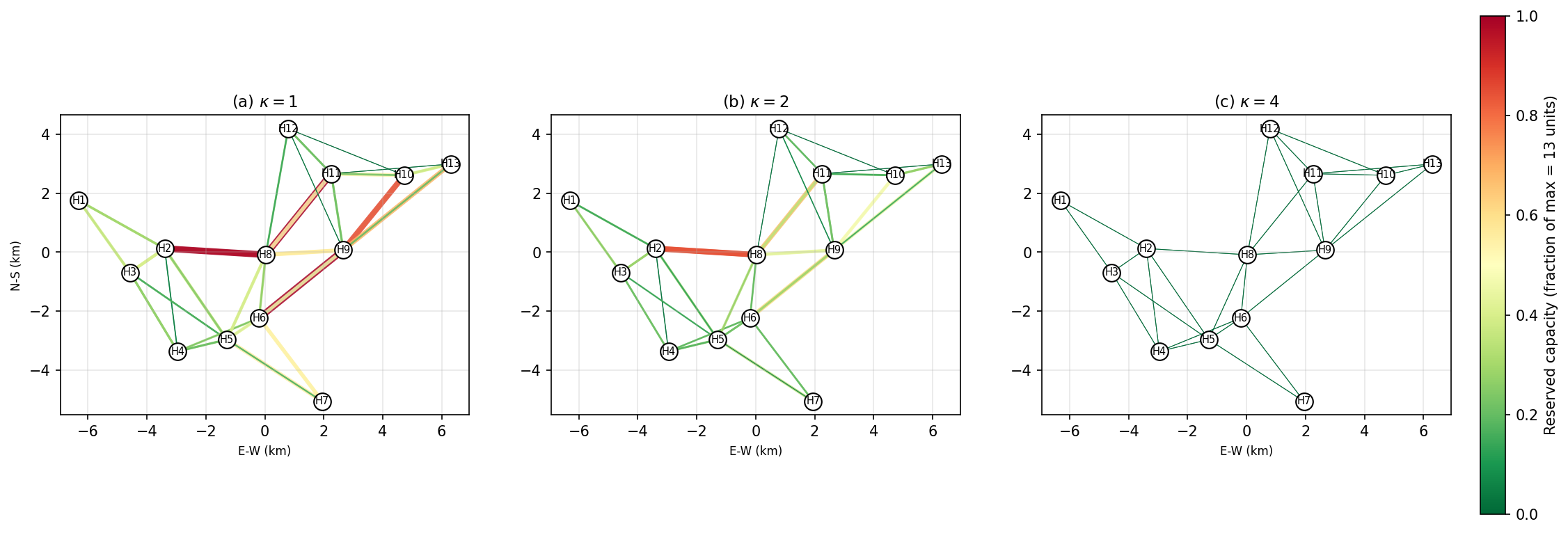}
    \caption{Benders capacity plans under Medium load for $\kappa=1,2,4$. Edge color indicates log-scaled relative capacity intensity under a common normalization rule.}
    \label{fig:cap_med_kappa}
\end{figure}

Figure~\ref{fig:cap_med_kappa} fixes the demand level and varies the reservation cost multiplier $\kappa$. As $\kappa$ increases, capacity allocation gradually contracts from peripheral corridors toward the central backbone, indicating that cost increases first eliminate marginally valuable capacity while preserving structurally important links. At $\kappa=4$, the reserved capacity intensity decreases substantially across the network, consistent with the reduced but nonzero objective value reported in Table~\ref{tab:benders_core}. Nevertheless, the same backbone corridors remain active, suggesting that cost primarily affects how much capacity is reserved rather than which corridors are reserved.

Figure~\ref{fig:cap_med_method} compares capacity plans obtained by different solution methods under identical parameters. The two Benders variants produce similar spatial patterns, concentrating capacity on the same backbone region, which is consistent with their close objective values on this instance. In contrast, greedy allocation concentrates capacity more aggressively on high-utilization corridors rather than distributing it across coordinated routing alternatives. This difference highlights that network structure determines where capacity is valuable, while stochastic optimization determines how capacity should be distributed among competing corridor--time resources.

\begin{figure}[!htbp]
    \centering
    \includegraphics[width=\linewidth]{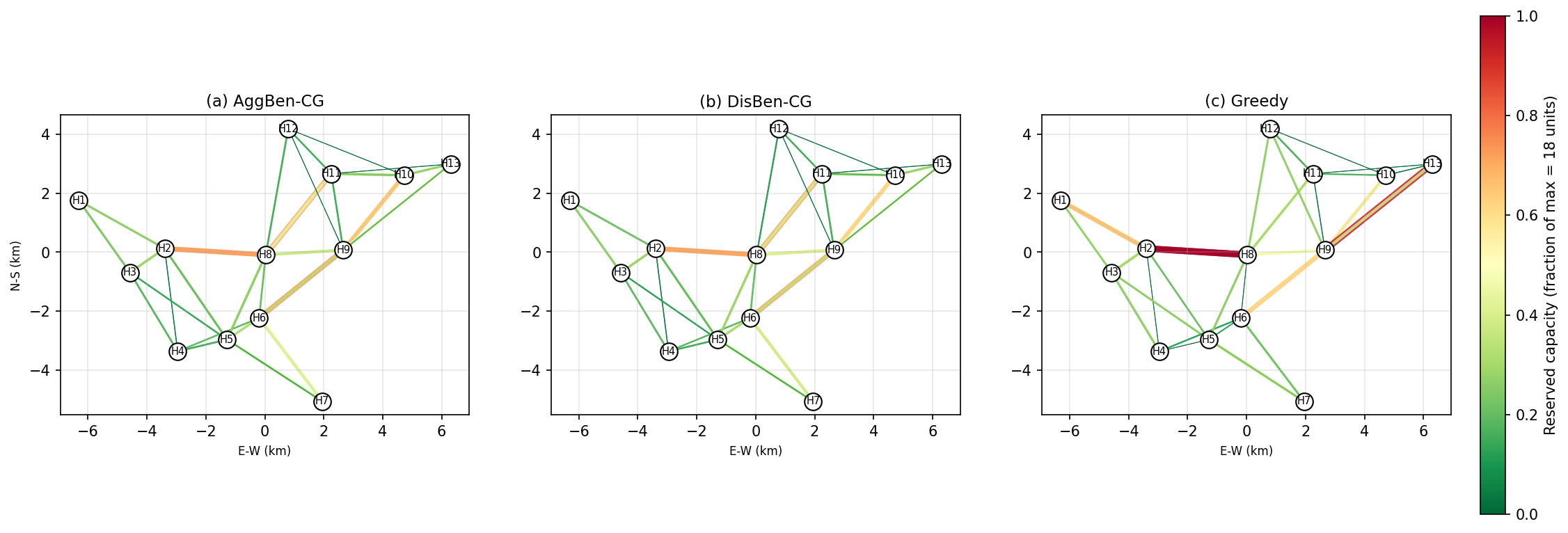}
    \caption{Capacity plans under Medium load and $\kappa=1$ for different methods: (a) \ABCG, (b) \DBCG, and (c) greedy.}
    \label{fig:cap_med_method}
\end{figure}

\REVII{
The spatial patterns reveal that strategic capacity allocation is primarily shaped by the interaction between network geometry and recourse flexibility. Because flight-range constraints produce a sparse corridor network, a limited number of links repeatedly appear in feasible routes across many origin--destination pairs. These structurally central corridors therefore experience persistent recourse pressure and attract capacity investment across different demand levels and cost regimes. Stochastic demand affects the amount of capacity reserved on these corridors, but has a limited impact on which corridors form the backbone of the optimal plan. 

Overall, the Shenzhen case study demonstrates the practical implications of stochastic corridor--time capacity planning in a realistic urban airspace setting. Effective investment requires balancing reservation cost against future demand opportunities: capacity expands when congestion creates sufficient marginal value and contracts when reservation becomes expensive. 
Moreover, coordinated planning preserves routing flexibility and concentrates investment on strategically important corridors, providing a more reliable basis for managing regulated low-altitude logistics networks than purely utilization-driven allocation.}

\medskip
\section{Conclusion}\label{sec:conclusion}

This paper studies stochastic corridor--time capacity planning for regulated low-altitude airspace systems with uncertain delivery demand. Unlike traditional air traffic management approaches that primarily focus on operational adjustments under fixed infrastructure, we formulate capacity planning as a two-stage stochastic optimization problem on a time-expanded corridor network. The planner first reserves corridor--time capacity before demand realization, and subsequently accepts and routes requests subject to the reserved resources. 
The key structural features of low-altitude airspace operations are spatially distributed capacity, temporal rigidity, and competition among requests for shared corridor--time resources.

\REVII{
To solve the resulting stochastic program, we develop \DBCG, a stabilized Benders decomposition in which each scenario recourse problem is solved through column generation. The algorithm preserves scenario-specific recourse information through disaggregated cuts, dual-feasible cut repair, and integer recovery to improve robustness. 
We further develop \DBTP, a scalable truncated-path extension that replaces exact recourse evaluation with a certified approximation for larger networks. The computational analysis shows that \DBCG provides consistently tighter convergence than \ABCG, while the truncated-path framework extends the applicability of stochastic capacity planning beyond the range where exact recourse optimization is computationally practical.
}

\REVII{
The computational study reveals several structural insights. First, preserving scenario-specific congestion information is critical for scalable stochastic planning. \DBCG achieves substantially tighter convergence than \ABCG under the same computational budget, with the advantage becoming more pronounced as scenario interactions and route overlap increase. 
Second, routing flexibility is a strategic component of capacity planning. Restricting the available path set reduces realized profit and induces excessive capacity reservation, showing that capacity investment decisions must anticipate future routing alternatives rather than optimize against a restricted operational representation. 
Third, network connectivity and congestion structure influence both solution quality and computational difficulty. Higher route overlap increases the value of shared corridor--time resources while also creating more challenging recourse structures for decomposition algorithms. 
Fourth, SAA experiments demonstrate diminishing returns from increasing the training scenario set. At the tested scale, most of the improvement in out-of-sample performance and capacity-plan stability is achieved with a moderate number of scenarios.
The Shenzhen case study further validates these findings in a realistic urban airspace topology. The optimal capacity plan consistently concentrates investment on structurally important corridors, while changes in demand and reservation cost primarily affect how much capacity is reserved rather than which corridors are reserved. These results highlight that effective low-altitude airspace management requires coordinated stochastic capacity planning that balances future demand uncertainty, routing flexibility, and infrastructure investment.
}

Several avenues for future research remain. The current framework assumes linear reservation costs and a fixed corridor network. Relaxing these assumptions to allow nonlinear pricing, endogenous network expansion, or evolving airspace infrastructure would provide a richer representation of long-term planning decisions. Future work could also integrate rolling-horizon planning with online demand updates, as well as incorporate operational considerations such as vehicle endurance, battery management, payload limitations, safety separation, and heterogeneous UAV capabilities. These extensions would strengthen the connection between strategic corridor reservation and real-world low-altitude airspace operations.

\section*{Data Availability Statement}
The data supporting the findings of this study were derived from publicly
available sources. The Shenzhen corridor network and demand instances were
constructed by the authors from these sources and are available from the
corresponding author upon reasonable request.

\section*{Conflict of Interest Statement}
The authors declare no conflicts of interest.



\bibliography{ref}

@article{lium2009study,
  title={A study of demand stochasticity in service network design},
  author={Lium, Arnt-Gunnar and Crainic, Teodor Gabriel and Wallace, Stein W},
  journal={Transportation Science},
  volume={43},
  number={2},
  pages={144--157},
  year={2009},
  publisher={INFORMS}
}

@article{zografos2017increasing,
  title={Increasing airport capacity utilisation through optimum slot scheduling: review of current developments and identification of future needs.},
  author={Zografos, Konstantinos and Madas, Michael and Androutsopoulos, Konstantinos},
  journal={Journal of Scheduling},
  volume={20},
  number={1},
  year={2017}
}

@article{bertsimas2016fairness,
  title={Fairness and collaboration in network air traffic flow management: An optimization approach},
  author={Bertsimas, Dimitris and Gupta, Shubham},
  journal={Transportation Science},
  volume={50},
  number={1},
  pages={57--76},
  year={2016},
  publisher={INFORMS}
}

@article{merkert2021will,
  title={Will It Fly? Adoption of the road pricing framework to manage drone use of airspace},
  author={Merkert, Rico and Beck, Matthew J and Bushell, James},
  journal={Transportation Research Part A: Policy and Practice},
  volume={150},
  pages={156--170},
  year={2021},
  publisher={Elsevier}
}

@article{li2022traffic,
  title={Traffic management and resource allocation for UAV-based parcel delivery in low-altitude urban space},
  author={Li, Ang and Hansen, Mark and Zou, Bo},
  journal={Transportation Research Part C: Emerging Technologies},
  volume={143},
  pages={103808},
  year={2022},
  publisher={Elsevier}
}

@article{he2025air,
  title={Air corridor planning for urban drone delivery: Complexity analysis and comparison via multi-commodity network flow and graph search},
  author={He, Xinyu and Li, Lishuai and Mo, Yanfang and Sun, Zhankun and Qin, S Joe},
  journal={Transportation Research Part E: Logistics and Transportation Review},
  volume={193},
  pages={103859},
  year={2025},
  publisher={Elsevier}
}

@article{zhou2020resilient,
  title={Resilient uav traffic congestion control using fluid queuing models},
  author={Zhou, Jiazhen and Jin, Li and Wang, Xiao and Sun, Dengfeng},
  journal={IEEE Transactions on Intelligent Transportation Systems},
  volume={22},
  number={12},
  pages={7561--7572},
  year={2020},
  publisher={IEEE}
}

@article{carraminana2025fair,
  title={A Fair and Congestion-Aware Flight Authorization Framework for Unmanned Traffic Management},
  author={Carrami{\~n}ana, David and Besada, Juan A and Bernardos, Ana M},
  journal={Aerospace},
  volume={12},
  number={10},
  pages={881},
  year={2025},
  publisher={MDPI}
}

@article{kunnen2023cross,
  title={Cross-border capacity planning in air traffic management under uncertainty},
  author={K{\"u}nnen, Jan-Rasmus and Strauss, Arne K and Ivanov, Nikola and Jovanovi{\'c}, Radosav and Fichert, Frank and Starita, Stefano},
  journal={Transportation Science},
  volume={57},
  number={4},
  pages={999--1018},
  year={2023},
  publisher={INFORMS}
}

@article{starita2020air,
  title={Air traffic control capacity planning under demand and capacity provision uncertainty},
  author={Starita, Stefano and Strauss, Arne K and Fei, Xin and Jovanovi{\'c}, Radosav and Ivanov, Nikola and Pavlovi{\'c}, Goran and Fichert, Frank},
  journal={Transportation Science},
  volume={54},
  number={4},
  pages={882--896},
  year={2020},
  publisher={INFORMS}
}

@article{taghavi2016multi,
  title={A multi-stage stochastic programming approach for network capacity expansion with multiple sources of capacity},
  author={Taghavi, Majid and Huang, Kai},
  journal={Naval Research Logistics (NRL)},
  volume={63},
  number={8},
  pages={600--614},
  year={2016},
  publisher={Wiley Online Library}
}

@article{rahmaniani2017benders,
  title={The Benders decomposition algorithm: A literature review},
  author={Rahmaniani, Ragheb and Crainic, Teodor Gabriel and Gendreau, Michel and Rei, Walter},
  journal={European Journal of Operational Research},
  volume={259},
  number={3},
  pages={801--817},
  year={2017},
  publisher={Elsevier}
}

@article{crainic2021partial,
  title={Partial benders decomposition: general methodology and application to stochastic network design},
  author={Crainic, Teodor Gabriel and Hewitt, Mike and Maggioni, Francesca and Rei, Walter},
  journal={Transportation Science},
  volume={55},
  number={2},
  pages={414--435},
  year={2021},
  publisher={INFORMS}
}

@article{chen2020efficient,
  title={Efficient algorithm for finding k shortest paths based on re-optimization technique},
  author={Chen, Bi Yu and Chen, Xiao-Wei and Chen, Hui-Ping and Lam, William HK},
  journal={Transportation Research Part E: Logistics and Transportation Review},
  volume={133},
  pages={101819},
  year={2020},
  publisher={Elsevier}
}

@article{yen1971finding,
  title={Finding the k shortest loopless paths in a network},
  author={Yen, Jin Y},
  journal={management Science},
  volume={17},
  number={11},
  pages={712--716},
  year={1971},
  publisher={Informs}
}

@inproceedings{henderson2025automation,
  title={The Automation of Uncrewed Aircraft Systems Traffic Management Calibration Based on Experimental Platform Data},
  author={Henderson, Thomas C and Sacharny, David and Mello, Chad and Raley, William},
  booktitle={2025 IEEE International Conference on Robotics and Automation (ICRA)},
  pages={982--988},
  year={2025},
  organization={IEEE}
}

@article{bertsimas1998air,
  title={The air traffic flow management problem with enroute capacities},
  author={Bertsimas, Dimitris and Patterson, Sarah Stock},
  journal={Operations research},
  volume={46},
  number={3},
  pages={406--422},
  year={1998},
  publisher={INFORMS}
}

@article{lulli2007european,
  title={The European air traffic flow management problem},
  author={Lulli, Guglielmo and Odoni, Amedeo},
  journal={Transportation science},
  volume={41},
  number={4},
  pages={431--443},
  year={2007},
  publisher={INFORMS}
}

@article{guan2024exploration,
  title={The exploration and practice of low-altitude airspace flight service and traffic management in China},
  author={Guan, Xiangmin and Shi, Hongxia and Xu, Dongsong and Zhang, Binhua and Wei, Jian and Chen, Jun},
  journal={Green Energy and Intelligent Transportation},
  volume={3},
  number={2},
  pages={100149},
  year={2024},
  publisher={Elsevier}
}

@article{huang2024low,
  title={Low-altitude intelligent transportation: System architecture, infrastructure, and key technologies},
  author={Huang, Changqing and Fang, Shifeng and Wu, Hua and Wang, Yong and Yang, Yichen},
  journal={Journal of Industrial Information Integration},
  volume={42},
  pages={100694},
  year={2024},
  publisher={Elsevier}
}

@article{mohsan2023unmanned,
  title={Unmanned aerial vehicles (UAVs): Practical aspects, applications, open challenges, security issues, and future trends},
  author={Mohsan, Syed Agha Hassnain and Othman, Nawaf Qasem Hamood and Li, Yanlong and Alsharif, Mohammed H and Khan, Muhammad Asghar},
  journal={Intelligent service robotics},
  volume={16},
  number={1},
  pages={109--137},
  year={2023},
  publisher={Springer}
}

@article{mohsan2022towards,
  title={Towards the unmanned aerial vehicles (UAVs): A comprehensive review},
  author={Mohsan, Syed Agha Hassnain and Khan, Muhammad Asghar and Noor, Fazal and Ullah, Insaf and Alsharif, Mohammed H},
  journal={Drones},
  volume={6},
  number={6},
  pages={147},
  year={2022},
  publisher={Mdpi}
}

@article{yang2026comprehensive,
  title={A Comprehensive Review of Building the Resilience of Low-Altitude Logistics: Key Issues, Challenges, and Strategies.},
  author={Yang, Jingshuai and Xu, Haofeng},
  journal={Sustainability (2071-1050)},
  volume={18},
  number={1},
  year={2026}
}

@article{van2025stochastic,
  title={The stochastic dynamic postdisaster inventory allocation problem with trucks and UAVs},
  author={van Steenbergen, Robert M and van Heeswijk, Wouter JA and Mes, Martijn RK},
  journal={Transportation Science},
  volume={59},
  number={2},
  pages={360--390},
  year={2025},
  publisher={INFORMS}
}

@article{feng2025digital,
  title={Digital low-altitude airspace unmanned aerial vehicle path planning and operational capacity assessment in urban risk environments},
  author={Feng, Ouge and Zhang, Honghai and Tang, Weibin and Wang, Fei and Feng, Dikun and Zhong, Gang},
  journal={Drones},
  volume={9},
  number={5},
  pages={320},
  year={2025},
  publisher={MDPI}
}

@article{she2021efficiency,
  title={Efficiency of UAV-based last-mile delivery under congestion in low-altitude air},
  author={She, Ruifeng and Ouyang, Yanfeng},
  journal={Transportation Research Part C: Emerging Technologies},
  volume={122},
  pages={102878},
  year={2021},
  publisher={Elsevier}
}

@article{zhang2023research,
  title={Research on demand-based scheduling scheme of urban low-altitude logistics UAVs},
  author={Zhang, Honghai and Wu, Shixin and Feng, Ouge and Tian, Tian and Huang, Yuting and Zhong, Gang},
  journal={Applied Sciences},
  volume={13},
  number={9},
  pages={5370},
  year={2023},
  publisher={MDPI}
}

@article{jayaraman2003simulated,
  title={A simulated annealing methodology to distribution network design and management},
  author={Jayaraman, Vaidyanathan and Ross, Anthony},
  journal={European Journal of Operational Research},
  volume={144},
  number={3},
  pages={629--645},
  year={2003},
  publisher={Elsevier}
}

@article{venkatachalam2018two,
  title={A two-stage approach for routing multiple unmanned aerial vehicles with stochastic fuel consumption},
  author={Venkatachalam, Saravanan and Sundar, Kaarthik and Rathinam, Sivakumar},
  journal={Sensors},
  volume={18},
  number={11},
  pages={3756},
  year={2018},
  publisher={MDPI}
}

@article{faiz2024robust,
  title={A robust optimization framework for two-echelon vehicle and UAV routing for post-disaster humanitarian logistics operations},
  author={Faiz, Tasnim Ibn and Vogiatzis, Chrysafis and Liu, Jiongbai and Noor-E-Alam, Md},
  journal={Networks},
  volume={84},
  number={2},
  pages={200--219},
  year={2024},
  publisher={Wiley Online Library}
}

@article{bent2004scenario,
  title={Scenario-based planning for partially dynamic vehicle routing with stochastic customers},
  author={Bent, Russell W and Van Hentenryck, Pascal},
  journal={Operations research},
  volume={52},
  number={6},
  pages={977--987},
  year={2004},
  publisher={INFORMS}
}

@article{di2021trucks,
  title={Trucks and drones cooperation in the last-mile delivery process},
  author={Di Puglia Pugliese, Luigi and Macrina, Giusy and Guerriero, Francesca},
  journal={Networks},
  volume={78},
  number={4},
  pages={371--399},
  year={2021},
  publisher={Wiley Online Library}
}

@article{ibrahim2019bender,
  title={Bender's Decomposition for Optimization Design Problems in Communication Networks},
  author={Ibrahim, Ahmed and Dobre, Octavia A and Ngatched, Telex MN and Armada, Ana Garcia},
  journal={IEEE Network},
  volume={34},
  number={3},
  pages={232--239},
  year={2019},
  publisher={IEEE}
}

@article{stuive2024airspace,
  title={Airspace network design for urban UAV traffic management with congestion},
  author={Stuive, Leanne and Gzara, Fatma},
  journal={Transportation Research Part C: Emerging Technologies},
  volume={169},
  pages={104882},
  year={2024},
  publisher={Elsevier}
}

@article{grote2022sharing,
  title={Sharing airspace with uncrewed aerial vehicles (UAVs): Views of the general aviation (GA) community},
  author={Grote, Matt and Pilko, Aliaksei and Scanlan, James and Cherrett, Tom and Dickinson, Janet and Smith, Angela and Oakey, Andrew and Marsden, Greg},
  journal={Journal of Air Transport Management},
  volume={102},
  pages={102218},
  year={2022},
  publisher={Elsevier}
}

@article{he2024distributed,
  title={A distributed route network planning method with congestion pricing for drone delivery services in cities},
  author={He, Xinyu and Li, Lishuai and Mo, Yanfang and Huang, Jianxiang and Qin, S Joe},
  journal={Transportation Research Part C: Emerging Technologies},
  volume={160},
  pages={104536},
  year={2024},
  publisher={Elsevier}
}

@article{palmerius2024end,
  title={End-to-end drone route planning in flexible airspace design},
  author={Palmerius, Karljohan Lundin and Uggla, Alexander and Fylkner, Gustaf and Lundberg, Jonas},
  journal={Transportation Research Interdisciplinary Perspectives},
  volume={27},
  pages={101219},
  year={2024},
  publisher={Elsevier}
}

@article{guo2025planning,
  title={Planning UAM network under uncertain travelers’ preferences: A sequential two-layer stochastic optimization approach},
  author={Guo, Tao and Wu, Hao and Lu, Qing-Long and Antoniou, Constantinos},
  journal={Transportation Research Part A: Policy and Practice},
  volume={200},
  pages={104632},
  year={2025},
  publisher={Elsevier}
}

@article{chen2025hybrid,
  title={A hybrid centralized-decentralized traffic control framework for unmanned aerial vehicles in urban low-altitude airspace},
  author={Chen, Xiangdong and Li, Shen and Li, Meng},
  journal={Communications in Transportation Research},
  volume={5},
  pages={100195},
  year={2025},
  publisher={Elsevier}
}

@article{aarts2023capacity,
  title={Capacity of a constrained urban airspace: Influencing factors, analytical modelling and simulations},
  author={Aarts, Michiel JM and Ellerbroek, Joost and Knoop, Victor L},
  journal={Transportation Research Part C: Emerging Technologies},
  volume={152},
  pages={104173},
  year={2023},
  publisher={Elsevier}
}

@article{wang2023slot,
  title={Slot allocation for a multiple-airport system considering airspace capacity and flying time uncertainty},
  author={Wang, Yanjun and Liu, Chang and Wang, Hai and Duong, Vu},
  journal={Transportation Research Part C: Emerging Technologies},
  volume={153},
  pages={104185},
  year={2023},
  publisher={Elsevier}
}

@article{paul2025data,
  title={Data-driven optimization for drone delivery service planning with online demand},
  author={Paul, Aditya and Levin, Michael W and Waller, S Travis and Rey, David},
  journal={Transportation Research Part E: Logistics and Transportation Review},
  volume={198},
  pages={104095},
  year={2025},
  publisher={Elsevier}
}

@article{kishore2026scenario,
  title={Scenario-Based Platoon Lane Network Design},
  author={Kishore Bhoopalam, Anirudh and Agatz, Niels and Savelsbergh, Martin},
  journal={Networks},
  year={2026},
  publisher={Wiley Online Library}
}

@article{ramirez2023benders,
  title={Benders adaptive-cuts method for two-stage stochastic programs},
  author={Ram{\'\i}rez-Pico, Cristian and Ljubi{\'c}, Ivana and Moreno, Eduardo},
  journal={Transportation Science},
  volume={57},
  number={5},
  pages={1252--1275},
  year={2023},
  publisher={INFORMS}
}

@article{rebennack2016combining,
  title={Combining sampling-based and scenario-based nested Benders decomposition methods: application to stochastic dual dynamic programming},
  author={Rebennack, Steffen},
  journal={Mathematical Programming},
  volume={156},
  number={1},
  pages={343--389},
  year={2016},
  publisher={Springer}
}

@article{satici2026branch,
  title={A Branch-and-Benders Cut Algorithm for a Stochastic Service Network Design with Crowdsourced Capacity},
  author={Satici, Ozgur and Dayarian, Iman},
  journal={Transportation Science},
  year={2026},
  publisher={INFORMS}
}

@article{yu2021time,
  title={A time-consistent Benders decomposition method for multistage distributionally robust stochastic optimization with a scenario tree structure},
  author={Yu, Haodong and Sun, Jie and Wang, Yanjun},
  journal={Computational Optimization and Applications},
  volume={79},
  number={1},
  pages={67--99},
  year={2021},
  publisher={Springer}
}

@article{you2013multicut,
  title={Multicut Benders decomposition algorithm for process supply chain planning under uncertainty},
  author={You, Fengqi and Grossmann, Ignacio E},
  journal={Annals of Operations Research},
  volume={210},
  number={1},
  pages={191--211},
  year={2013},
  publisher={Springer}
}

@article{rajan2022routing,
  title={Routing problem for unmanned aerial vehicle patrolling missions-a progressive hedging algorithm},
  author={Rajan, Sudarshan and Sundar, Kaarthik and Gautam, Natarajan},
  journal={Computers \& Operations Research},
  volume={142},
  pages={105702},
  year={2022},
  publisher={Elsevier}
}

@article{liu2023multi,
  title={Multi-period stochastic programming for relief delivery considering evolving transportation network and temporary facility relocation/closure},
  author={Liu, Kanglin and Yang, Liu and Zhao, Yejia and Zhang, Zhi-Hai},
  journal={Transportation research part e: logistics and transportation review},
  volume={180},
  pages={103357},
  year={2023},
  publisher={Elsevier}
}

@article{lin2021branch,
  title={Branch-and-cut approach based on generalized benders decomposition for facility location with limited choice rule},
  author={Lin, Yun Hui and Tian, Qingyun},
  journal={European Journal of Operational Research},
  volume={293},
  number={1},
  pages={109--119},
  year={2021},
  publisher={Elsevier}
}

@article{papadakos2008practical,
  title={Practical enhancements to the Magnanti--Wong method},
  author={Papadakos, Nikolaos},
  journal={Operations Research Letters}, volume={36}, number={4}, pages={444--449},
  year={2008}, publisher={Elsevier}
}

@article{crainic2000service,
  author  = {Crainic, Teodor Gabriel},
  title   = {Service network design in freight transportation},
  journal = {European Journal of Operational Research},
  volume  = {122},
  number  = {2},
  pages   = {272--288},
  year    = {2000},
  doi     = {10.1016/S0377-2217(99)00233-7}
}

@article{bai2014stochastic,
  author  = {Bai, Ruibin and Wallace, Stein W. and Li, Jingpeng and Chong, Alain Yee-Loong},
  title   = {Stochastic service network design with rerouting},
  journal = {Transportation Research Part B: Methodological},
  volume  = {60},
  pages   = {50--65},
  year    = {2014},
  doi     = {10.1016/j.trb.2013.11.001}
}

@article{wang2019stochastic,
  author  = {Wang, Xin and Crainic, Teodor Gabriel and Wallace, Stein W.},
  title   = {Stochastic network design for planning scheduled transportation services: The value of deterministic solutions},
  journal = {INFORMS Journal on Computing},
  volume  = {31},
  number  = {1},
  pages   = {153--170},
  year    = {2019},
  doi     = {10.1287/ijoc.2018.0819}
}

@article{hewitt2022scheduled,
  author  = {Hewitt, Mike},
  title   = {The flexible scheduled service network design problem},
  journal = {Transportation Science},
  volume  = {56},
  number  = {4},
  pages   = {1000--1021},
  year    = {2022},
  doi     = {10.1287/trsc.2022.1129}
}

@article{crainic2014progressive,
  author  = {Crainic, Teodor Gabriel and Hewitt, Mike and Rei, Walter},
  title   = {Scenario grouping in a progressive hedging-based meta-heuristic for stochastic network design},
  journal = {Computers \& Operations Research},
  volume  = {43},
  pages   = {90--99},
  year    = {2014},
  doi     = {10.1016/j.cor.2013.08.020}
}

@article{moradi2025systematic,
  title={A systematic review of sustainable ground-based last-mile delivery of parcels: Insights from operations research},
  author={Moradi, Nima and Mafakheri, Fereshteh and Wang, Chun},
  journal={Vehicles},
  volume={7},
  number={4},
  pages={121},
  year={2025},
  publisher={MDPI}
}

@article{ford1956maximal,
  author  = {Ford, L. R. and Fulkerson, D. R.},
  title   = {Maximal flow through a network},
  journal = {Canadian Journal of Mathematics},
  volume  = {8},
  pages   = {399--404},
  year    = {1956},
  doi     = {10.4153/CJM-1956-045-5}
}

@article{nawaz2026towards,
  title={Towards regulating human oversight: challenges for EU drone law},
  author={Nawaz, Samar Abbas},
  journal={Information \& Communications Technology Law},
  volume={35},
  number={2},
  pages={234--250},
  year={2026},
  publisher={Taylor \& Francis}
}

\appendix

\section{Proofs}\label{app:proofs}

\REFTHR{\begin{proof}[Proof of Proposition~\ref{prop:equivalence}]
Fix $\omega$ and $\boldsymbol{x}$. Since every arc of the time-expanded network strictly increases the time index (flight arcs by $\tau_\ell\ge 1$ periods), the network is a finite directed acyclic graph (DAG); in particular it contains no directed cycle. We construct $\Psi$ and a left inverse $\Phi$ explicitly and verify well-definedness, feasibility, and objective preservation.

\textit{Step 1 (construction of $\Psi$: paths expand into arc flows).}
Let $\boldsymbol{y}\in\mathcal{F}_P(\boldsymbol{x};\omega)$. For each request $k$ with $\sum_{p\in\mathcal{P}_k} y_{kp}^\omega=1$, let $p$ denote its selected path; set $f_{k,\ell,t_0}^\omega=1$ exactly on the flight-arc departures used by $p$, set $z_k^\omega=1$, and set $h_{kt}^\omega=1$ at the arrival time embedded in $p$; set all variables of unassigned requests to zero. Conditions (i)--(iii) of Definition~1 imply that this $(\boldsymbol{f},\boldsymbol{z},\boldsymbol{h})$ satisfies \eqref{eq:arc_source_l}--\eqref{eq:arc_sink_l}. Because $\boldsymbol f$ is constructed to carry exactly the occupancy of the selected paths, the identity \eqref{eq:occupancy_identity} below applies directly to $\boldsymbol f$ itself, so $\boldsymbol{y}$ satisfying \eqref{eq:cap} implies $\boldsymbol f$ satisfies \eqref{eq:arc_cap_l}; the objective is preserved by construction. Hence $\Psi(\boldsymbol y):=(\boldsymbol f,\boldsymbol z,\boldsymbol h)\in\mathcal F_A(\boldsymbol x;\omega)$. Distinct $\boldsymbol{y},\boldsymbol{y}'\in\mathcal{F}_P(\boldsymbol{x};\omega)$ differ in the path selected for at least one request, hence activate at least one different flight-arc departure, so $\Psi$ is injective.

\textit{Step 2 (construction of $\Phi$: arc flows decompose into paths).}
Conversely, let $(\boldsymbol{f},\boldsymbol{z},\boldsymbol{h})\in\mathcal{F}_A(\boldsymbol{x};\omega)$ and fix a request $k\in K^\omega$. Constraints \eqref{eq:arc_source_l}--\eqref{eq:arc_sink_l} pin down the divergence of $(f_{k,\ell,t}^\omega)_{\ell,t}$ at $(o_k,t_k)$ and at every candidate destination node $(d_k,t)$, but --- unlike every other node of the time-expanded network --- do not restrict it at $(o_k,t)$ for $t\neq t_k$; the only requirement there is the capacity constraint~\eqref{eq:arc_cap_l}.
If $z_k^\omega=0$, then \eqref{eq:arc_arrive_once_l} forces $h_{kt}^\omega=0$ for all $t$, so $(f_{k,\cdot,\cdot}^\omega)$ has zero divergence at $(o_k,t_k)$ and at every destination copy; by \eqref{eq:arc_flow_l} it also has zero divergence at every other node except possibly the time-copies of $o_k$ with $t\neq t_k$, so $(f_{k,\cdot,\cdot}^\omega)$ decomposes, by the flow-decomposition theorem \citep{ford1956maximal}, into unit cycle flows confined to $\{(o_k,t):t\neq t_k\}$; a DAG admits no nonzero circulation, so in fact $f_{k,\ell,t}^\omega=0$ for all $(\ell,t)$ --- a rejected request consumes no corridor--time capacity.
If $z_k^\omega=1$, then \eqref{eq:arc_arrive_once_l} activates exactly one destination copy $(d_k,t_k^*)$, so $(f_{k,\cdot,\cdot}^\omega)$ has divergence $+1$ at $(o_k,t_k)$, divergence $-1$ at $(d_k,t_k^*)$, divergence $0$ at every node other than the time-copies of $o_k$, and unrestricted (but capacity-bounded) divergence at $(o_k,t)$ for $t\neq t_k$. By the flow-decomposition theorem, $(f_{k,\cdot,\cdot}^\omega)$ decomposes into unit path flows and unit cycle flows; every cycle must be confined to nodes of zero divergence, hence to $\{(o_k,t):t\neq t_k\}$, which a DAG cannot support, so there are no cycle components. Exactly one path component carries the unit divergence from $(o_k,t_k)$ to $(d_k,t_k^*)$; call it $p_k$. Any remaining path components necessarily begin and end within $\{(o_k,t):t\neq t_k\}$ (the only nodes with divergence not already accounted for by $p_k$), since $p_k$ already saturates the divergence at $(o_k,t_k)$ and $(d_k,t_k^*)$ and every other node is balanced. We define $\Phi$ by keeping only $p_k$: set $f_{k,\ell,t_0}^\omega=1$ exactly on the flight-arc departures used by $p_k$ and discard any remaining components. This discarding changes neither $z_k^\omega$ nor $h_{kt}^\omega$ (which the remaining components, having no endpoint at $(o_k,t_k)$ or a destination copy, do not touch) and can only relax the capacity constraint~\eqref{eq:arc_cap_l}. Conditions (i)--(iii) of Definition~1 hold for $p_k$ by construction, so $p_k\in\mathcal{P}_k$.
Define $\Phi(\boldsymbol{f},\boldsymbol{z},\boldsymbol{h})=\boldsymbol{y}$ by $y_{kp_k}^\omega=1$ for every accepted request $k$ and $y_{kp}^\omega=0$ otherwise; note $\Phi$ need not be injective, since distinct $(\boldsymbol f,\boldsymbol z,\boldsymbol h)$ differing only in the discarded components above share the same image. Constraint \eqref{eq:link_eq} holds because each request selects at most one path. For the capacity constraints, the definition \eqref{eq:a_indicator} of the path occupancy yields, for every corridor--time pair $(\ell,t)$, the identity
\begin{align}
\sum_{k\in K^\omega}\sum_{p\in\mathcal{P}_k} a_{kp}^{\ell t}\, y_{kp}^{\omega}
\;=\;\sum_{k\in K^\omega}\sum_{t_0\in\mathcal{T}} b_{\ell,t_0}^{\ell t}\, f_{k,\ell,t_0}^{\omega},
\label{eq:occupancy_identity}
\end{align}
whenever $\boldsymbol f$ carries exactly the occupancy of the paths $\{p_k\}_{k}$ selected by $\boldsymbol y$: on the left-hand side only the selected paths $p_k$ contribute, on the right-hand side only the departures used by those paths carry flow, and rejected requests contribute zero on both sides. In Step~2, $\boldsymbol f$ may additionally carry the discarded components identified above; since these are nonnegative, the true occupancy of $\boldsymbol f$ on the right-hand side of \eqref{eq:arc_cap_l} is at least the right-hand side of \eqref{eq:occupancy_identity}, so $\boldsymbol{f}$ satisfying \eqref{eq:arc_cap_l} implies $\boldsymbol y$ satisfies \eqref{eq:cap}. Finally, $\sum_{k\in K^\omega} r_k z_k^\omega=\sum_{k\in K^\omega}\sum_{p\in\mathcal{P}_k} r_k y_{kp}^\omega$, so $\Phi$ maps $\mathcal{F}_A(\boldsymbol{x};\omega)$ into $\mathcal{F}_P(\boldsymbol{x};\omega)$ and preserves the objective.

\textit{Step 3 ($\Phi$ is a left inverse of $\Psi$).}
Fix $\boldsymbol{y}\in\mathcal{F}_P(\boldsymbol{x};\omega)$ and let $(\boldsymbol f,\boldsymbol z,\boldsymbol h)=\Psi(\boldsymbol y)$. By construction, $(f_{k,\cdot,\cdot}^\omega)$ has no divergence at any time-copy of $o_k$ other than $(o_k,t_k)$, so the decomposition in Step~2 has no components to discard, and the single path component recovered there is exactly the path selected by $\boldsymbol y$. Hence $\Phi(\Psi(\boldsymbol y))=\boldsymbol y$, i.e., $\Phi\circ\Psi=\mathrm{id}_{\mathcal{F}_P(\boldsymbol{x};\omega)}$. Since $\Psi$ preserves the objective on all of $\mathcal F_P(\boldsymbol x;\omega)$ (Step~1) and $\Phi$ preserves the objective on all of $\mathcal F_A(\boldsymbol x;\omega)$ (Step~2), taking maxima gives $Q(\boldsymbol{x};\omega)\ge G(\boldsymbol{x};\omega)$ --- apply $\Psi$ to an optimal element of $\mathcal F_P(\boldsymbol x;\omega)$ --- and $Q(\boldsymbol{x};\omega)\le G(\boldsymbol{x};\omega)$ --- apply $\Phi$ to an optimal element of $\mathcal F_A(\boldsymbol x;\omega)$ --- so $Q(\boldsymbol{x};\omega)=G(\boldsymbol{x};\omega)$. Moreover, for any optimal $\boldsymbol{y}^*\in\mathcal{F}_P(\boldsymbol{x};\omega)$, $\Psi(\boldsymbol{y}^*)$ is an optimal element of $\mathcal{F}_A(\boldsymbol{x};\omega)$ with the same objective value, establishing the claimed correspondence.
$\hfill\blacksquare$
\end{proof}}

\REVII{\begin{proof}[Proof of Corollary~\ref{cor:lp_equivalence}]
The fractional analogues of Steps~1--2 above hold by flow decomposition on the acyclic time-expanded network \citep{ford1956maximal}: every feasible fractional arc flow of request $k$ with acceptance level $z_k^\omega\in[0,1]$ decomposes into a nonnegative combination of unit path flows of total weight $z_k^\omega$, and every fractional path solution aggregates arc-wise into a feasible arc flow. Both maps preserve the objective and, through the identity \eqref{eq:occupancy_identity}, the capacity consumption at every $(\ell,t)$. Hence the two LP value functions coincide at \textit{every} right-hand side $\boldsymbol{x}\ge 0$, i.e., they are identical as functions of $\boldsymbol{x}$. Since both value functions are concave and piecewise linear in $\boldsymbol{x}$, and the optimal capacity duals of a right-hand-side-parameterized LP are exactly the supergradients of its value function at $\boldsymbol{x}$, the two dual sets coincide at every $\boldsymbol{x}$ as well.
$\hfill\blacksquare$
\end{proof}}

\begin{proof}[Validity of the optimality cut~\eqref{eq:opt_cut}]
The validity of~\eqref{eq:opt_cut} follows from linear programming duality.
For a fixed scenario $\omega$, introduce dual variables
$\alpha_k^\omega\ge 0$ for the assignment constraints~\eqref{eq:sub_assign},
$\lambda_{\ell t}^\omega\ge 0$ for the capacity constraints~\eqref{eq:sub_cap},
and $\gamma_{kp}^\omega\ge 0$ for the upper-bound constraints in~\eqref{eq:sub_box}.
The Lagrangian of the subproblem is
\begin{align}
\mathcal{L}(y^\omega;\alpha^\omega,\lambda^\omega,\gamma^\omega)=&\sum_{k\in K^\omega}\sum_{p\in\mathcal{P}_k} r_k y_{kp}^\omega+ \sum_{k\in K^\omega} \alpha_k^\omega\Big(1-\sum_{p\in\mathcal{P}_k} y_{kp}^\omega\Big) \notag\\
&\quad + \sum_{(\ell,t)}\lambda_{\ell t}^\omega\Big(x_{\ell t}-\sum_{k\in K^\omega}\sum_{p\in\mathcal{P}_k} a_{kp}^{\ell t}y_{kp}^\omega\Big)+ \sum_{k\in K^\omega}\sum_{p\in\mathcal{P}_k}\gamma_{kp}^\omega(1-y_{kp}^\omega) \notag\\
=&\sum_{k\in K^\omega}\alpha_k^\omega+ \sum_{(\ell,t)}\lambda_{\ell t}^\omega x_{\ell t}+ \sum_{k\in K^\omega}\sum_{p\in\mathcal{P}_k}\gamma_{kp}^\omega \notag\\
&\quad + \sum_{k\in K^\omega}\sum_{p\in\mathcal{P}_k}\Big(r_k-\alpha_k^\omega-\sum_{(\ell,t)}\lambda_{\ell t}^\omega a_{kp}^{\ell t}-\gamma_{kp}^\omega\Big)y_{kp}^\omega .
\label{eq:lagrangian}
\end{align}

Because $y_{kp}^\omega\ge 0$, the supremum over $y^\omega$ is finite only if
\begin{align}
r_k-\alpha_k^\omega-\sum_{(\ell,t)}\lambda_{\ell t}^\omega a_{kp}^{\ell t}-\gamma_{kp}^\omega \le 0,
\qquad \forall k\in K^\omega,\; p\in\mathcal{P}_k.
\label{eq:dual_feas}
\end{align}
Under these conditions, the dual function is
\begin{align}
g(\alpha^\omega,\lambda^\omega,\gamma^\omega;x)=\sum_{k\in K^\omega}\alpha_k^\omega+ \sum_{k\in K^\omega}\sum_{p\in\mathcal{P}_k}\gamma_{kp}^\omega+ \sum_{(\ell,t)}\lambda_{\ell t}^\omega x_{\ell t}.
\label{eq:dual_func}
\end{align}

Weak duality implies that for any dual-feasible multipliers,
\begin{align}
G_{\mathrm{LP}}(x;\omega)\leq \sum_{(\ell,t)}\lambda_{\ell t}^\omega x_{\ell t}+ \Big(\sum_{k\in K^\omega}\alpha_k^\omega+ \sum_{k\in K^\omega}\sum_{p\in\mathcal{P}_k}\gamma_{kp}^\omega\Big).
\label{eq:weak_duality_bound}
\end{align}
Strong duality ensures equality at $x^n$, which yields the supporting hyperplane~\eqref{eq:opt_cut}.
$\hfill \blacksquare$
\end{proof}

\REVII{\textbf{Validity of the \DBTP certificate.} The multi-fidelity certificate reported in Section~\ref{sec:exp_heuristics} is obtained by the same repair as above, applied at termination over the full path set rather than mid-iteration over the truncated pool $\widehat{\mathcal P}_k$. Fix $\boldsymbol x$ and $\omega$, and let $(\alpha^\omega,\lambda^\omega)$ be the optimal dual multipliers of the restricted master LP that \DBTP solves over $\widehat{\mathcal P}_k$. One additional pricing pass over the full path set $\mathcal P_k$ identifies the maximum reduced-cost violation $\bar r_k(\lambda^\omega)=r_k-\alpha_k^\omega-\min_{p\in\mathcal P_k}\sum_{(\ell,t)\in p}\lambda_{\ell t}^\omega$; setting $\alpha_k^{\omega,\mathrm{rep}}=\alpha_k^\omega+\max\{0,\bar r_k(\lambda^\omega)\}$ restores the dual-feasibility condition~\eqref{eq:dual_feas} for every $p\in\mathcal P_k$, exactly as in the dual-feasible cut repair of Section~\ref{sec:stabilization}, so $(\lambda^\omega,\alpha^{\omega,\mathrm{rep}})$ is dual-feasible for the full LP-relaxed subproblem~\eqref{eq:sub_obj}--\eqref{eq:sub_box}. Weak duality~\eqref{eq:weak_duality_bound} then gives
\begin{align}
G_{\mathrm{LP}}(\boldsymbol x;\omega)\le \sum_{(\ell,t)}\lambda_{\ell t}^\omega x_{\ell t}+\sum_{k\in K^\omega}\alpha_k^{\omega,\mathrm{rep}}.
\end{align}
Aggregating this bound over scenarios and subtracting the reservation cost yields a valid upper bound $\mathrm{UB}$ on the true (non-truncated) optimal objective at $\boldsymbol x$, which is exactly the certificate underlying the certificate gap $(\mathrm{UB}-z_{\mathrm{DBTP}})/\mathrm{UB}$ reported in Section~\ref{sec:exp_heuristics}.}

\REVII{\begin{remark}[Cut validity of the stabilized \DBCG procedure]
\label{prop:convergence}
Consider Algorithm~\ref{alg:multicut_benders} applied to the LP-relaxed
two-stage formulation augmented with the stabilization mechanisms described in Section~\ref{sec:stabilization}. Suppose that, for every scenario $\omega$ and queried capacity vector $\boldsymbol{x}$, the subproblem solver returns the exact LP recourse value $G_{\mathrm{LP}}(\boldsymbol{x};\omega)$ together with a dual-feasible multiplier vector --- a condition satisfied by the arc-flow LP solver of Section~\ref{sec:subproblem_arcflow} and by the column-generation solver of Section~\ref{sec:subproblem_cg} equipped with the dual-feasibility repair of Section~\ref{sec:stabilization}. Then, by \eqref{eq:weak_duality_bound}, every cut added to the master is a valid supporting hyperplane of the (concave) recourse value function, regardless of whether it is generated at the current master solution $x^n$ or at the Papadakos auxiliary point $x^0$, which is likewise master-feasible. We do not claim, and do not need, that this iterative process converges in finitely many steps: cut validity alone is what makes $\mathrm{gap}_{\mathrm{Bend}}$ a meaningful bound on the true optimality gap at whatever iteration the algorithm is stopped.
$\hfill\blacksquare$
\end{remark}}

\REVII{\section{Path-Enumeration Details}\label{app:pathgen}}

\REFONE{The deviation step at the spur node $v_i$ has two effects that together ensure each newly enumerated candidate is genuinely distinct from previously accepted paths. First, removing the nodes $\{v_0,\ldots,v_{i-1}\}$ from the deviation graph keeps the spur portion simple and disjoint from the root interior. Second, and crucially, removing for every previously accepted path that shares the same root $R_i$ the outgoing edge $(v_i,v_{i+1}^{(q')})$ used by that path at $v_i$ forces the spur to leave $v_i$ on a strictly different edge; without this edge removal the spur from $v_i$ would simply recompute the suffix of $p^{(q-1)}$, returning a duplicate of an already-known path. The corrected pseudocode in Algorithm~\ref{alg:p_short} now lists both operations explicitly. Our implementation uses \texttt{networkx.shortest\_simple\_paths} (which internally realizes Yen's algorithm with the same edge-removal logic) wrapped by the time-feasibility filter $s_k+\sum_{\ell\in c}\tau_\ell\le\bar s_k$ and the dominance pruning above; we have re-verified on the full benchmark set that for every request $k$ the realized pool $\mathcal P_k$ contains $P$ \textit{distinct} simple paths whenever the network admits that many, so the computational results in Sections~\ref{sec:numerical study}--\ref{sec:case study} are unaffected by the earlier typographical omission.}

\begin{algorithm}[t]
\caption{\REFONE{$P$-Shortest Feasible Simple Paths (Yen's deviation scheme)}}
\label{alg:p_short}
\KwIn{$G=(\mathcal{V},\mathcal L)$; request $k$ with $(o_k,d_k,s_k,\bar s_k)$; limit $P$}
\KwOut{$\mathcal P_k$}

\textbf{Notation:} $s_k$ denotes the departure time of request $k$; a path is simple if it contains no repeated nodes;
concatenation joins two path segments end-to-end;
the travel time of a path $p$ is $\sum_{\ell\in p}\tau_\ell$.\;

Compute the shortest feasible simple path $p^{(0)}$ from $o_k$ to $d_k$\;
Set $\mathcal P_k\leftarrow\{p^{(0)}\}$ and a candidate heap $\mathcal C\leftarrow\emptyset$\;

\For{$q=1$ \KwTo $P-1$}{
    Let the previously accepted path be $p^{(q-1)}=(v_0,\ldots,v_M)$\;
    \For(\REFONE{\tcp*[h]{spur node loop}}){$i=0$ \KwTo $M-1$}{
        \REFONE{Let the \textit{root path} be $R_i=(v_0,\ldots,v_i)$ and the \textit{spur node} be $v_i$\;
        Construct the deviation graph $G_i$ from $G$ by:
        (a) removing every node in $\{v_0,\ldots,v_{i-1}\}$ (and all incident edges) to keep the spur path simple and disjoint from the root interior; and
        (b) for every previously accepted path $p^{(q')} \in\mathcal P_k\cup\{p^{(q-1)}\}$ that shares the same root $R_i$, removing the outgoing edge $(v_i, v_{i+1}^{(q')})$ used by $p^{(q')}$ at the spur node\;
        Compute a shortest simple path $\tilde p$ from $v_i$ to $d_k$ in $G_i$\;}
        \If{$\tilde p$ exists}{
            Let $c$ be the concatenation of $R_i$ and $\tilde p$\;
            \If{$s_k+\sum_{\ell\in c}\tau_\ell \le \bar s_k$ \REFONE{and $c\notin \mathcal P_k\cup\mathcal C$}}{
                insert $c$ into $\mathcal C$ ordered by total travel time
            }
        }
    }
    \If{$\mathcal C=\emptyset$}{break}
    Extract shortest path in $\mathcal C$ and append to $\mathcal P_k$
}
\Return{$\mathcal P_k$}
\end{algorithm}

\rev{\subsection{Direct Arc-Flow LP Subproblem}
\label{sec:subproblem_arcflow}

The column-generation procedure in Section~\ref{sec:subproblem_cg} represents each near-parallel route as a distinct column and therefore \textit{pays for} the spatial degeneracy of the recourse LP (Section~\ref{sec:stabilization}): its restricted-master cost inflates with the persistent column pool. The arc-flow formulation of Section~\ref{subsec:arc_based} represents all parallel routes \textit{compactly} on shared arcs and is therefore structurally \textit{immune} to spatial degeneracy, at the cost of a larger single LP. This opposite structural response --- and not merely a computational trade-off --- makes the arc-flow subproblem a natural alternative to column generation in regimes where the column pool would otherwise dominate the computational cost. We refer to this second solver as the \textit{arc-flow subproblem}.

\textbf{Formulation.}
Drop the integrality requirements in~\eqref{eq:arc_stage2_obj_l}--\eqref{eq:arc_sink_l} and introduce continuous variables $f_{k,\ell,t_0}^\omega\ge 0$, $z_k^\omega\in[0,1]$.  The resulting LP is
\begin{align}
G_{\mathrm{LP}}^{\mathrm{AF}}(\boldsymbol{x}^n;\omega)=\max\quad
&\sum_{k\in K^\omega} r_k\,z_k^\omega
\label{eq:arcflow_obj}\\
\text{s.t.}\quad
&\sum_{k\in K^\omega}\sum_{t_0\in\mathcal{T}}
  b_{\ell,t_0}^{\ell t}\,f_{k,\ell,t_0}^\omega
\le x^n_{\ell t},
\quad\forall(\ell,t),
\label{eq:arcflow_cap}\\
&\text{flow-conservation constraints~\eqref{eq:arc_source_l}--\eqref{eq:arc_sink_l}
  with continuous variables,}
\notag\\
&f_{k,\ell,t_0}^\omega\ge 0,\quad 0\le z_k^\omega\le 1.
\label{eq:arcflow_bounds}
\end{align}
The LP has $O\!\bigl(|K^\omega|\cdot(|\mathcal{L}|\cdot|\mathcal{T}|+|\mathcal{V}|\cdot|\mathcal{T}|)\bigr)$ variables and
$O\!\bigl(|\mathcal{L}|\cdot|\mathcal{T}|+|K^\omega|\cdot|\mathcal{V}|\cdot|\mathcal{T}|\bigr)$ constraints — polynomial and fixed for a given instance, with no path enumeration required.

\REVII{\textbf{Equivalence.}
By Corollary~\ref{cor:lp_equivalence}, $G_{\mathrm{LP}}^{\mathrm{AF}}(\boldsymbol{x}^n;\omega)=G_{\mathrm{LP}}(\boldsymbol{x}^n;\omega)$ for every $\boldsymbol{x}^n$, and the dual $\lambda_{\ell t}^{\omega*}$ of constraint~\eqref{eq:arcflow_cap} lies in the same optimal dual set as the capacity dual of the path-packing LP~\eqref{eq:sub_cap}. The Benders cut
\begin{align}
\theta_\omega\;\le\;\bigl(\lambda^{\omega*}\bigr)^\top x
+G_{\mathrm{LP}}^{\mathrm{AF}}(\boldsymbol{x}^n;\omega)
-\bigl(\lambda^{\omega*}\bigr)^\top\boldsymbol{x}^n
\label{eq:arcflow_cut}
\end{align}
built from the arc-flow solution is not necessarily identical to the cut~\eqref{eq:opt_cut} built from the path-packing solution, but both are supporting hyperplanes of the same value function $G_{\mathrm{LP}}(\cdot;\omega)=G_{\mathrm{LP}}^{\mathrm{AF}}(\cdot;\omega)$ and are equally valid for the same master problem~\eqref{eq:master_obj}--\eqref{eq:master_cuts}.}

\textbf{Computational properties.}
Compared with the column-generation approach, the arc-flow subproblem offers three advantages.
First, the LP is solved in one pass, yielding \textit{exact} dual multipliers without a pricing loop.
Second, there is no column pool to maintain: the LP size is constant across Benders iterations, preventing the unbounded pool growth that afflicts CG-based implementations on instances with many near-optimal paths (i.e., where the LP dual has many columns with near-zero reduced cost).
Third, warm-starting across Benders iterations is straightforward: only the right-hand sides $x^n_{\ell t}$ of constraints~\eqref{eq:arcflow_cap} change between iterations.
\REVII{Because the path-packing recourse is strongly degenerate (both primal and dual; Section~\ref{sec:stabilization}), the arc-flow LP is best solved by the \textit{barrier method without crossover} rather than by simplex: stopping at an interior point avoids the many degenerate pivots that make simplex slow on this structure, and returns \textit{central} (analytic-centre-like) dual multipliers that yield deeper, better-conditioned Benders cuts than the vertex duals of simplex. The trade-off is the loss of basis warm-starting across Benders iterations.
The results reported in Section~\ref{sec:numerical study} use column generation throughout; the arc-flow subproblem is presented here as a theoretically-equivalent, structurally motivated alternative for settings where the column pool would otherwise dominate the computational cost.}}

\section{Additional Computational Diagnostics}
\label{app:diagnostics}

This appendix reports five supplementary items that support the implementation choices and modeling assumptions used in the main text. Specifically, we give (i) the full experimental setup and instance generator, (ii) the full discussion of connectivity-induced path degeneracy summarized in Section~\ref{sec:path_degeneracy}, (iii) the empirical second-stage integrality gap, and (iv) the comparison of \DBCG against \ABCG under a fixed computational budget.

\subsection{Experimental Setup and Instance Generation}
\label{app:setup}

\textbf{Experimental setup.} All algorithms are implemented in Python~3.11 with \texttt{NumPy}~1.26, \texttt{NetworkX}~3.2, and \texttt{Numba}~0.59 (used to JIT-compile the pricing sweep in Algorithm~\ref{alg:cg_subproblem}). All linear programs (the Benders master, the path-packing restricted master, and the post-optimization integer recovery in Section~\ref{sec:integrality}) are solved with Gurobi~11.0 via \texttt{gurobipy}; the master and the path-packing restricted master use the dual simplex method with default presolve and basis warm-starting between successive solves. Experiments are run on a workstation with a 16-core Intel Xeon W-2245 CPU at 3.90\,GHz, 128\,GB DDR4 ECC RAM, and Ubuntu~22.04 LTS; Gurobi is configured to use a single thread per LP solve to make per-instance wall-clock times comparable across runs. The master-LP tolerance is $\varepsilon=10^{-4}$ and the column-generation tolerance is $\varepsilon_{\mathrm{cg}}=10^{-6}$; each reported entry is the mean over independent demand seeds (up to five). The performance tables report Benders iteration counts and total wall-clock time.

\textbf{Instance generation.} Unless otherwise stated, experiments are conducted on randomly generated corridor networks. Nodes are placed on a jittered Cartesian lattice and connected to their four nearest neighbours by bidirectional corridors; the spatial perturbation creates heterogeneous corridor lengths and travel times. For each scenario $\omega$, requests are generated by sampling origins, destinations, and release times uniformly, while request revenues are set proportional to shortest-path distances with small random perturbations to avoid ties. Corridor capacities are generated according to the expected shortest-path demand, and reservation costs are uniform, $c_{\ell t}=c_0$. Three parameters characterize the congestion regime: route overlap $\eta_{\mathrm{ov}}$, capacity tightness $\tau_{\mathrm{cap}}$, and release-window ratio $W$; the baseline setting used throughout, unless a specific experiment states otherwise, is $\eta_{\mathrm{ov}}=0.70$, $\tau_{\mathrm{cap}}=0.55$, $W=0.25$, and detour ratio $\rho=2.0$. Scaling experiments increase network size, horizon, demand volume, scenario count, and path-pool size jointly according to Table~\ref{tab:exp1_combined}, whereas structural experiments vary only $\eta_{\mathrm{ov}}$.

\subsection{Path Degeneracy: Further Discussion and Distinction from Classical VRP Degeneracy}
\label{app:path_degeneracy}

Section~\ref{sec:path_degeneracy} summarizes how network connectivity governs path degeneracy in the CG recourse. Two points deserve fuller discussion.

\paragraph{Connectivity versus network size.} The degeneracy phenomenon is primarily determined by network connectivity rather than network size. In the regular lattice instances used throughout this paper, larger instances are also more densely connected, so connectivity increases together with $|\mathcal V|$. For more general corridor networks, however, two instances with the same number of nodes may exhibit substantially different CG performance because their routing alternatives differ.

\paragraph{Distinction from classical VRP degeneracy.} The degeneracy considered here differs fundamentally from that encountered in classical vehicle-routing formulations. In VRP models, computational difficulty is typically caused by combinatorial symmetry, such as interchangeable vehicles or equivalent customer visit orders, which is commonly addressed through symmetry-breaking constraints or branch-and-bound strategies. In contrast, the present formulation contains neither vehicle-index symmetry nor route-order symmetry. Instead, degeneracy arises because the corridor network admits many near-equivalent feasible paths with similar reduced costs. The resulting difficulty is therefore geometric rather than combinatorial, manifesting itself as LP degeneracy in the restricted master instead of search tree symmetry. Consequently, improving computational performance depends primarily on controlling path generation rather than on symmetry-breaking techniques.

\subsection{Second-stage Integrality Check}
\label{sec:app_integrality}

\REFTHR{Although the second-stage recourse is solved through its LP relaxation within the Benders framework, Proposition~\ref{prop:equivalence} does not theoretically guarantee a zero integrality gap. We therefore verify this property numerically by comparing the optimal LP and MIP recourse values under the rounded first-stage solution $\boldsymbol{x}^*=\lceil\boldsymbol{x}_{\mathrm{LP}}^*\rceil$. Across all tested instances,
$$\max_{(|\mathcal V|,\mathrm{seed},\omega)}
\left|G_{\mathrm{LP}}(\boldsymbol{x}^*;\omega)-
G_{\mathrm{MIP}}(\boldsymbol{x}^*;\omega)\right|\le 10^{-6},
$$
confirming that the LP relaxation and the corresponding MIP attain identical objective values up to numerical precision.}

\begin{table}[!htbp]
\centering
\caption{Natural integrality of the second-stage LP relaxation before tie-breaking.}
\label{tab:integrality}
\begin{tabular}{cccc}
\toprule
$|\mathcal V|$ & Integral / Total & Percentage (\%) & Max.\ objective difference \\
\midrule
8  & 22 / 25 & 88.0 & $\le 10^{-6}$ \\
10 & 21 / 25 & 84.0 & $\le 10^{-6}$ \\
12 & 21 / 25 & 84.0 & $\le 10^{-6}$ \\
\midrule
Total & 64 / 75 & 85.3 & $\le 10^{-6}$ \\
\bottomrule
\end{tabular}
\end{table}

\REFTHR{Table~\ref{tab:integrality} summarizes the natural integrality of the second-stage LP relaxation before applying the tie-breaking objective. Overall, 64 of the 75 tested instances (85.3\%) produce naturally integral LP solutions, with similar proportions across all network sizes (84\%--88\%). The remaining 11 instances exhibit fractional LP solutions due to degeneracy rather than numerical error. In particular, ten instances have a maximum deviation from integrality of exactly 0.5, corresponding to an equal split between two equivalent optimal routes, while one instance has a maximum deviation of 0.333, corresponding to three equivalent routes. These observations explain the role of the optional tie-breaking objective in selecting one representative optimal solution from multiple equivalent LP optima, while confirming that the zero objective gap reported above remains unaffected.}

\REFTHR{A structural reason plausibly underlies this near-universal integrality. Fix a single corridor $\ell$: for every path $p$ that uses $\ell$, the occupancy indicator $a_{kp}^{\ell t}$ equals $1$ on a contiguous block of time periods ${t_0,\dots,t_0+\tau_\ell-1}$ and $0$ elsewhere, so the capacity-constraint submatrix restricted to corridor $\ell$'s rows is an interval matrix and therefore totally unimodular. The full path-packing constraint matrix stacks these interval blocks across every corridor a path may traverse, and it is exactly this cross-corridor coupling that can break total unimodularity in general. This suggests, without proving, that the fractional LP optima observed above arise from degenerate ties between routes sharing several corridors rather than from a generic breakdown of integrality, consistent with the ten- and three-way ties identified.}

\subsection{\DBCG versus \ABCG under a Fixed Computational Budget}
\label{sec:exp_singlecut}

To isolate the effect of scenario-wise cut disaggregation, we compare
\DBCG and \ABCG under the same computational
budget. \REFTWO{Both variants solve the same LP-relaxed two-stage problem and therefore
share the same optimal solution upon convergence; differences in Table~\ref{tab:exp1_singlecut}
reflect only the convergence speed under a finite budget --- the run terminates at whichever of (i) optimality gap $\le 10^{-3}$ (relative), (ii) iteration count $\ge 600$, or (iii) wall-clock $\ge 7200$\,s is reached first.}

Table~\ref{tab:exp1_singlecut} compares the proposed \DBCG with the \ABCG under two-hour computational budget. Across all tested instances, \DBCG consistently outperforms \ABCG, yielding objective improvements of 6.48\%, 12.55\%, and 15.45\% for networks with 20, 25, and 30 nodes, respectively. Moreover, \DBCG achieves substantially tighter optimality gaps, reducing the remaining Benders gap from 13.52\% to 3.85\% for 20-node instances, from 21.76\% to 4.19\% for 25-node instances, and from 37.30\% to 22.10\% for 30-node instances. These results demonstrate that disaggregating cuts by scenario significantly strengthens the master problem approximation, leading to both higher-quality solutions and faster convergence within the same computational budget.

\begin{table}[!htbp]
\centering
\caption{\DBCG versus \ABCG under a fixed computational budget.}
\label{tab:exp1_singlecut}
\begin{tabular}{cccccc}
\toprule
&
&
&
&
\multicolumn{2}{c}{Gap (\%)}
\\
\cmidrule(lr){5-6}
$|\mathcal V|$
& \DBCG
& \ABCG
& Loss (\%)
& \DBCG
& \ABCG
\\
\midrule
20 & 130.40 & 121.95 & 6.48 & 3.85 & 13.52 \\
25 & 149.44 & 130.68 & 12.55 & 4.19 & 21.76 \\
30 & 147.48 & 124.69 & 15.45 & 22.10 & 37.30 \\
\bottomrule
\end{tabular}
\end{table}

\end{document}